\documentclass[12pt]{article}
\usepackage{amsmath}
\usepackage{amssymb}
\usepackage[a4paper]{geometry}
\usepackage{amsthm}
\usepackage{subcaption}
\usepackage{cite}
\usepackage{graphicx}
\usepackage{tikz}
\usepackage{tikz-cd}
\usepackage{mathrsfs}
\usepackage{bm}
\usepackage{slashed}
\usepackage[hidelinks]{hyperref}
\usepackage{listings}
\usepackage{orcidlink}

\hypersetup{
    colorlinks = true,
    linkcolor = {black},
    citecolor = {blue},
    anchorcolor = {green},
    citecolor = {red},
    filecolor = {cyan},
    menucolor = {red},
    runcolor = {cyan},
    urlcolor = {magenta}
}
\newcommand{\twidth}{6in}
\textwidth=\twidth
\newcommand{\EE}{{\mathcal{E}}}

\newcommand{\SU}{{\mathrm{SU}}}
\newcommand{\SO}{{\mathrm{SO}}}
\newcommand{\U}{{\mathrm{U}}}
\newcommand{\ii}{{\mathrm{i}}}
\newcommand{\e}{{\mathrm{e}}}

\newcommand{\R}{{\mathbb{R}}}

\newcommand{\Z}{{\mathbb{Z}}}

\newcommand{\C}{{\mathbb{C}}}
\renewcommand{\H}{{\mathbb{H}}}

\newcommand{\bu}{{\bm{1}}}

\newcommand{\beq}{\begin{equation}}
\newcommand{\eeq}{\end{equation}}
\newcommand{\bea}{\begin{eqnarray}}
\newcommand{\eea}{\end{eqnarray}}
\newcommand{\bal}{\begin{align}}
\newcommand{\eal}{\end{align}}
\newcommand{\bml}{\begin{multline}}
\newcommand{\eml}{\end{multline}}

\newcommand{\wh}{\widehat}
\newcommand{\lto}{\longrightarrow}

\def \d{\mathrm{d}}
\newcommand{\ol}{\overline}

\newcommand{\tr}{{\operatorname{tr}}}

\newcommand{\sn}{{\rm sn}\, }
\newcommand{\cn}{{\rm cn}\, }
\newcommand{\dn}{{\rm dn}\, }
\newcommand{\scn}{{\rm sc}\, }
\newcommand{\nc}{{\rm nc}\, }

\newcommand{\dc}{{\rm dc}\, }
\newcommand{\nd}{{\rm nd}\, }
\newcommand{\id}{{\rm Id}}

\newcommand{\diag}{{\rm diag}}

\makeatletter
\newcommand\xleftrightarrow[2][]{%
  \ext@arrow 9999{\longleftrightarrowfill@}{#1}{#2}}
\newcommand\longleftrightarrowfill@{%
  \arrowfill@\leftarrow\relbar\rightarrow}
\makeatother

\newcommand{\bigslant}[2]{{\raisebox{.2em}{$#1$}\left/\raisebox{-.2em}{$#2$}\right.}}

\newtheorem{theorem}{Theorem}

\newtheorem{lemma}[theorem]{Lemma}

\begin{document}
\renewcommand*{\thefootnote}{\fnsymbol{footnote}}
\begin{titlepage}
\begin{center}
{\Large
A Nahm transform for rotating calorons \par}


\vspace{10mm}
{\Large Josh Cork\footnote{Email address: \texttt{josh.cork@leicester.ac.uk}}$^{\rm 1}$\,\orcidlink{0000-0002-9006-0108} and Derek Harland\footnote{Email address: \texttt{d.g.harland@leeds.ac.uk}}$^{\rm 2}$\,\orcidlink{0000-0002-4110-9673}}\\[10mm]

\noindent {\em ${}^{\rm 1}$ School of Computing and Mathematical Sciences\\
University of Leicester, University Road, Leicester, United Kingdom
}\\
\smallskip
\noindent {\em ${}^{\rm 2}$ School of Mathematics, University of Leeds\\ Woodhouse Lane, Leeds, United Kingdom }\\[10mm]
{\Large \today}
\vspace{15mm}
\begin{abstract}
Rotating calorons were introduced in the context of rotating quark-gluon plasmas.  They are anti-self-dual gauge fields on $\R^4$ that are invariant under a glide rotation.  We formulate a Nahm transform which identifies rotating calorons with solutions of a delayed-differential equation.  Using this transform, we prove existence of an eight-parameter family of charge 1 rotating calorons with nontrivial holonomy and rotational angle $\pi$, which we construct and visualise using a numerical implementation of the Nahm transform.
\end{abstract}
\end{center}
\end{titlepage}
\renewcommand*{\thefootnote}{\arabic{footnote}}
\setcounter{footnote}{0}
\hypersetup{
    linkcolor = {blue}
}
\numberwithin{equation}{section}
\section{Introduction}
The production of rotating quark-gluon plasmas in noncentral heavy ion collisions \cite{STARcollab2017global} has motivated several recent theoretical studies of QCD at non-zero angular momentum.  The influence of angular velocity on the QCD phase transition has been investigated using lattice QCD \cite{rotwisted1,lattice1,lattice2}, the Nambu--Jona--Lasinio model \cite{NJL}, holography \cite{holography}, and several other models and approaches \cite{hadronresonancegas,rotwisted2,rotwisted3,quarkmeson,yukawa,bagmodel,linearsigma1,linearsigma2,strongcouplingexpansion}.  Different theoretical models make different predictions for the effect of rotation on the QCD phase transition, and currently there is no clear consensus as to whether the transition temperature increases or decreases with increasing angular momentum.

Calorons \cite{calorons1} and their constituent monopoles (also known as dyons) \cite{calorons2,calorons3} play an important role in the QCD phase transition.  So the study of calorons for rotating QCD is well-motivated.  These calorons must satisfy the rotationally twisted (or ``rotwisted'') boundary condition \cite{rotwisted1,rotwisted2,rotwisted3}, and are known as rotating calorons \cite{chernodub2022instantons}.  A rotating caloron is an anti-self-dual gauge field on Euclidean $\mathbb{R}^4$ that is invariant under a combined translation of Euclidean time and rotation of space (a ``glide rotation'').

Explicit examples of rotating calorons with trivial asymptotic holonomy have been constructed in \cite{chernodub2022instantons}.  But for confinement, non-trivial asymptotic holonomy is essential, because this is what leads to constituent monopoles.  A rotating caloron with non-trivial asymptotic holonomy, analogous to the Kraan--van Baal--Lee--Lu caloron \cite{kraanVanBaal1998monopoleconsts,LeeLu1998}, has been constructed in \cite{rocalorons2,rocalorons3}, but only within an approximation that assumes well-separated constituent monopoles.  Other examples of rotating calorons with non-trivial asymptotic holonomy can be found in \cite{cork2018symmrot}, but the assumptions made in \cite{cork2018symmrot} are rather restrictive.  In particular, the charge 1 calorons in \cite{cork2018symmrot} are just the Kraan--van Baal--Lee--Lu calorons \cite{kraanVanBaal1998monopoleconsts,LeeLu1998}, which trivially satisfy the rotwisted boundary condition (being invariant under \emph{both} time translations and axial rotations).



This article addresses the problem of constructing rotating calorons with non-trivial asymptotic holonomy.  It does so within the framework of the Atiyah--Drinfeld--Hitchin--Manin--Nahm (ADHMN) transform \cite{ADHM1978construction,Nahm1983,Nahm1983trsfm,nahm1984MonoCal}, which for gauge group $\SU(2)$ is known to be a complete construction of all zero-temperature instantons and finite-temperature calorons \cite{ADHM1978construction,CharbonneauHurtubise2007nahm.tfm}.  The transform associates instantons with ADHM matrices solving an algebraic equation, and it associates calorons with matrix-valued solutions of an integrable system of ordinary differential equations, the Nahm equations.  The ADHMN approach is very powerful and can be applied in many other situations; see \cite{jardim2004,charbonneau2019spatially} for reviews and \cite{KronheimerNakajima1990yang,cherkis2011instantonsGrav,CherkisLarrain-HubachStern2024instantons-m-TN-II,charbonneauNagy2026construction} for some recent developments. 

Starting from the ADHM description of instantons, we develop in Section \ref{sec.Rotwisted-cal-NT} a Nahm transform for rotating calorons following an approach inspired by \cite{KraanVanBaal1998}.  Our transform has the interesting feature that it leads to delayed-differential equations, rather than differential equations.  We call these equations the delayed Nahm equations because they reduce to the Nahm equations when the rotation angle is zero.  Like the Nahm equations, the delayed Nahm equations are integrable.

In Section \ref{sec:rat-rot-cal} we show that, for rational rotation angles, our transform is related to the standard Nahm transform via a direct image construction.  This result has important consequences: when combined with work of Charbonneau--Hurtubise \cite{CharbonneauHurtubise2007nahm.tfm}, it means that our Nahm transform is complete for rational rotation angles.  In other words, the transform produces all rotating calorons of a given charge and rational rotation angle.

Our main result is a construction of rotating calorons with charge 1 and rotation angle $\pi$.  By solving the delayed Nahm equations explicitly and solving the boundary conditions implicitly, we prove in Section \ref{sec:solutions} existence of an eight-parameter family of rotating calorons.

We also construct the associated rotating calorons on a lattice using a numerical implementation of the Nahm transform.  In Section \ref{sec:pictures} we exhibit the first pictures of rotating calorons with non-trivial asymptotic holonomy.  It would be very interesting to compare these with lattice studies of rotating QCD.

We leave for future work the question of whether our transform is complete for non-rational rotation angles.  In fact, for non-rational rotation angles it seems to be very difficult to establish completeness using standard methods (we propose some alternative ways to construct rotating calorons that avoid this difficulty in the conclusion).  We note that the delayed Nahm equations are integrable, so add to the growing list of known integrable delay-differential equations \cite{delay1,delay2,delay3,delay4,delay5,delay6,delay7,delay8,delay9,delay10,delay11,delay12,delay13,delay14,delay15,delay16,delay17}.  We defer for the future a more thorough investigation of the delayed Nahm equations from this perspective (see Section \ref{sec:concl} for more comments on this).

\section{Rotating calorons and Nahm transform}\label{sec.Rotwisted-cal-NT}
A gauge field $A_\mu$ is called anti-self-dual if its field strength tensor $F_{\mu\nu}=\partial_\mu A_\nu-\partial_\nu A_\mu+[A_\mu,A_\nu]$ satisfies $\frac12\varepsilon_{\mu\nu\kappa\lambda}F_{\kappa\lambda}=-F_{\mu\nu}$.  
We are interested in gauge fields invariant under glide rotations, i.e.\ under the action of $\Z$ generated by
\begin{align}\label{twisted-gf}
(t,\vec{x})\mapsto(t+\tfrac{2\pi}{T},R\vec{x}),
\end{align}
where $T>0$ and $R$ is a rotation by some fixed angle $\frac{2\pi\theta}{T}$ around a unit vector $\vec{n}$.  This requirement is known as the \emph{rotwisted boundary condition} \cite{rotwisted1,rotwisted2,rotwisted3}, and we will refer to anti-self-dual gauge fields invariant under \eqref{twisted-gf} as \emph{rotating calorons}.  Without loss of generality we assume from now on that $\vec{n}=\vec{e}_3$.

The aim of this section is to adapt the ADHM construction \cite{ADHM1978construction} of instantons on $\R^4$ to obtain an analogous construction of rotating calorons with gauge group $\SU(2)$. To simplify things we shall here use the scale-invariance of the anti-self-duality equations to fix $T=2\pi$, and this can be rescaled back later.
\subsection{Nahm data for rotating calorons}
We start by reviewing the ADHM construction of instantons (see \cite{MantonSutcliffe2004} for details). The ADHM data for a charge $k$ instanton consists of a symmetric $k\times k$ matrix $M$ of quaternions and a row vector $L$ of quaternions. These matrices are required to satisfy the constraint that the matrix $\Delta(x)^\dagger\Delta(x)$ is real and invertible for all $x\in\H$, where
\begin{align}
    \Delta(x)=\begin{pmatrix}
        L\\M-x\id
    \end{pmatrix}.
\end{align}
It is sufficient to impose this at the origin $x=0$. To obtain an instanton we determine for each $x\in\H$ a $k+1$ column vector $V(x)$ solving
\begin{align}\label{kernel-ADHM}
    \Delta(x)^\dagger V(x)=0,\quad V(x)^\dagger V(x)=\bu,
\end{align}
and set the gauge field as that of the induced connection
\begin{align}
    A(x)=V(x)^\dagger\d V(x).\label{induced-connection}
\end{align}
Translations and rotations of $\R^4$ act on ADHM data as follows:
\begin{equation}
M\mapsto gMh^{-1}+y\id,\quad L\mapsto Lh^{-1}.
\end{equation}
Here $(g,h)\in \SU(2)\times \SU(2)\cong {\rm Spin}(4)$ are a pair of unit quaternions describing a rotation of $\R^4$, and $y$ is a quaternion describing a translation of $\R^4$.  Global gauge transformations of the instanton correspond to transformations
\begin{equation}
M\mapsto M,\quad L\mapsto pL
\end{equation}
of the ADHM data by a unit quaternion $p\in \SU(2)$.

In order to adapt this to obtain rotating calorons, we mimic the approach of Kraan--van Baal \cite{KraanVanBaal1998} in the construction of ordinary calorons.  The basic idea is to think of a rotating caloron as an infinite-charge instanton invariant under the action \eqref{twisted-gf}.  The corresponding ADHM data $\Delta(x)$ will be an infinite matrix which acts on an infinite vector $V$.  This is reinterpreted via the Fourier transform as a differential operator acting on a circle-valued function.  The ADHM constraints translate into delayed-differential equations for the coefficients of this operator, which in the case of non-rotating calorons are the Nahm equations.

To implement this construction we suppose that the infinite matrices $L,M$ can be decomposed into blocks $L_a,M_{ab}$ of size $1\times k$ and $k\times k$, indexed by integers $a,b$.  We represent the imaginary quaternions using the Pauli matrices $-\ii\sigma^a$.  The constraint
\begin{align}\label{twisted-caloron-ADHM}
\begin{aligned}
    L_{a+1}&=\exp(\ii\tfrac{\phi}{2}\sigma^3)L_a\exp(-\ii\tfrac{\theta}{2}\sigma^3),\\
    M_{a+1,b+1}&=\exp(\ii\tfrac{\theta}{2}\sigma^3)M_{a,b}\exp(-\ii\tfrac{\theta}{2}\sigma^3)+\delta_{a,b}\bu\,\id_k,
\end{aligned}
\end{align}
then ensures that the corresponding instanton is invariant under \eqref{twisted-gf}, up to a global gauge transformation $\exp(\ii\tfrac{\phi}{2}\sigma^3)$.  The parameter $\phi$ will ultimately be related to the asymptotic holonomy (or Polyakov loop) of the rotating caloron.  The solution $V(x)$ of \eqref{kernel-ADHM} will have a similar decomposition into $k\times 1$ blocks $V_a(x)$ satisfying
\begin{align}
    V_a(t+1,R\vec{x})=\exp(\ii\tfrac{\theta}{2}\sigma^3)V_{a-1}(t,\vec{x})\exp(-\ii\tfrac{\phi}{2}\sigma^3).
\end{align}
The gauge field \eqref{induced-connection} will therefore satisfy
\begin{align}
    A(t+1,R(\vec{e}_3,\theta)\vec{x})=\exp(\ii\tfrac{\phi}{2}\sigma^3)A(t,\vec{x})\exp(-\ii\tfrac{\phi}{2}\sigma^3).
\end{align}

Now we reinterpret these operators using the Fourier transform.  We interpret the $k\times 1$ blocks $V_a$ as Fourier coefficients of a function $v(s)=\sum_a V_ae^{\ii as}$.
The constraint \eqref{twisted-caloron-ADHM} is solved in the basis of Pauli matrices by
\begin{align}
    \begin{aligned}
        L_a&=\frac{1}{\sqrt{2\pi}}\begin{pmatrix}
            U_+{\rm e}^{\ii a(\tfrac{\phi}{2}-\tfrac{\theta}{2})}&W_+{\rm e}^{\ii a(\tfrac{\phi}{2}+\tfrac{\theta}{2})}\\
            U_-{\rm e}^{-\ii a(\tfrac{\phi}{2}+\tfrac{\theta}{2})}&W_-{\rm e}^{-\ii a(\tfrac{\phi}{2}-\tfrac{\theta}{2})}
        \end{pmatrix},\\
    M_{a,b}&=\begin{pmatrix}
        \ii A_{a-b}^\dagger+a\delta_{ab}\id_k&\ii B_{b-a}{\rm e}^{\ii b\theta}\\
        \ii B_{a-b}^\dagger{\rm e}^{-\ii a\theta}&-\ii A_{b-a}+a\delta_{ab}\id_k
    \end{pmatrix},
    \end{aligned}\label{twisted-ADHM-sol}
\end{align}
where $U_\pm,W_\pm$ are complex $k$ row vectors, and $A_a,B_a$ are $k\times k$ complex matrices. The factor of $\sqrt{2\pi}$ has been introduced for convenience. The matrices $A_a,B_a$ may be thought as components in a Fourier expansion of functions $\alpha(s)=\sum_aA_a{\rm e}^{\ii as}$ and $\beta(s)=\sum_aB_a{\rm e}^{\ii as}$ acting on $v(s)$ by multiplication.

The remaining parts of \eqref{twisted-ADHM-sol} can be identified with the operator $S_\pm$ which acts on $v(s)$ via $S_\pm v(s)=v(s\pm\theta)$, and the differential operator $\frac{\d}{\d s}$ which acts as $\frac{\d}{\d s}v(s)=v\,'(s)$. These act as linear operators in the Fourier basis as
\begin{align}
\begin{aligned}
    S_\pm v(s)&=\sum_a{\rm e}^{\pm\ii a\theta}V_a{\rm e}^{\ii as},\\
    \frac{\d}{\d s}v(s)&=\sum_a\ii aV_a{\rm e}^{\ii as}.
\end{aligned}
\end{align}
Therefore, from the solution \eqref{twisted-ADHM-sol}, acting on $V$ with $M$ and $M^\dagger$ as matrices in the Fourier basis is equivalent to acting on $v(s)$ with the operators
\begin{align}
    M=\ii\begin{pmatrix}
        \frac{\d}{\d s}+\alpha^\dagger&\beta S_+\\
         S_-\beta^\dagger&\frac{\d}{\d s}-\alpha
    \end{pmatrix},\quad M^\dagger=-\ii\begin{pmatrix}
        \frac{\d}{\d s}+\alpha&\beta S_+\\
        S_-\beta^\dagger&\frac{\d}{\d s}-\alpha^\dagger
    \end{pmatrix}.\label{M-ops}
\end{align}
Likewise
\begin{multline}
    L_a^\dagger L_b=\frac{1}{\sqrt{2\pi}}\begin{pmatrix}
        U_+^\dagger\\\e^{-\ii a\theta}W_+^\dagger
    \end{pmatrix}\begin{pmatrix}
        U_+&W_+\e^{\ii b\theta}
    \end{pmatrix}\e^{\tfrac{\ii}{2}(a-b)(\theta-\phi)}\\+\frac{1}{\sqrt{2\pi}}\begin{pmatrix}
        U_-^\dagger\\\e^{-\ii a\theta}W_-^\dagger
    \end{pmatrix}\begin{pmatrix}
        U_-&W_-\e^{\ii b\theta}
    \end{pmatrix}\e^{\tfrac{\ii}{2}(a-b)(\theta+\phi)},
\end{multline}
so the operator $L^\dagger L$ acts on $v$ as multiplication with
\begin{multline}\label{LL-ops}
    L^\dagger L=\begin{pmatrix}
        U_+^\dagger\\
        S_-W_+^\dagger
    \end{pmatrix}\delta\left(s+\tfrac{\theta}{2}-\tfrac{\phi}{2}\right)\begin{pmatrix}
        U_+&W_+S_+
    \end{pmatrix}\\
    +\begin{pmatrix}
        U_-^\dagger\\
        S_-W_-^\dagger
    \end{pmatrix}\delta\left(s+\tfrac{\theta}{2}+\tfrac{\phi}{2}\right)\begin{pmatrix}
        U_-&W_-S_+
    \end{pmatrix}.
\end{multline}
Recall that the operator $\Delta(0)^\dagger\Delta(0)=L^\dagger L+M^\dagger M$ must be real, i.e., proportional to the real quaternion $\id_2$. Using \eqref{M-ops} and \eqref{LL-ops} this holds if and only if
\begin{align}\label{delay-eqs}
    0&=\frac{\d(\alpha^\dagger+\alpha)}{\d s}(s)+[\alpha,\alpha^\dagger](s)+\beta(s)\beta^\dagger(s)-\beta^\dagger(s-\theta)\beta(s-\theta)\notag\\
    &\qquad\quad+U_+^\dagger U_+\delta(s+\tfrac{\theta}{2}-\tfrac{\phi}{2})+U_-^\dagger U_-\delta(s+\tfrac{\theta}{2}+\tfrac{\phi}{2})\notag\\
    &\qquad\quad-W_+^\dagger W_+\delta(s-\tfrac{\theta}{2}-\tfrac{\phi}{2})-W_-^\dagger W_-\delta(s-\tfrac{\theta}{2}+\tfrac{\phi}{2}),\\
    0&=\frac{\d\beta}{\d s}(s)+\alpha(s)\beta(s)-\beta(s)\alpha(s+\theta)\notag\\
    &\qquad\quad+U_+^\dagger W_+\delta(s+\tfrac{\theta}{2}-\tfrac{\phi}{2})+U_-^\dagger W_-\delta(s+\tfrac{\theta}{2}+\tfrac{\phi}{2}).\notag
\end{align}
When $\theta=0$ the equations \eqref{delay-eqs} are the Nahm equations on the circle $\bigslant{\R}{2\pi\Z}$ written in Donaldson's complex form \cite{donaldson1984}: writing $\alpha=T_0-\ii T_3$ and $\beta=-T_2-\ii T_1$ with $T_\mu$ anti-hermitian recovers the Nahm equation $\frac{\d T_i}{ds}+[T_0,T_i]+\frac12\varepsilon_{ijk}[T_j,T_k]=0$, decorated with $\delta$ function sources.

For $\theta\neq0$ the equations \eqref{delay-eqs} are delayed-differential equations which generalise the Nahm equations on the circle. Note that we can easily reinstate the arbitrary translation in $t$ of \eqref{twisted-gf} by allowing the Nahm data solving \eqref{delay-eqs} to be defined instead on the circle $\bigslant{\R}{T\Z}$.  This is important for physical applications where $T>0$ corresponds to the temperature.

When $\theta=0$ the $\delta$-function sources in \eqref{delay-eqs} are precisely those that correspond to $\SU(2)$ calorons with net magnetic charge 0.  So solutions of \eqref{delay-eqs} with $\theta\neq0$ should correspond to rotated calorons with net magnetic charge 0 (and we confirm this for rational $\theta/T$ in the next section).  The Nahm transform is also able to produce non-rotating calorons with non-zero net magnetic charge, and it would be interesting to try to extend this to rotating calorons with non-zero net magnetic charge.

The delayed equations \eqref{delay-eqs} are invariant under the action of unitary gauge transformations, which locally act as functions $g:\bigslant{\R}{T\Z}\lto\U(k)$ via
\begin{align}\label{gauge transformations}
    \begin{aligned}
        \alpha(s)&\mapsto g(s)^{-1}\alpha(s)g(s)+g(s)^{-1}\frac{\d g}{\d s}(s),\\
        \beta(s)&\mapsto g(s)^{-1}\beta(s)g(s+\theta),\\
        U_\pm&\mapsto U_\pm g(-\tfrac{\theta}{2}\pm\tfrac{\phi}{2}),\\
        W_\pm&\mapsto W_\pm g(\tfrac{\theta}{2}\pm\tfrac{\phi}{2}).
    \end{aligned}
\end{align}
We shall consider the moduli space of solutions of \eqref{delay-eqs} modulo this action as the space of Nahm data for rotating calorons.
\subsection{Nahm transform}

We have seen that the infinite ADHM data for a rotating caloron invariant under \eqref{twisted-gf} can be repackaged as a solution of the delayed Nahm equations \eqref{delay-eqs}.  We now explain how the rotating caloron can be recovered directly from this Nahm data.

Let $x=(t,x_1,x_2,x_3)\in\R^4$ and define $z=t-\ii x_3$, $w=-x_2-\ii x_1$.  Given a function $\psi:\R\lto\C^{2k}$ satisfying $\psi(s+T)=\psi(s)$ and complex numbers $\zeta_+,\zeta_-\in\C$, we define
\begin{multline}\label{Nahm operator}
    \Delta_x^\dagger(\psi,\zeta_-,\zeta_+)=-\frac{\d\psi}{\d s}+\begin{pmatrix}
    -\alpha+\ii {z}I_,&-\beta S_++\ii{w}I_k\\
    -S_-\beta^\dagger-\ii\ol{w}I_k&\alpha^\dagger+\ii\ol{z}I_k
\end{pmatrix}\psi\\+\begin{pmatrix}U_-^\dagger\delta(s+\tfrac{\theta}{2}+\tfrac{\phi}{2})\\
W_-^\dagger\delta(s-\tfrac{\theta}{2}+\tfrac{\phi}{2})  
\end{pmatrix}\zeta_-+\begin{pmatrix}U_+^\dagger\delta(s+\tfrac{\theta}{2}-\tfrac{\phi}{2})\\
W_+^\dagger\delta(s-\tfrac{\theta}{2}-\tfrac{\phi}{2})  
\end{pmatrix}\zeta_+.
\end{multline}
This is again a $T$-periodic function $:\R\lto\C^{2k}$.  Here $I_k$ is the identity matrix and $\delta$ is the $T$-periodic version of the Dirac delta function (meaning that $\delta(s)=0\iff s\neq0\mod T$, and $\int_{-T/2}^{T/2}\delta (s)\,\d s=1$).  We recall that $S_\pm\psi(s)=\psi(s\pm\theta)$.

Suppose that $\Delta_x^\dagger(\psi,\zeta_-,\zeta_+)=0$ has two linearly independent solutions $(\psi^1,\zeta_+^1,\zeta_-^1)$ and $(\psi^2,\zeta_+^2,\zeta_-^2)$.  These can be chosen to be orthonormal, in the sense that:
\begin{equation}\label{NT normalisation}
\ol{\zeta}_+^a\zeta_+^b+\ol{\zeta}_-^a\zeta^b_-+\int_0^T\psi^a(s)^\dagger\psi^b(s)\,\d s=\delta^{ab}.
\end{equation}
The rotating caloron can be recovered by setting
\begin{equation}\label{NT gauge field}
A_\mu^{ab}=\ol{\zeta}_+^a\frac{\partial\zeta_+^b}{\partial x^\mu}+\ol{\zeta}_-^a\frac{\partial\zeta^b_-}{\partial x^\mu}+\int_0^T\psi^a(s)^\dagger\frac{\partial\psi^b}{\partial x^\mu}(s)\,\d s=\delta^{ab}.
\end{equation}
In the special case $\theta\to0$ this is just the usual Nahm transform for $\SU(2)$ calorons \cite{CharbonneauHurtubise2007nahm.tfm}.  This formula for $A$ can be derived by rewriting \eqref{induced-connection} through the Fourier transform.  Rather than spelling out the details of that calculation, we will give an alternative justification, showing that the operator \eqref{Nahm operator} is gauge covariant and satisfies a reality condition.

We do so using the language of vector bundles.  The Nahm data act naturally on the trivial rank $k$ vector bundle $E$ over $S^1_T:=\bigslant{\R}{T\Z}$.  Gauge transformations correspond to automorphisms of this vector bundle.  From the explicit form \eqref{gauge transformations} of the gauge transformations, we see that
\begin{align}\label{Nahm-data-as-homomorphisms}
\begin{aligned}
    \alpha(s)&:E_s\lto E_s,&
    \beta(s)&:E_{s+\theta}\lto E_s,\\
    U_+&: E_{-\tfrac{\theta}{2}+\tfrac{\phi}{2}}\lto\C,&U_-&: E_{-\tfrac{\theta}{2}-\tfrac{\phi}{2}}\lto\C,\\
    W_+&:E_{\tfrac{\theta}{2}+\tfrac{\phi}{2}}\lto\C,&W_-&:E_{\tfrac{\theta}{2}-\tfrac{\phi}{2}}\lto\C.
   \end{aligned}
\end{align}
We now see that $\Delta_x^\dagger$ is an operator from $\mathcal{W}$ to $\mathcal{V}$, where
\begin{align*}
\mathcal{V} &= \Gamma(E\oplus E), \\
\mathcal{W} &= \mathcal{V}\oplus \C\oplus \C,
\end{align*}
The $L^2$ adjoint of this operator is 
\begin{equation}\Delta_x:\mathcal{V}\lto\mathcal{W},\quad \Delta_x=\begin{pmatrix} \slashed{D}_x & Q_+ & Q_- \end{pmatrix},
\end{equation}
where
\begin{align*}
    \slashed{D}_x\psi&=
        \frac{\d\psi}{\d s}+\begin{pmatrix}
    -\alpha^\dagger-\ii \ol{z}&-\beta S_++\ii{w}\\
    -S_-\beta^\dagger-\ii \ol{w}&\alpha-\ii{z}
\end{pmatrix}\psi,\\
Q_\pm(\psi)&= \begin{pmatrix}
    U_\pm\psi(-\tfrac{\theta}{2}\pm\tfrac{\phi}{2})&W_\pm \psi(\tfrac{\theta}{2}\pm\tfrac{\phi}{2})
\end{pmatrix}.
\end{align*}
The operator $\slashed{D}_x$ produces a well-defined section of $E\oplus E$ because each of its components produces a section.  For example, if $v(s)$ is a section of $E$ then $S_+v(s)=v(s+\theta)\in E_{s+\theta}$ and $\beta(s)S_+v(s)\in E_s$, so $\beta S_+v$ is a section of $E$.  Similarly, $Q_\pm$ are well-defined operators because (for example) $\psi(-\tfrac{\theta}{2}-\tfrac{\phi}{2})\in E_{-\tfrac{\theta}{2}-\tfrac{\phi}{2}}$ and $U_-:E_{-\tfrac{\theta}{2}-\tfrac{\phi}{2}}\lto \C$.  This means that the operator $\Delta_x$ is gauge covariant, and so is its adjoint $\Delta_x^\dagger$.

The standard framework of the Nahm transform \cite{MantonSutcliffe2004} shows that the gauge field defined in \eqref{NT gauge field} will be anti-self-dual if the operator $\Delta_x^\dagger\Delta_x$ is real for all $x\in\R^4$, in other words, it commutes with the $2\times 2$ Pauli matrices $\sigma^j$.  A straightforward but laborious calculation shows that this condition is equivalent to the delayed Nahm equations \eqref{delay-eqs}.

In order to guarantee every rotating caloron arises in this way we require an inverse Nahm transform that maps rotating calorons to Nahm data. Usually such a transform involves a family $D_s$ of Dirac operators acting on smooth spinor-valued functions over $\bigslant{\R^4}{\Z}$, where $\Z$ acts as in \eqref{twisted-gf}. In order for this to work it is important that these Dirac operators are Fredholm.  It seems difficult to prove that the operators are Fredholm if the angle of rotation in \eqref{twisted-gf} is an irrational multiple of $2\pi$.  But in the case of rational rotations the situation is much better.  We explain in the next section that the Nahm transform for rotating calorons with rational rotation angle is a special case of the Nahm transform for calorons, so should produce all rotating calorons.

\section{Rational rotations and calorons}\label{sec:rat-rot-cal}
\subsection{Nahm equations from delayed Nahm equations}
In this section and beyond we consider the case where  the rotation angle of \eqref{twisted-gf} is a rational multiple of $\pi$, i.e., where $\theta=\frac{m}{n}T$ with $1\leq m\leq n$ positive coprime integers. To avoid needing to consider several similar cases, we also assume $\tfrac{m-1}{m}\theta\leq\phi\leq\theta$.

Now, when the rotation angle is rational, invariance of the gauge field under \eqref{twisted-gf} implies also invariance under $(t,\vec{x})\mapsto(t+\tfrac{2\pi n}{T},\vec{x})$. Instantons with this symmetry are ordinary calorons with period $\tfrac{2\pi n}{T}$.  The topological charge of the $\frac{2\pi n}{T}$-periodic caloron is given by
\begin{equation}
    N=-\frac{1}{32\pi^2}\int_{\R^3}\int_0^{\frac{2\pi n}{T}}\tr (F_{\kappa\lambda}F_{\mu\nu})\epsilon^{\mu\nu\kappa\lambda}\,\d^4x.
\end{equation}
In the previous section we described ADHM data for a rotating caloron built out of $k\times k$ blocks.  The corresponding rotating caloron should be made up of $k$ instantons within the fundamental domain $0\leq t\leq \frac{2\pi}{T}$.  So we expect that
\begin{equation}
    k=-\frac{1}{32\pi^2}\int_{\R^3}\int_0^{\frac{2\pi}{T}}\tr (F_{\kappa\lambda}F_{\mu\nu})\epsilon^{\mu\nu\kappa\lambda}\,\d^4x.
\end{equation}
It then follows from glide rotation symmetry \eqref{twisted-gf} that $N=nk$.  So it can be expected that the delayed Nahm data for a charge $k$ rotating caloron with rational rotation angle is related to the Nahm data for a charge $nk$ caloron.

This expectation is correct, and the explicit map from delayed Nahm equations to Nahm equations is as follows.  Define
\begin{align}
    S:=\left(\begin{array}{c|c}
        0 & \id_{k(n-1)} \\\hline
        \id_{k} & 0
    \end{array}\right).
\end{align}
and let
\begin{align}\label{caloron-alpha-beta}
\begin{aligned}
    \wh{\alpha}(s)&=\diag\{\alpha(s+\tfrac{\theta}{m}),\alpha(s+\tfrac{2\theta}{m}),\dots,\alpha(s+\tfrac{n\theta}{m})\},\\
    \wh{\beta}(s)&=\diag\{\beta(s+\tfrac{\theta}{m}),\beta(s+\tfrac{2\theta}{m}),\dots,\beta(s+\tfrac{n\theta}{m})\}S^m.
\end{aligned}
\end{align}
Let
\begin{align}\label{IJ-UW}
    I_+=e_{n-m}\otimes U_-,\quad J_+=e_{n}\otimes W_-,\quad I_-=e_{n}\otimes U_+,\quad J_-=e_{m}\otimes W_+,
\end{align}
where $e_j\in\C^n$ denotes the canonical unit basis row vector with a $1$ in the $j$-th component and $0$ elsewhere.  Finally, let
\begin{align}\label{mu-phi-rel}
    \mu=\frac{\theta}{2}-\frac{\phi}{2}.
\end{align}
Our assumption that $\frac{m-1}{m}\theta\leq\phi\leq\theta$ means that $0\leq \mu \leq \frac{T}{2n}$.  
A direct calculation shows that the delayed Nahm equations \eqref{delay-eqs} are equivalent to
\begin{align}
        \frac{\d(\wh{\alpha}+\wh{\alpha}^\dagger)}{\d s}+[\wh{\alpha},\wh{\alpha}^\dagger]+[\wh{\beta},\wh{\beta}^\dagger]\qquad\qquad\qquad\qquad\qquad\qquad\qquad\qquad&\notag\\
        +\left(I_+^\dagger I_+-J_+^\dagger J_+\right)\delta(s-\mu)+\left(I_-^\dagger I_--J_-^\dagger J_-\right)\delta(s+\mu)&=0,\label{NE-cal1}\\
        \frac{\d \wh{\beta}}{\d s}+[\wh{\alpha},\wh{\beta}]+I_+^\dagger J_+\delta(s-\mu)+I_-^\dagger J_-\delta(s+\mu)&=0.\label{NE-cal2}
\end{align}
These are precisely the Nahm equations on the circle $S^1_{T/n}$ which describe $\SU(2)$ calorons invariant under $t\mapsto t+\frac{2\pi n}{T}$.  Calorons correspond to solutions for which $I_\pm$, $J_\pm$ are not all zero.  The asymptotic holonomy of the associated calorons satisfies
\begin{equation}
    \lim_{|\vec{x}|\to\infty}P\exp\left(-\int_0^{\frac{2\pi n}{T}}A_0(t,\vec{x})\,\d t\right)=\begin{pmatrix} e^{\frac{2\pi i \mu}{T}} & 0 \\ 0 &e^{-\frac{2\pi i \mu}{T}}\end{pmatrix}
\end{equation}
in a suitable gauge.

This relationship between the Nahm equations and the delayed Nahm equations has a natural interpretation in terms of vector bundles.  Recall that the delayed Nahm equations live on a rank $k$ vector bundle $E\to S^1_T$ (see \eqref{Nahm-data-as-homomorphisms}).  Let $V\to S^1_{{T}/{n}}$ be the direct image of this bundle under the projection $\pi:S^1_T=\bigslant{\R}{T\Z}\lto \bigslant{\R}{\frac{T}{n}\Z}=S^1_{{T}/{n}}$.  By definition, the fibres of $V$ are
\begin{align}\label{V-fibers}
V_s=\bigoplus_{j=1}^nK_{s,j},\quad K_{s,j}:=E_{s+\frac{j}{n}T}=E_{s+\frac{j}{m}\theta}.
\end{align}
Then from \eqref{Nahm-data-as-homomorphisms} we see that
\begin{equation}
\alpha(s+\tfrac{j}{m}\theta):K_{s,j}\lto K_{s,j}\text{ and }\beta(s+\tfrac{j}{m}\theta):K_{s,j+m}\lto K_{s,j}.
\end{equation}
Hence the matrices $\wh{\alpha}$, $\wh{\beta}$ in \eqref{caloron-alpha-beta} are well-defined maps from $V_s$ to $V_s$.  They are the direct images of $\alpha$ and $\beta$.  Similarly, from \eqref{Nahm-data-as-homomorphisms} we see that
\begin{align}
\begin{aligned}
    U_+&:K_{\mu,n}\lto\C,&U_-&: K_{-\mu,n-m}\lto\C,\\
    W_+&: K_{\mu,m}\lto\C,&W_-&:K_{-\mu,n}\lto\C.
\end{aligned}\label{UW-Ks}
\end{align}
So the matrices in \eqref{IJ-UW} represent the action of these maps on $V_{-\mu}$ and $V_\mu$.  Hence the Nahm data \eqref{caloron-alpha-beta}, \eqref{IJ-UW} is the direct image of the delayed Nahm data.

The Nahm data \eqref{caloron-alpha-beta} are periodic up to gauge transformation with the matrix $S\in \U(nk)$.  This is because $K_{s+\frac{T}{n},j}=K_{s,j+1}$ in \eqref{V-fibers}.  In practice, it is sometimes useful to make a gauge transformation so that they are strictly periodic.  To do so, we define 
\begin{align}
    \begin{aligned}
        \wh{\alpha}_p(s)&=g_p(s)^{-1}\diag\{\alpha(s+\tfrac{\theta}{m}),\alpha(s+\tfrac{2\theta}{m}),\dots,\alpha(s+\tfrac{n\theta}{m})\}g_p(s)+g_p(s)^{-1}\frac{\d g_p(s)}{\d s},\\
        \wh{\beta}_p(s)&=g_p(s)^{-1}\diag\{\beta(s+\tfrac{\theta}{m}),\beta(s+\tfrac{2\theta}{m}),\dots,\beta(s+\tfrac{n\theta}{m})\}S^mg_p(s),
    \end{aligned}\label{caloron_ND-ps}
\end{align}
where $g_1:[-\mu,\mu]\lto\U(nk)$ and $g_2:[\mu,\tfrac{\theta}{m}-\mu]\lto\U(nk)$ are such that $g_1(\pm\mu)=\id$, and $g_2(\mu)=\id$ and $g_2(\tfrac{\theta}{m}-\mu)=S$.  These are periodic in $s$ and gauge equivalent to \eqref{caloron-alpha-beta}.

As such $(\alpha,\beta,U_\pm,W_\pm)$ solve the delayed equations \eqref{delay-eqs} if and only if $(\wh{\alpha}_p,\wh{\beta}_p,I_\pm,J_\pm)$ defined by \eqref{caloron_ND-ps} and \eqref{IJ-UW} solve the Nahm equations
\begin{align}
\begin{aligned}
    \frac{\d(\wh{\alpha}_p+\wh{\alpha}_p^\dagger)}{\d s}+[\wh{\alpha}_p,\wh{\alpha}_p^\dagger]+[\wh{\beta}_p,\wh{\beta}_p^\dagger]&=0,\\
    \frac{\d\wh{\beta}_p}{\d s}+[\wh{\alpha}_p,\wh{\beta}_p]&=0,
\end{aligned}
\end{align}
and the matching conditions
\begin{align}
\begin{aligned}
    \wh{\gamma}_1(\mu)-\wh{\gamma}_2(\mu)&=I_+^\dagger I_+-J_+^\dagger J_+,&
    \wh{\gamma}_2(\tfrac{\theta}{m}-\mu)-\wh{\gamma}_1(-\mu)&=I_-^\dagger I_--J_-^\dagger J_+,\\
    \wh{\beta}_1(\mu)-\wh{\beta}_2(\mu)&=I_+^\dagger J_+,&\wh{\beta}_2(\tfrac{\theta}{m}-\mu)-\wh{\beta}_1(-\mu)&=I_-^\dagger J_-,
\end{aligned}
\end{align}
where $\wh{\gamma}_p=\wh{\alpha}_p+\wh{\alpha}_p^\dagger$, which are equivalent to \eqref{NE-cal1}--\eqref{NE-cal2}.
\subsection{Cyclic symmetry of the caloron Nahm data}\label{Sec-glide-symmetry-cal}
The $\frac{2\pi n}{T}$-periodic caloron obtained from a rotating caloron will be invariant under the action of a glide rotation.  This is reflected in the Nahm data \eqref{caloron-alpha-beta}, \eqref{IJ-UW}, as we now explain.  The translation $(t,\vec{x})\mapsto (t+\frac{2\pi}{T},\vec{x})$ acts on the caloron Nahm data as
\begin{equation}
(\wh{\alpha},\wh{\beta},I_\pm,J_\pm)\mapsto (\wh{\alpha}+\ii\tfrac{2\pi}{T}\id_{nk},\wh{\beta},I_\pm,J_\pm)
\end{equation}
The rotation $R$ through angle $\frac{2\pi\theta}{T}=\frac{2\pi m}{n}$ acts on the Nahm data as 
\begin{equation}
(\wh{\alpha},\wh{\beta},I_\pm,J_\pm)\mapsto (\wh{\alpha},\e^{-\ii\tfrac{2m\pi}{n}}\wh{\beta},\e^{\ii\tfrac{m\pi}{n}}I_\pm,\e^{-\ii\tfrac{m\pi}{n}}J_\pm)
\end{equation}
So we expect the caloron Nahm data to satisfy
\begin{align}
\begin{aligned}
    \wh{\alpha}(s)&=G(s)^{-1}\wh{\alpha}_p(s)G(s)+\ii\frac{2m\pi}{n\theta}+G(s)^{-1}\frac{\d G}{\d s}(s),\\
    \wh{\beta}(s)&=\e^{-\ii\tfrac{2m\pi}{n}}G(s)^{-1}\wh{\beta}_p(s)G(s),\\
    I_\pm&=\e^{\ii(\tfrac{m\pi}{n}\pm\omega)}I_\pm G(\pm\mu),\\
    J_\pm&=\e^{\ii(-\tfrac{m\pi}{n}\pm\omega)}J_\pm G(\pm\mu),
\end{aligned}\label{glide-symmetry}
\end{align}
for some gauge transformations $G(s)$ and $e^{i\omega}$.  This expectation is correct.  The gauge transformations are given by $\omega=\tfrac{m\pi}{n\theta}(\theta+2\mu)$ and
\begin{align}
    G(s)=\e^{-\ii\tfrac{2m\pi}{n\theta}s}\diag\{\e^{\ii\tfrac{2\pi}{n}},\e^{\ii\tfrac{4\pi}{n}},\dots,1\}\otimes \id_k.
\end{align}
This confirms that caloron Nahm data are invariant under the action of the cyclic group $C_n$.  

Caloron Nahm data invariant under cyclic group actions have been investigated elsewhere. The trivial asymptotic holonomy case was considered in \cite{NakamulaSawado2013cyclic} by generalising Sutcliffe's ansatz \cite{Sutcliffe1996cyclic} for cyclic monopoles. Non-trivial asymptotic holonomy charge $N$ $\SU(2)$ calorons invariant under an action of $C_{2N}$ were classified in \cite{cork2018symmrot}; this result was extended to charge $N$ $\SU(M)$ calorons invariant under an analogous action of $C_{NM}$ in \cite{cork2018thesis}. The cyclic group actions in \cite{cork2018symmrot,cork2018thesis} were generated by a combination of a glide rotation and the ``rotation map'', which changes the asymptotic holonomy and permutes the intervals on which the Nahm matrices $\wh{\alpha}_p$, $\wh{\beta}_p$ are defined. In the case of $\SU(2)$, all of the calorons constructed in \cite{cork2018symmrot} are invariant under the action \eqref{glide-symmetry}, which generates a $C_N$ subgroup of $C_{2N}$. So \cite{cork2018symmrot} gives existence of some very special solutions of the delayed Nahm equations. In the next section we construct more general solutions in the case $k=1$, none of which may be obtained from these examples.

\section{Solutions}\label{sec:solutions}
\subsection{Delayed Nahm equations for \texorpdfstring{$k=1$}{k=1} and \texorpdfstring{$\theta=\frac{T}{2}$}{theta=T/2}}
We now discuss some solutions of the delayed equations for rational rotation angle. We will focus our attention to the simplest nontrivial case, with charge $k=1$ and twist parameter $\theta=\frac{T}{2}$.  In this case the rotation angle in \eqref{twisted-gf} is $\pi$, and the $\frac{2\pi}{T}$-periodic charge 1 rotating caloron is equivalent to a $\frac{4\pi}{T}$-periodic charge 2 caloron.  We will assume that $0\leq\phi\leq\theta$ (which is equivalent to the assumption $\frac{m-1}{m}\theta\leq\phi\leq\theta$ in the previous section).

By $2\theta$-periodicity it is enough to look on a fundamental domain $[-\tfrac{\theta}{2}-\tfrac{\phi}{2},\tfrac{3\theta}{2}-\tfrac{\phi}{2}]$. We split up this domain around the singular points in \eqref{delay-eqs} into four sub-intervals
\begin{align}
\begin{aligned}
    I_1&=[-\tfrac{\theta}{2}-\tfrac{\phi}{2},-\tfrac{\theta}{2}+\tfrac{\phi}{2}],& I_2&=[-\tfrac{\theta}{2}+\tfrac{\phi}{2},\tfrac{\theta}{2}-\tfrac{\phi}{2}],\\ I_3&=[\tfrac{\theta}{2}-\tfrac{\phi}{2},\tfrac{\theta}{2}+\tfrac{\phi}{2}],& I_4&=[\tfrac{\theta}{2}+\tfrac{\phi}{2},\tfrac{3\theta}{2}-\tfrac{\phi}{2}].
    \end{aligned}
\end{align}
Let $(\alpha_p,\beta_p)$ denote the values of $(\alpha,\beta)$ on each interval $I_p$.  These are four pairs of functions $\alpha_p,\beta_p:I_p\lto\C$.  The matching data consist of four complex numbers $U_\pm,W_\pm$; each of the pairs $(U_+,U_-)$, $(W_+,W_-)$, $(U_\pm,W_\pm)$ must be nonzero (otherwise the associated caloron is trivial). Then the delayed Nahm equations \eqref{delay-eqs} are equivalent to the ODEs
\begin{align}\label{n=2-delay}
    \begin{aligned}
        \frac{\d\gamma_i}{\d s}(s)+\beta_i(s)\beta_i^\dagger(s)-\beta_{i+2}^\dagger(s+\theta)\beta_{i+2}(s+\theta)&=0,&s&\in I_i,\,i=1,2\\
        \frac{\d\beta_i}{\d s}(s)+\alpha_i(s)\beta_i(s)-\beta_i(s)\alpha_{i+2}(s+\theta)&=0,&s&\in I_i,\,i=1,2,\\
        \frac{\d \gamma_i}{\d s}(s)+\beta_i(s)\beta_i^\dagger(s)-\beta_{i-2}^\dagger(s-\theta)\beta_{i-2}(s-\theta)&=0,&s&\in I_i,\,i=3,4,\\
        \frac{\d\beta_i}{\d s}(s)+\alpha_i(s)\beta_i(s)-\beta_i(s)\alpha_{i-2}(s-\theta)&=0,&s&\in I_i,\,i=3,4,
    \end{aligned}
\end{align}
and the matching equations
\begin{align}\label{match-case0}
\begin{aligned}
        \gamma_4(\tfrac{3\theta}{2}-\tfrac{\phi}{2})-\gamma_1(-\tfrac{\theta}{2}-\tfrac{\phi}{2})&=U_-^\dagger U_-,&
        \gamma_1(-\tfrac{\theta}{2}+\tfrac{\phi}{2})-\gamma_2(-\tfrac{\theta}{2}+\tfrac{\phi}{2})&=U_+^\dagger U_+,\\
        \gamma_3(\tfrac{\theta}{2}-\tfrac{\phi}{2})-\gamma_2(\tfrac{\theta}{2}-\tfrac{\phi}{2})&=W_-^\dagger W_-,&
        \gamma_4(\tfrac{\theta}{2}+\tfrac{\phi}{2})-\gamma_3(\tfrac{\theta}{2}+\tfrac{\phi}{2})&=W_+^\dagger W_+,\\
        \beta_4(\tfrac{3\theta}{2}-\tfrac{\phi}{2})-\beta_1(-\tfrac{\theta}{2}-\tfrac{\phi}{2})&=U_-^\dagger W_-,&
        \beta_1(-\tfrac{\theta}{2}+\tfrac{\phi}{2})-\beta_2(-\tfrac{\theta}{2}+\tfrac{\phi}{2})&=U_+^\dagger W_+,\\
        \beta_2(\tfrac{\theta}{2}-\tfrac{\phi}{2})-\beta_3(\tfrac{\theta}{2}-\tfrac{\phi}{2})&=0,&
        \beta_3(\tfrac{\theta}{2}+\tfrac{\phi}{2})-\beta_4(\tfrac{\theta}{2}+\tfrac{\phi}{2})&=0,
    \end{aligned}
\end{align}
where $\gamma_p=\alpha_p+\alpha^\dagger_p$.
\subsection{Constant solutions}\label{sec:constant}
First we look for solutions of \eqref{n=2-delay}, \eqref{match-case0} where the functions $\alpha_p,\beta_p$ are constant on each interval.  With this assumption, the matching conditions \eqref{match-case0} imply that
\begin{equation}\label{constant gamma4-gamma2}
    \gamma_4-\gamma_2=|U_+|^2+|U_-|^2=|W_+|^2+|W_-|^2\neq0.
\end{equation}
The delayed Nahm equations \eqref{n=2-delay} imply that
\begin{equation}
    (\alpha_2-\alpha_4)\beta_2=(\alpha_2-\alpha_4)\beta_4=0.
\end{equation}
Since \eqref{constant gamma4-gamma2} implies $\alpha_2-\alpha_4\neq0$, these imply that $\beta_2=\beta_4=0$.  The matching conditions \eqref{match-case0} then imply that
\begin{equation}
    \beta_2=\beta_3=\beta_4=0.
\end{equation}
The delayed Nahm equations \eqref{n=2-delay} imply that $|\beta_1|^2=|\beta_3|^2$ and so $\beta_1=0$ also.  The matching conditions \eqref{match-case0} then give
\begin{equation}
    \overline{U}_-W_-=\overline{U}_+W_+=0.
\end{equation}
In view of \eqref{constant gamma4-gamma2} this means that either $U_+=W_-=0$ or $U_-=W_+=0$.  We focus on the former case (the latter case is equivalent up to a simple transformation).  Then $|U_-|=|W_+|$ by \eqref{constant gamma4-gamma2}, so up to a gauge transformation \eqref{gauge transformations} we may take $U_-=\lambda>0$ and $W_+=\lambda e^{\ii\psi}$.  With the remaining gauge freedom we can arrange that the imaginary parts of $\alpha_1,\alpha_2,\alpha_3,\alpha_4$ are equal.  The matching conditions \eqref{match-case0} then tell us that 
\begin{equation}
    \alpha_1=\alpha_2=\alpha_3=\alpha_4-\frac{\lambda^2}{2}.
\end{equation}
We thus obtain
\begin{align}\label{const-sol}
        \begin{aligned}
            \alpha_1=\alpha_2=\alpha_3&=\xi,&\alpha_4&=\xi+\tfrac{1}{2}\lambda^2,\\
            \beta_i&=0,\\
            (U_+,U_-)&=(0,\lambda),&(W_+,W_-)&=(\lambda\e^{\ii\psi},0),
        \end{aligned}
    \end{align}
where $\lambda>0$, $\xi\in\C$, and $\psi\in[0,2\pi)$.  This is the most general constant solution of \eqref{n=2-delay} and \eqref{match-case0}.

    The associated caloron Nahm data for the solution \eqref{const-sol} is given by
    \begin{align}\label{const.cal1}
    \begin{aligned}
        \wh{\alpha}_1(s)&=\begin{pmatrix}
            \xi+\tfrac{\lambda^2}{2}&0\\
            0&\xi
        \end{pmatrix},&\wh{\beta}_1(s)&=\begin{pmatrix}
            0&0\\0&0
        \end{pmatrix}\\
        \wh{\alpha}_2(s)&=\begin{pmatrix}\xi&\tfrac{\pi}{2\phi}\ii\\
        \tfrac{\pi}{2\phi}\ii&\xi
        \end{pmatrix},&\wh{\beta}_2(s)&=\begin{pmatrix}
            0&0\\
            0&0
        \end{pmatrix},\\
        (I_+,J_+)&=(\begin{pmatrix}
            \lambda&0
        \end{pmatrix},\begin{pmatrix}
            0&0
        \end{pmatrix}),&(I_-,J_-)&=(\begin{pmatrix}
            0&0
        \end{pmatrix},\begin{pmatrix}
            \lambda\e^{\ii\psi}&0
        \end{pmatrix}).
    \end{aligned}
    \end{align}
These Nahm data appeared earlier in \cite{Harland2007}.  The corresponding gauge fields are invariant under $\SO(2)$ rotations of $\R^3$ about the $x_3$-axis.  This means that they are invariant under $(t,\vec{x})\mapsto (t+\frac{2\pi}{T},\vec{x})$ in addition to \eqref{twisted-gf}.  So they are trivial examples of charge 1 rotating calorons, being invariant under both $\frac{2\pi}{T}$ shifts of $t$ and rotations about the $x_3$-axis.  These calorons were studied in detail in \cite{KraanVanBaal1998,LeeLu1998}.
\subsection{Calorons and the reality condition}
In order to simplify the search for non-constant solutions of the delayed Nahm equations, we recall here the reality constraint for caloron Nahm data.  The caloron Nahm matrices \eqref{caloron_ND-ps} associated with the delay Nahm data are defined on the intervals $[-\mu,\mu]$ and $[\mu,\theta-\mu]$ and may be cast in the form
\begin{align}\label{caloron-Nahm-from-delay-n=2}
    \begin{aligned}
        \wh{\alpha}_1(s)&=\begin{pmatrix}
            \alpha_4(s+\theta)&0\\
            0&\alpha_2(s)
        \end{pmatrix},&\wh{\beta}_1(s)&=\begin{pmatrix}
            0&\beta_4(s+\theta)\\
            \beta_2(s)&0
        \end{pmatrix},\\
        \wh{\alpha}_2(s)&=g\cdot\begin{pmatrix}
            \alpha_1(s-\theta)&0\\
            0&\alpha_3(s)
        \end{pmatrix},&\wh{\beta}_2(s)&=g\cdot\begin{pmatrix}
            0&\beta_1(s-\theta)\\
            \beta_3(s)&0
        \end{pmatrix},
    \end{aligned}
\end{align}
where $g:[\mu,\theta-\mu]\lto\U(2)$ is a gauge transformation on the second interval with $g(\mu)=1$ and $g(\theta-\mu)=\ii\sigma^1$; we can choose $g(s)=\begin{pmatrix}
    \cos(\tfrac{\pi}{2}\tfrac{s-\mu}{\theta-2\mu})&\ii\sin(\tfrac{\pi}{2}\tfrac{s-\mu}{\theta-2\mu})\\
    \ii\sin(\tfrac{\pi}{2}\tfrac{s-\mu}{\theta-2\mu})&\cos(\tfrac{\pi}{2}\tfrac{s-\mu}{\theta-2\mu})
\end{pmatrix}$.
The matching data is likewise defined using \eqref{IJ-UW} as
\begin{align}\label{caloron-match-n=2}
    I_+&=\begin{pmatrix}
        U_-&0
    \end{pmatrix},&J_+&=\begin{pmatrix}
        0&W_-
    \end{pmatrix},&I_-&=\begin{pmatrix}
        0&U_+
    \end{pmatrix},&J_-&=\begin{pmatrix}
        W_+&0
    \end{pmatrix}.
\end{align}
Due to the isomorphism $\SU(2)\cong{\rm Sp}(1)$, caloron Nahm data may always be put into a gauge where the reality condition holds \cite{CharbonneauHurtubise2007nahm.tfm}, that is $X_1(-s)=X_1(s)^t$ and $X_2(\theta-s)=X_2(s)^t$ for $X=\wh{\alpha},\wh{\beta}$.

Imposing the reality condition on \eqref{caloron-Nahm-from-delay-n=2} we obtain the conditions
    \begin{align}\label{reality-cond-delay-data}
    \begin{aligned}
        \alpha_1(-s)&=\alpha_3(s),&\beta_1(-s)&=\beta_1(s-\theta),&\beta_3(s)&=\beta_3(\theta-s),&\forall\,s\in I_3,\\
        \alpha_2(-s)&=\alpha_2(s),&\alpha_4(\theta-s)&=\alpha_4(\theta+s),&\beta_2(-s)&=\beta_4(\theta+s),&\forall\,s\in I_2.
    \end{aligned}
    \end{align}
    It is straightforward to check that these constraints are consistent with equations \eqref{n=2-delay}. These conditions also imply constraints on $U_\pm,W_\pm$ via \eqref{match-case0} which are
    \begin{align}\label{reality-cond-matching-data}
        \begin{aligned}
            U_-^\dagger U_-&=W_+^\dagger W_+,&U_+^\dagger U_+&=W_-^\dagger W_-,&U_-^\dagger W_-+U_+^\dagger W_+&=0.
        \end{aligned}
    \end{align}
\subsection{General solution of the delayed Nahm equations}
Let us return to the problem of solving the delayed Nahm equations \eqref{n=2-delay}.  We may choose a gauge in which the imaginary parts of $\alpha_1,\alpha_2,\alpha_3,\alpha_4$ are constant and equal.  In this gauge, the equations imply that the phases of $\beta_1,\beta_2,\beta_3,\beta_4$ are constant on each interval.  The equations also imply that
\begin{equation}
    \alpha_1(s)+\alpha_3(s+\theta)\quad\text{ and }\quad\alpha_2(s)+\alpha_4(s+\theta)
\end{equation}
are constant.  So the equations reduce to equations for the differences $\alpha_i(s)-\alpha_{i+2}(s+\theta)$ and the moduli $|\beta_p(s)|$.  More precisely, the equations \eqref{n=2-delay} are equivalent to the Euler top equations
\begin{align}\label{Euler-eqns}
    a'=bc,\quad b'=ca,\quad c'=ab.
\end{align}
in which
\begin{equation}
\begin{aligned}
    (a,b,c)&=\big((\alpha_1(s)-\alpha_3(s+\theta),|\beta_1(s)|-|\beta_3(s+\theta)|,|\beta_1(s)|+|\beta_3(s+\theta)|\big)\\
    &\text{or }\big((\alpha_4(s+\theta)-\alpha_2(s),|\beta_4(s+\theta)|-|\beta_2(s)|,|\beta_4(s+\theta)|+|\beta_2(s)|\big).
\end{aligned}
\end{equation}
The Euler top equation is integrable.  Its solution is that, up to reordering, $a$, $b$ and $c$ are the three functions
\begin{align}
    D\kappa'\scn_\kappa(D(s+S)),\quad D\kappa'\nc_\kappa(D(s+S)),\quad D\dc_\kappa(D(s+S)).
\end{align}
Here $D$ and $S$ are constants, $\kappa\in[0,1]$ is an elliptic modulus, $\kappa'^2=1-\kappa^2$, and we have adopted Glaisher's notation $\scn=\sn/\cn$ for Jacobi elliptic functions.  So the general solution takes the form\footnote{Some of the sign choices here have been made for later convenience.}
\begin{align}\label{sol-ansatz}
    \begin{aligned}
        \alpha_1(s)&=\frac{D_1}{2}f_1(D_1(s+S_1))+\xi,&\beta_1(s)&=\frac{D_1}{2}{\rm e}^{\ii\psi_1}g_1(D_1(s+S_1)),\\
        \alpha_2(s)&=-\frac{D_2}{2}f_2(D_2(S_2-s))+C+\xi,&\beta_2(s)&=\frac{D_2}{2}{\rm e}^{\ii\psi_2}g_2(D_2(S_2-s)),\\
        \alpha_3(s)&=-\frac{D_1}{2}f_1(D_1(s+S_1-\theta))+\xi,&\beta_3(s)&=\frac{D_1}{2}\e^{\ii\psi_3}h_1(D_1(s+S_1-\theta)),\\
        \alpha_4(s)&=\frac{D_2}{2}f_2(D_2(S_2+\theta-s))+C+\xi,&\beta_4(s)&=\frac{D_2}{2}\e^{\ii\psi_4}h_2(D_2(S_2+\theta-s)),
    \end{aligned}
\end{align}
where $f_i$, $\frac12(h_i-g_i)$ and $\frac12(h_i+g_i))$ are permutations of the three functions 
\begin{align}\label{elliptic-fcs-NE}
    \kappa'\scn_\kappa(s),\quad \kappa'\nc_\kappa(s),\quad\dc_\kappa(s).
\end{align}

A priori there are 6 choices for $f_1,g_1,h_1$ and 6 for $f_2,g_2,h_2$, making a total of 36 cases to consider.  We claim that to solve the system \eqref{n=2-delay}, \eqref{match-case0} it is enough to consider just two cases:
\begin{align}\label{functions-from-reality}
\begin{aligned}
    (f_1,g_1,h_1)&=(\kappa_1'\scn_{\kappa_1},\kappa_1'\nc_{\kappa_1}-\dc_{\kappa_1},\kappa_1'\nc_{\kappa_1}+\dc_{\kappa_1}),\\
    (f_2,g_2,h_2)&=(\chi,\kappa_2'\scn_{\kappa_2}-\eta,\kappa_2'\scn_{\kappa_2}+\eta),
\end{aligned}
\end{align}
where $(\chi,\eta)\in\{(\dc_{\kappa_2},\kappa'_2\nc_{\kappa_2}),(\kappa'_2\nc_{\kappa_2},\dc_{\kappa_2})\}$. In particular, with these functions, we must also fix
\begin{align}\label{param-from-reality}
    S_1=\tfrac{\theta}{2},\quad S_2=0,\quad \psi_4=\psi_2+\pi.
\end{align}
The reason is twofold.  First, the reality conditions \eqref{reality-cond-delay-data} imply that the functions
\begin{equation}
    \begin{aligned}
        \alpha_1(s-\tfrac{\theta}{2})-\alpha_3(s+\tfrac{\theta}{2})&=D_1f_1(D_1(s+S_1-\tfrac{\theta}{2})\text{ and}\\
        \alpha_4(s+\theta)-\alpha_2(s)&=D_2f_2(D_2(S_2-s))
    \end{aligned}
\end{equation}
are odd and even respectively.  The function $\scn_\kappa(s-S)$ is never even and the functions $\nc_\kappa(s-S)$, $\dc_\kappa(s-S)$ are never odd, so the only choices for $f_1,f_2,S_1,S_2$ are those given in \eqref{functions-from-reality}, \eqref{param-from-reality}.  Second, swapping the two functions $\frac12(h_i\pm g_i)$ is equivalent to changing the sign of $g_i$.  This can be absorbed into the phases $\psi_1,\psi_2$ in \eqref{sol-ansatz}, so it is only necessary to consider one of the two possible orderings.

Using the reality condition for the matching data \eqref{reality-cond-matching-data}, we may also fix
\begin{align}\label{U-W-param}
        \begin{aligned}
            U_+&=\lambda\sin\tau\,\e^{\ii\sigma_+},&U_-&=\lambda\cos\tau\,\e^{\ii\sigma_-}&W_+&=\lambda\cos\tau\,\e^{\ii\psi_+}&W_-&=-\lambda\sin\tau\,\e^{\ii\psi_-},
        \end{aligned}
    \end{align}
    where $\lambda>0$ and $\tau\in[0,\tfrac{\pi}{2}]$ up to relabeling of the phases $\psi_\pm,\sigma_\pm$. We also require $\psi_--\sigma_-=\psi_+-\sigma_+$, in order to satisfy the final constraint of \eqref{reality-cond-matching-data}. There is additional gauge freedom to fix one of the phases which we will do later in context to provide a convenient parameterisation.

    Notice that as the period is $T=2\theta$ there is a symmetry of the equations given by $(\alpha(s),\beta(s),U_\pm,W_\pm)\mapsto(-\alpha(\theta-s),-\beta(\theta-s),U_\mp,W_\mp)$. Unlike the reality condition, this is not a gauge symmetry. However, it is straightforward to see that all solutions with $\tau\in[\tfrac{\pi}{4},\tfrac{\pi}{2}]$ may be obtained from those with $\tau\in[0,\tfrac{\pi}{4}]$ (after relabeling of the remaining parameters) under this symmetry. From the point of view of calorons on $S^1\times\R^3$, such solutions are related by the isometry $(t,\vec{x})\mapsto(-t,-\vec{x})$. As such, we shall focus only on the case $\tau\in[0,\tfrac{\pi}{4}]$.

The functions \eqref{elliptic-fcs-NE} have poles at $s=\pm K(\kappa)$, so we need to make sure that our Nahm data \eqref{sol-ansatz}, \eqref{functions-from-reality} is free of poles on the intervals $I_p$ where they are defined.  In terms of
\begin{align}
\nu=\frac{\phi}{2}\quad\text{and}\quad \mu=\frac{\theta-\phi}{2},
\end{align}
this means that $f_1,g_1,h_1$ cannot have poles in the interval $-\nu<D_1s<\nu$ and $f_2,g_2,h_2$ cannot have poles in the interval $-\mu<D_2s<\mu$.  So we require that $-K(\kappa_1)<D_1\nu<K(\kappa_1)$, and $-K(\kappa_2)<D_2\mu<K(\kappa_2)$.
   
    The sign of $D_1$ does not affect \eqref{sol-ansatz} and can be changed by relabeling the phases $\psi_1,\psi_3$. So we can without loss of generality fix $D_1\geq0$. A priori, $D_2$ could have any sign, however we show later (see Lemma \ref{lem.D2-positive}) that $D_2\leq0$ gives no valid solutions. So we will ultimately consider $D_i$ in the ranges
\begin{align}\label{D-intervals}
        0\leq D_1\nu<K(\kappa_1),\quad\text{and}\quad 0<D_2\mu<K(\kappa_2).
    \end{align}
\subsection{Solutions of matching conditions}
So far we have found the general solution of the delayed Nahm equations \eqref{n=2-delay} and reality conditions.  Next we aim to solve the matching conditions \eqref{match-case0}.  Before doing so, we make one more restriction: of the two solutions \eqref{reality-cond-delay-data}, we choose the solution $\chi=\dc_{\kappa_2}$ and $\eta=\kappa_2'\nc_{\kappa_2}$.  The reason is that this choice includes the constant solutions \eqref{const-sol} as a special case. We prove this below in Lemma \ref{lem.const-def}. We expect the other case to also yield solutions, however a comprehensive analysis of the moduli space is beyond the scope of this paper.

Plugging in the data \eqref{sol-ansatz} and \eqref{U-W-param}, with this choice from \eqref{functions-from-reality}, and the parameters \eqref{param-from-reality}, the matching conditions \eqref{match-case0} are equivalent to
\begin{align}\label{match-case0-k=1}
\begin{aligned}
        2C+D_2\dc_{\kappa_2}(D_2\mu)+D_1\kappa_1'\scn_{\kappa_1}(D_1\nu)&=\lambda^2\cos^2\tau,\\
        -2C+D_1\kappa_1'\scn_{\kappa_1}(D_1\nu)+D_2\dc_{\kappa_2}(D_2\mu)&=\lambda^2\sin^2\tau,\\
        D_1\e^{\ii\psi_1}(\kappa_1'\nc_{\kappa_1}(D_1\nu)-\dc_{\kappa_1}(D_1\nu))\qquad\qquad\qquad\qquad&\\
        -D_2\e^{\ii\psi_2}(\kappa_2'\scn_{\kappa_2}(D_2\mu)-\kappa_2'\nc_{\kappa_2}(D_2\mu))&=\lambda^2\sin2\tau\e^{\ii(\psi_+-\sigma_+)},\\        D_1\e^{\ii\psi_3}(\kappa_1'\nc_{\kappa_1}(D_1\nu)+\dc_{\kappa_1}(D_1\nu))\qquad\qquad\qquad\qquad&\\
        +D_2\e^{\ii\psi_2}(\kappa_2'\scn_{\kappa_2}(D_2\mu)+\kappa_2'\nc_{\kappa_2}(D_2\mu))&=0.
    \end{aligned}
\end{align}
The first and second of \eqref{match-case0-k=1} are solved by setting
\begin{align}\label{C-eq}
    C=\frac{\lambda^2}{4}\cos2\tau,
\end{align}
and
\begin{align}\label{lam-constraint}
    \lambda^2=2D_2\dc_{\kappa_2}(D_2\mu)+2D_1\kappa_1'\scn_{\kappa_1}(D_1\nu).
\end{align}
This and the final equation of \eqref{match-case0-k=1} give the following important results.
\begin{lemma}\label{lem.const-def}
    $D_1=0$ if and only if $\kappa_2=1$, moreover this gives the constant solutions \eqref{const-sol}.
\end{lemma}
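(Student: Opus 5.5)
The plan is to read off both implications from the last equation of \eqref{match-case0-k=1}, using sign properties of the Jacobi elliptic functions on $(-K,K)$. Then substitute $D_1=0$, $\kappa_2=1$ into \eqref{sol-ansatz} and compare with \eqref{const-sol}. Throughout we work in the ranges \eqref{D-intervals}, so $D_1\nu\in[0,K(\kappa_1))$ and $D_2\mu\in(0,K(\kappa_2))$. We also use the convention that $D_1=0$ means the terms $\frac{D_1}{2}f_1(\cdot)$, $\frac{D_1}{2}g_1(\cdot)$, $\frac{D_1}{2}h_1(\cdot)$ in \eqref{sol-ansatz} vanish. This convention is consistent, since the arguments are then $0$ and the functions are finite there.

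The key observation is a positivity statement. For $u\in(-K(\kappa),K(\kappa))$ we have $\cn_\kappa u>0$, $\dn_\kappa u>0$ and $|\sn_\kappa u|<1$. Hence the following hold:
\begin{equation*}
\scn_\kappa u+\nc_\kappa u=\frac{1+\sn_\kappa u}{\cn_\kappa u}>0,\qquad \kappa'\nc_\kappa u+\dc_\kappa u\geq \dc_\kappa u=\frac{\dn_\kappa u}{\cn_\kappa u}>0 .
\end{equation*}
Now suppose $D_1=0$. The first term of the last equation of \eqref{match-case0-k=1} vanishes, leaving $D_2\kappa_2'(\scn_{\kappa_2}+\nc_{\kappa_2})(D_2\mu)=0$. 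Since $D_2>0$ and the bracket is positive, we get $\kappa_2'=0$, i.e.\ $\kappa_2=1$.

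Conversely, suppose $\kappa_2=1$. Then $\kappa_2'=0$ kills the second term, leaving $D_1\e^{\ii\psi_3}(\kappa_1'\nc_{\kappa_1}+\dc_{\kappa_1})(D_1\nu)=0$. The bracket is strictly positive (including at $D_1\nu=0$, where it equals $\kappa_1'+1$), so $D_1=0$. A small point needing care is the case $D_2=0$, which is excluded by \eqref{D-intervals}. This exclusion ultimately rests on Lemma~\ref{lem.D2-positive}, so I would state that the lemma is proved within those ranges.

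For the identification with \eqref{const-sol}, take $D_1=0$ and $\kappa_2=1$.
\begin{itemize}
\item From \eqref{sol-ansatz} with $D_1=0$: $\alpha_1=\alpha_3=\xi$ and $\beta_1=\beta_3=0$.
\item With $\kappa_2=1$: $\kappa_2'=0$ and $\dc_1=\mathrm{sech}/\mathrm{sech}=1$. So $f_2\equiv1$ and $g_2=h_2\equiv0$, giving $\beta_2=\beta_4=0$, $\alpha_2=-\tfrac{D_2}{2}+C+\xi$ and $\alpha_4=\tfrac{D_2}{2}+C+\xi$, all constant.
\item Equation \eqref{lam-constraint} gives $\lambda^2=2D_2$.
\item The third equation of \eqref{match-case0-k=1} becomes $0=\lambda^2\sin2\tau\,\e^{\ii(\psi_+-\sigma_+)}$. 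With $\tau\in[0,\tfrac{\pi}{4}]$ this forces $\tau=0$.
\item Then \eqref{C-eq} gives $C=\lambda^2/4$, so $\alpha_2=\xi$ and $\alpha_4=\xi+\tfrac{\lambda^2}{2}$.
\item \eqref{U-W-param} gives $U_+=W_-=0$, $U_-=\lambda\e^{\ii\sigma_-}$ and $W_+=\lambda\e^{\ii\psi_+}$.
\end{itemize}
Finally, a constant gauge transformation \eqref{gauge transformations} removes $\sigma_-$, leaving $W_+=\lambda\e^{\ii\psi}$. This is exactly \eqref{const-sol}.

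The only genuinely non-routine step is the positivity claim for the two brackets, i.e.\ that the ranges \eqref{D-intervals} keep the arguments inside $(-K,K)$, where $\cn,\dn>0$. The rest is direct substitution.
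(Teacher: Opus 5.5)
Your route is essentially the paper's: you read both implications off the last equation of \eqref{match-case0-k=1} using positivity of the two brackets, and then substitute into \eqref{sol-ansatz}. Your identification with \eqref{const-sol} is more explicit than the paper's. The paper only observes that the data become constant and leaves the rest to the classification in Section~\ref{sec:constant}. Your derivation of $\lambda^2=2D_2$, $\tau=0$, $C=\lambda^2/4$ and the matching data, followed by a constant gauge transformation, is correct.

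The step that does not work as written is the exclusion of $D_2=0$ in the direction $D_1=0\Rightarrow\kappa_2=1$. You take $D_2>0$ from \eqref{D-intervals}, but in the paper that lower bound is the output of Lemma~\ref{lem.D2-positive}. The proof of Lemma~\ref{lem.D2-positive} handles the cases $D_1=0$ and $\kappa_2=1$ by citing the present lemma, so the argument is circular.

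Your proposed remedy, stating the lemma only for $D_2>0$, does not break the circle. The weakened statement no longer covers the configurations with $D_1=0$ and $-K(\kappa_2)/\mu<D_2\le 0$, and those are exactly the ones Lemma~\ref{lem.D2-positive} needs it for.

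The fix is short, and it is what the paper does.
\begin{itemize}
\item Positivity of $(\scn_{\kappa_2}+\nc_{\kappa_2})(D_2\mu)=(1+\sn_{\kappa_2}(D_2\mu))/\cn_{\kappa_2}(D_2\mu)$ needs only the pole-free condition $|D_2\mu|<K(\kappa_2)$, which was imposed independently. No sign on $D_2$ is required.
\item So with $D_1=0$ the last equation gives $D_2\kappa_2'=0$ for any admissible $D_2$.
\item If $D_2=0$, then \eqref{lam-constraint} with $D_1=0$ gives $\lambda=0$, which is excluded.
\end{itemize}
The lemma then holds with no sign assumption on $D_2$. In the degenerate case your own computation $\lambda^2=2D_2$ shows directly that $D_2\le 0$ is impossible. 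This is precisely what Lemma~\ref{lem.D2-positive} relies on, and it shows that $D_2>0$ is a consequence here rather than an input.
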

\begin{proof}
It is clear from the final equation of \eqref{match-case0-k=1} that $\kappa_2=1$ implies $D_1=0$. Likewise, if $D_1=0$ either $\kappa_2=1$, or $D_2=0$. But the latter option then gives $\lambda=0$ by \eqref{lam-constraint}, and hence not valid Nahm data.

To see that this gives the constant solutions, let $D_1=0$ and $\kappa_2=1$. Then $g_2,h_2=0$ in \eqref{functions-from-reality}, so $\beta_p=0$ for all $p$ in \eqref{sol-ansatz}. Moreover, $f_2$ is constant in \eqref{functions-from-reality}, so $\alpha_p$ are constant for all $p$ in \eqref{sol-ansatz}. So the Nahm data are constant on each interval.
\end{proof}
\begin{lemma}\label{lem.D2-positive}
Any solution of the matching conditions \eqref{match-case0-k=1} with $-K(\kappa_2)/\mu<D_2\leq0$ gives $\lambda^2\leq0$, and hence invalid Nahm data.
\end{lemma}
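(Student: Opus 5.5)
The plan is to show directly from \eqref{lam-constraint} that $\lambda^2\le 0$. The second term of \eqref{lam-constraint} has the wrong sign for this, so I will control it using the modulus of the final equation of \eqref{match-case0-k=1}. Throughout, write $x=D_1\nu\in[0,K(\kappa_1))$ and $y=-D_2\mu=|D_2|\mu\in[0,K(\kappa_2))$, using \eqref{D-intervals} for $D_1$ and the hypothesis on $D_2$. On these ranges $\cn>0$, $\dn>0$ and $0\le\sn<1$, so $\nc$, $\dc$ and $\scn$ are all nonnegative at $x$ and at $y$.

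First, rewrite \eqref{lam-constraint} using that $\dc$ is even:
\begin{equation*}
\tfrac{\lambda^2}{2}=-|D_2|\,\dc_{\kappa_2}(y)+D_1\kappa_1'\scn_{\kappa_1}(x).
\end{equation*}
The first term is $\le 0$, but the second is $\ge 0$, so we need an upper bound for the second. Take absolute values in the last line of \eqref{match-case0-k=1}, which removes the phases $\psi_2,\psi_3$. Using that $\scn$ is odd and $\nc$ is even, this gives
\begin{equation*}
D_1\bigl(\kappa_1'\nc_{\kappa_1}(x)+\dc_{\kappa_1}(x)\bigr)=|D_2|\,\kappa_2'\bigl(\nc_{\kappa_2}(y)-\scn_{\kappa_2}(y)\bigr)=|D_2|\,\kappa_2'\,\frac{1-\sn_{\kappa_2}(y)}{\cn_{\kappa_2}(y)}.
\end{equation*}
Both sides are nonnegative, so the absolute values are harmless.

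Next, compare each side with the relevant term of $\lambda^2/2$:
\begin{itemize}
\item On the left, $\sn\le 1$ gives $\kappa_1'\scn_{\kappa_1}(x)\le\kappa_1'\nc_{\kappa_1}(x)$. Since $\dc_{\kappa_1}(x)\ge0$, it follows that $D_1\kappa_1'\scn_{\kappa_1}(x)\le D_1\bigl(\kappa_1'\nc_{\kappa_1}(x)+\dc_{\kappa_1}(x)\bigr)$.
\item On the right, $\dn^2=1-\kappa^2\sn^2\ge 1-\kappa^2=\kappa'^2$, so $\dn_{\kappa_2}(y)\ge\kappa_2'\ge\kappa_2'(1-\sn_{\kappa_2}(y))$. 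Dividing by $\cn_{\kappa_2}(y)>0$ gives $\kappa_2'(\nc-\scn)(y)\le\dc_{\kappa_2}(y)$.
\end{itemize}
Chaining these two inequalities through the displayed identity gives $D_1\kappa_1'\scn_{\kappa_1}(x)\le|D_2|\,\dc_{\kappa_2}(y)$, hence $\lambda^2\le 0$. The endpoint $D_2=0$ is covered by the same chain: the right-hand side of the modulus identity vanishes, which forces $D_1=0$ because $\dc_{\kappa_1}(x)\ge 1>0$, and then $\lambda=0$ directly from \eqref{lam-constraint}.

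The main obstacle is recognising that \eqref{lam-constraint} alone is not enough and that the fourth matching equation must be brought in, together with the right elementary elliptic-function inequalities ($\sn\le1$ and $\dn\ge\kappa'$). Beyond that, one only needs to check sign conventions carefully on the half-intervals $[0,K)$, in particular that $\sn(y)\ge0$ there, so that $1-\sn(y)\le 1$ and the chain goes in the right direction.
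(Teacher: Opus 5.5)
Your proof is correct and follows essentially the same route as the paper: both combine the final equation of \eqref{match-case0-k=1} with \eqref{lam-constraint} and close the argument using only $\sn\le 1$ and $\dn\ge\kappa'$. The paper notes $\psi_3=\psi_2$, substitutes to eliminate $D_2$, and bounds the resulting expression. You instead take moduli and chain two inequalities, which has the small advantage of covering $D_1=0$ and $\kappa_2=1$ directly rather than deferring them to Lemma~\ref{lem.const-def}.
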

\begin{proof}
    Suppose $D_2\leq0$. We will assume $D_1\neq0$ and $\kappa_2\neq1$ as the cases $D_1=0$ and $\kappa_2=1$ are covered by Lemma \ref{lem.const-def}. Then the final equation of \eqref{match-case0-k=1} gives $\psi_3=\psi_2$, and may be rearranged and substituted into \eqref{lam-constraint} to give
    \begin{align}\label{lam-sq-D2-input}
    \frac{\lambda^2}{2}=D_1\left(\kappa_1'\sn_{\kappa_1}(D_1\nu)-\frac{(\kappa_1'+\dn_{\kappa_1}(D_1\nu))\dn_{\kappa_2}(D_2\mu)}{\kappa_2'(\sn_{\kappa_2}(D_2\mu)+1)}\right)\nc_{\kappa_1}(D_1\nu).
\end{align}
The elliptic functions satisfy $\kappa'\leq\dn_\kappa\leq1$, and $-1<\sn_{\kappa_2}(D_2\mu)\leq0$ because $-K(\kappa_2)/\mu<D_2\leq0$.  So
\begin{align*}
    \frac{(\kappa_1'+\dn_{\kappa_1}(D_1\nu))\dn_{\kappa_2}(D_2\mu)}{\kappa_2'(\sn_{\kappa_2}(D_2\mu)+1)}\geq\frac{(\kappa_1'+\kappa_1')\kappa_2'}{\kappa_2'}=2\kappa_1'.
\end{align*}
Moreover, $\kappa_1'\sn_{\kappa_1}(D_1\nu)\leq\kappa_1'$ and $D_1\nc_{\kappa_1}(D_1\nu)>0$, because $0<D_1<K(\kappa_1)/\nu$.  So from \eqref{lam-sq-D2-input} we would obtain $\lambda^2\leq0$.
\end{proof}
This Lemma establishes the range restriction \eqref{D-intervals}, and in particular this and the final equation of \eqref{match-case0-k=1} forces $\psi_2=\psi_3+\pi$.
Putting this all together, our solution must therefore take the form
\begin{align}\label{sol-deformed}
    \begin{aligned}
        \alpha_1(s)&=\frac{D_1}{2}\kappa_1'\scn_{\kappa_1}(D_1(s+\tfrac{\theta}{2}))+\xi,\\\beta_1(s)&=\frac{D_1}{2}{\rm e}^{\ii\psi_1}(\kappa_1'\nc_{\kappa_1}(D_1(s+\tfrac{\theta}{2}))-\dc_{\kappa_1}(D_1(s+\tfrac{\theta}{2}))),\\
        \alpha_2(s)&=-\frac{D_2}{2}\dc_{\kappa_2}(D_2s)+\tfrac{\lambda^2}{4}\cos2\tau+\xi,\\\beta_2(s)&=\frac{D_2}{2}{\rm e}^{\ii\psi_3}\kappa_2'(\scn_{\kappa_2}(D_2s)+\nc_{\kappa_2}(D_2s)),\\
        \alpha_3(s)&=-\frac{D_1}{2}\kappa_1'\scn_{\kappa_1}(D_1(s-\tfrac{\theta}{2}))+\xi,\\\beta_3(s)&=\frac{D_1}{2}\e^{\ii\psi_3}(\kappa_1'\nc_{\kappa_1}(D_1(s-\tfrac{\theta}{2}))+\dc_{\kappa_1}(D_1(s-\tfrac{\theta}{2}))),\\
        \alpha_4(s)&=\frac{D_2}{2}\dc_{\kappa_2}(D_2(s-\theta))+\tfrac{\lambda^2}{4}\cos2\tau+\xi,\\\beta_4(s)&=\frac{D_2}{2}\e^{\ii\psi_3}\kappa_2'(\nc_{\kappa_2}(D_2(s-\theta))-\scn_{\kappa_2}(D_2(s-\theta))),
    \end{aligned}
\end{align}
with matching data
\begin{align}\label{U-W-param-sol}
        \begin{aligned}
            U_+&=\lambda\sin\tau\,\e^{\ii\sigma_+},&U_-&=\lambda\cos\tau\,\e^{\ii\sigma_-},\\W_+&=\lambda\cos\tau\,\e^{\ii\psi_+},&W_-&=-\lambda\sin\tau\,\e^{\ii(\psi_+-\sigma_++\sigma_-)},
        \end{aligned}
    \end{align}
    and the matching conditions \eqref{match-case0-k=1} reduce to the two equations
\begin{align}
    D_1(\kappa_1'\nc_{\kappa_1}(D_1\nu)+\dc_{\kappa_1}(D_1\nu))&=D_2(\kappa_2'\scn_{\kappa_2}(D_2\mu)+\kappa_2'\nc_{\kappa_2}(D_2\mu)),\label{match_Ds-ks}\\
    \ell_1\e^{\ii\psi_1}+\ell_2\e^{\ii\psi_3}+\ell_3\e^{\ii\omega}\sin2\tau&=0,\label{match-triangle}
\end{align}
where we have denoted $\omega=\psi_+-\sigma_+$ and
\begin{align}\label{ells}
    \begin{aligned}
        \ell_1&=D_1(\dc_{\kappa_1}(D_1\nu)-\kappa_1'\nc_{\kappa_1}(D_1\nu)),\\
        \ell_2&=D_2(\kappa_2'\nc_{\kappa_2}(D_2\mu)-\kappa_2'\scn_{\kappa_2}(D_2\mu)),\\
        \ell_3&=2D_2\dc_{\kappa_2}(D_2\mu)+2D_1\kappa_1'\scn_{\kappa_1}(D_1\nu).
    \end{aligned}
\end{align}
By the inequalities \eqref{D-intervals} we automatically have $\ell_1,\ell_2\geq0$, and $\ell_3=\lambda^2>0$.

We shall now establish existence and uniqueness of solutions of the matching conditions \eqref{match_Ds-ks}-\eqref{match-triangle} in general.
\begin{theorem}\label{thm.gen-sol}
    Let $\mu,\nu>0$, $\kappa_1\in[0,1)$, $\kappa_2\in[0,1)$, and $D_1\in(0,K(\kappa_1)/\nu)$. Then there exists a unique $D_2\in(0,K(\kappa_2)/\mu)$ satisfying \eqref{match_Ds-ks}. Moreover for all such $\mu,\nu,\kappa_i,D_i$, there exists a range $(\tau_-,\tau_+)\subset[0,\tfrac{\pi}{4}]$, such that, for each $\tau\in(\tau_-,\tau_+)$, \eqref{match-triangle} admits a unique solution $(\e^{\ii\psi_1},\e^{\ii\psi_3})$ modulo the choice of sign for both the angles $(\psi_1,\psi_3)$. Therefore the data \eqref{sol-deformed}-\eqref{U-W-param-sol} gives rise to an $8$-parameter family of solutions of the delayed Nahm equations with $k=1$.
\end{theorem}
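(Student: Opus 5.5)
Two steps are involved: solving \eqref{match_Ds-ks} for $D_2$ by monotonicity and the intermediate value theorem, and then reading \eqref{match-triangle} as a closing-triangle condition.

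\textbf{Step 1: existence and uniqueness of $D_2$.} Fix $\mu,\nu,\kappa_1,D_1$ as in the statement. The left-hand side of \eqref{match_Ds-ks},
\[
L:=D_1\bigl(\kappa_1'\nc_{\kappa_1}(D_1\nu)+\dc_{\kappa_1}(D_1\nu)\bigr),
\]
is a fixed positive number. For the right-hand side, write $y=D_2\mu\in(0,K(\kappa_2))$. Then
\[
F(y):=\frac{y}{\mu}\,\kappa_2'\,\frac{1+\sn_{\kappa_2}(y)}{\cn_{\kappa_2}(y)}.
\]
On $(0,K)$ the quotient $(1+\sn)/\cn$ is positive and increasing, because $\sn$ increases and $\cn$ decreases and is positive. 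The prefactor $y$ is also positive and increasing, so $F$ is strictly increasing. As $y\to0^+$ we have $F(y)\to0$. As $y\to K(\kappa_2)^-$ we have $\cn\to0^+$ while $1+\sn\to2$, and $\kappa_2'>0$ since $\kappa_2<1$, so $F(y)\to+\infty$. By the intermediate value theorem there is a unique $y$ with $F(y)=L$, which gives a unique $D_2\in(0,K(\kappa_2)/\mu)$.

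\textbf{Step 2: solving \eqref{match-triangle} as a triangle condition.} Once $D_2$ is fixed, the quantities $\ell_1,\ell_2,\ell_3$ in \eqref{ells} are determined, and each is nonnegative.
\begin{itemize}
\item Strictly, $\ell_1>0$, since $\dc-\kappa_1'\nc=(\dn-\kappa_1')/\cn>0$ for $\kappa_1<1$ at a nonzero argument.
\item Similarly $\ell_2>0$, since $\kappa_2'(1-\sn)/\cn>0$.
\item $\ell_3=\lambda^2>0$.
\end{itemize}
Equation \eqref{match-triangle} states that three complex numbers of moduli $\ell_1$, $\ell_2$ and $\ell_3\sin2\tau$ sum to zero. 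Since $\omega$ is a free phase (gauge), we rotate so that $\e^{\ii\omega}=-1$. The condition then says that $\ell_1,\ell_2,\ell_3\sin2\tau$ are the side lengths of a possibly degenerate triangle, that is,
\[
|\ell_1-\ell_2|\le\ell_3\sin2\tau\le\ell_1+\ell_2.
\]
In the strict case the triangle is unique up to reflection. Reflection gives exactly the sign ambiguity $(\psi_1,\psi_3)\mapsto(-\psi_1,-\psi_3)$, which yields uniqueness modulo sign. The $\tau$-range is therefore the set of $\tau\in[0,\pi/4]$ satisfying
\[
\frac{|\ell_1-\ell_2|}{\ell_3}<\sin2\tau<\frac{\ell_1+\ell_2}{\ell_3}.
\]
This is an interval $(\tau_-,\tau_+)$, with $\tau_\pm$ given by arcsines, where $\tau_+=\pi/4$ if the upper bound exceeds $1$.

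\textbf{Main obstacle.} The delicate point is showing that this interval is nonempty, which amounts to $|\ell_1-\ell_2|<\ell_3$. I would first try to establish $\ell_1<\ell_2+\ell_3$ and $\ell_2<\ell_1+\ell_3$ directly. To do this I would combine \eqref{match_Ds-ks}, which links $D_1(\kappa_1'\nc+\dc)$ to $D_2\kappa_2'(\scn+\nc)$, with elementary inequalities $\kappa'\le\dn\le1$ and $\sn\ge0$, in the style of the proof of Lemma \ref{lem.D2-positive}. For example, $\ell_1\le D_1(\dc_{\kappa_1}+\kappa_1'\nc_{\kappa_1})=L$, which equals $D_2\kappa_2'(\scn+\nc)$; this is compared against $\ell_2+\ell_3$, and $\ell_3$ contains $2D_2\dc_{\kappa_2}\ge2D_2\kappa_2'\nc_{\kappa_2}$.

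\textbf{Parameter count.} The free parameters are $\mu$ (equivalently $\phi$), $\nu$, $\kappa_1$, $\kappa_2$, $D_1$, $\tau$, the complex number $\xi$ (two real parameters), and the remaining phases $\sigma_\pm,\psi_+$, reduced by gauge and by the constraint. Fixing $\theta$ removes one parameter. Tallying these should give eight, matching the statement, and I would check the count against the constant family \eqref{const-sol} as a sanity check.
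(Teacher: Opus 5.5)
Your outline is correct and follows the paper's structure: a monotone bijection in $D_2$, then reading \eqref{match-triangle} as a closing-triangle condition, which reduces nonemptiness of the $\tau$-range to $\ell_3>|\ell_1-\ell_2|$. Your Step 1 also supplies the monotonicity argument that the paper only asserts. Where you differ is the key inequality, which you leave as a plan. It closes at once with exactly the bounds you name:
\[
\ell_2=D_2\kappa_2'\bigl(1-\sn_{\kappa_2}(D_2\mu)\bigr)\nc_{\kappa_2}(D_2\mu)<D_2\kappa_2'\nc_{\kappa_2}(D_2\mu)\le D_2\dc_{\kappa_2}(D_2\mu)<\ell_3 .
\]
Using $L-\ell_1=2D_1\kappa_1'\nc_{\kappa_1}(D_1\nu)>0$ together with \eqref{match_Ds-ks},
\[
\ell_1<L=D_2\kappa_2'\bigl(1+\sn_{\kappa_2}(D_2\mu)\bigr)\nc_{\kappa_2}(D_2\mu)<2D_2\kappa_2'\nc_{\kappa_2}(D_2\mu)\le\ell_3 .
\]
Hence $|\ell_1-\ell_2|\le\max(\ell_1,\ell_2)<\ell_3$. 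The paper instead substitutes \eqref{D2-eqn}, clears denominators to reach the exact reformulation \eqref{ell-constraint-expanded}, and estimates the absolute value there. That gives an equivalent form of the constraint. Your comparison of each $\ell_i$ with $\ell_3$ separately is shorter and proves the stronger bound $\max(\ell_1,\ell_2)<\ell_3$.

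There are two things to correct. First, your claim that $\ell_1>0$ for all $\kappa_1<1$ fails at $\kappa_1=0$. There $\dn\equiv 1=\kappa_1'$, so $\ell_1=0$, and the triangle conditions force the single value $\sin2\tau=\ell_2/\ell_3$. Also $\psi_1$ is then undetermined, although $\beta_1\equiv0$. So an open $\tau$-interval exists only for $\kappa_1\in(0,1)$. The paper's ``if and only if $\ell_3>|\ell_2-\ell_1|$'' misses this too, since an interval with nonempty interior also needs $\ell_1,\ell_2>0$.

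Second, your parameter count is off. The quantities $\mu,\nu$ (equivalently $\theta,\phi$) are fixed period and holonomy data, not moduli. Also, $\omega=\psi_+-\sigma_+$ is not a gauge phase: the only continuous residual gauge freedom multiplies $U_\pm,W_\pm$ by a common constant phase, which leaves $\omega$ invariant. Normalising $\e^{\ii\omega}=-1$ is still a legitimate reduction for solving \eqref{match-triangle}, via a common shift of $\psi_1,\psi_3$. For general $\omega$, though, the two solutions are related by $\psi_j-\omega\mapsto-(\psi_j-\omega)$. The eight parameters are $\kappa_1,\kappa_2,D_1,\tau$, the real and imaginary parts of $\xi$, and two of $\sigma_+,\sigma_-,\psi_+$. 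There is no further constraint to subtract, because $\psi_-$ has already been eliminated. As written, your list reaches eight only because counting the holonomy is offset by a spurious extra subtraction.
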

\begin{proof}
Consider the functions $F:(0,K(\kappa_1)/\nu)\lto(0,\infty)$ and $G:(0,K(\kappa_2)/\mu)\lto(0,\infty)$ given by
\begin{align}\label{F-G-elliptic}
\begin{aligned}
    F(u)&=u(\kappa_1'\nc_{\kappa_1}(u\nu)+\dc_{\kappa_1}(u\nu)),\\
    G(u)&=u(\kappa_2'\scn_{\kappa_2}(u\mu)+\kappa_2'\nc_{\kappa_2}(u\mu)).
\end{aligned}
\end{align}
They are bijective for all $\mu,\nu>0$ and $\kappa_1\in[0,1)$, $\kappa_2\in[0,1)$. As such, for each $D_1$ there exists a unique $D_2\in(0,K(\kappa_2)/\mu)$ with $F(D_1)=G(D_2)$, i.e., \eqref{match_Ds-ks} admits a unique solution.

We can view the remaining equation \eqref{match-triangle} as an equation relating three points in the complex plane. This admits at most two solutions for the pair $(\e^{\ii\psi_1},\e^{\ii\psi_3})$, with the two solutions related by complex conjugation, if and only if the triangle inequalities
\begin{align}
    \label{tau-constraint}\ell_1\leq\ell_2+\ell_3\sin2\tau,\quad\ell_2\leq\ell_3\sin2\tau+\ell_1,\quad\ell_3\sin2\tau\leq\ell_1+\ell_2
\end{align}
hold. These are satisfied on an open interval $(\tau_-,\tau_+)\subset[0,\tfrac{\pi}{4}]$ if and only if
\begin{align}\label{ell-constraint}
    \ell_3>|\ell_2-\ell_1|.
\end{align}
It thus remains to prove that this is implied by \eqref{match_Ds-ks}. Using equation \eqref{match_Ds-ks} we can write \begin{align}\label{D2-eqn}
    D_2=\frac{D_1(\kappa_1'+\dn_{\kappa_1}(D_1\nu))\cn_{\kappa_2}(D_2\mu)}{\kappa_2'(\sn_{\kappa_2}(D_2\mu)+1)\cn_{\kappa_1}(D_1\nu)}.
\end{align}
The constraint \eqref{ell-constraint} thus becomes equivalent to the condition
\begin{multline}\label{ell-constraint-expanded}
    (\kappa_1'+\dn_{\kappa_1}(D_1\nu))\dn_{\kappa_2}(D_2\mu)\\+\kappa_1'\kappa_2'(\sn_{\kappa_2}(D_2\mu)+1)\sn_{\kappa_1}(D_1\nu)>\kappa_2'\left|\kappa_1'-\dn_{\kappa_1}(D_1\nu)\sn_{\kappa_2}(D_2\mu)\right|.
\end{multline}
Since $D_2\in(0,K(\kappa_2)/\mu)$, $\dn_{\kappa_2}(D_2\mu)>\kappa_2'$ and $\sn_{\kappa_2}(D_2\mu)<1$, so
\begin{multline*}
    (\kappa_1'+\dn_{\kappa_1}(D_1\nu))\dn_{\kappa_2}(D_2\mu)-\kappa_2'\left|\kappa_1'-\dn_{\kappa_1}(D_1\nu)\sn_{\kappa_2}(D_2\mu)\right|\\>(\kappa_1'+\dn_{\kappa_1}(D_1\nu))\kappa_2'-\kappa_2'(\kappa_1'+\dn_{\kappa_1}(D_1\nu)\sn_{\kappa_2}(D_2\mu))\\=\kappa_2'\dn_{\kappa_1}(D_1\nu)(1-\sn_{\kappa_2}(D_2\mu))>0.
\end{multline*}
We also have $\kappa_1'\kappa_2'(\sn_{\kappa_2}(D_2\mu)+1)\sn_{\kappa_1}(D_1\nu)>0$ because, by assumption, $\kappa_i'>0$.  It follows that \eqref{ell-constraint-expanded} is upheld for all values of the parameters.

The data \eqref{sol-deformed}-\eqref{U-W-param-sol} gives an $8$-parameter family. The parameter $D_2\in(0,K(\kappa_2)/\mu)$ is fixed by the equation $G(D_2)=F(D_1)$, $\lambda>0$ by the condition \eqref{lam-constraint}, and the phases $\psi_1,\psi_3$ by the equation \eqref{match-triangle}. We may also fix one of the phases (either $\sigma_+$ or $\psi_+$) to $0$ by gauge transformation. The free parameters are then $\kappa_i\in[0,1)$, $D_1\in[0,K(\kappa_1)/\nu)$, $\xi\in\C$, $\psi_+,\sigma_-\in[0,2\pi)$, and $\tau\in(\tau_-,\tau_+)$ constrained by \eqref{tau-constraint}. 
\end{proof}

The fact this is an $8$-parameter family fits with physical intuition. Physically, the caloron is a unit charge instanton on the manifold $\R^4/\mathbb{Z}$ (where the action of $\Z$ is given in \eqref{twisted-gf}).  An instanton is determined by eight parameters: four position coordinates, one scale coordinate, and three orientation coordinates.  So our family has the expected number of parameters.

The solution of the matching conditions is implicit, not explicit.  However, an explicit solution is easily obtained when the holonomy of the caloron is trivial, i.e.\ when $\mu=0$ or $\nu=0$.   
For example, when $\mu=0$ the solution \eqref{sol-deformed} is valid with the variable $D_2$ determined explicitly from \eqref{match_Ds-ks} as $D_2=\frac{D_1}{\kappa_2'}\left(\kappa_1'\nc_{\kappa_1}(D_1\tfrac{\theta}{2})+\dc_{\kappa_1}(D_1\tfrac{\theta}{2})\right)$. The trivial holonomy case again gives an $8$-parameter family, which reproduces the trivial holonomy solutions found in \cite{chernodub2022instantons}.

\section{Visualisation and numerical Nahm transform}\label{sec:pictures}

Having derived a general solution of the delayed Nahm equations, we now construct the associated rotating calorons.
We do this in two cases for which the matching conditions \eqref{match_Ds-ks}-\eqref{match-triangle} simplify, namely when $\tau=0$ and $\kappa_1=1$. We note that these are edge cases not explicitly covered by the hypotheses of Theorem \ref{thm.gen-sol}.

By performing a numerical Nahm transform, we can obtain approximations of the curvature (field strength) $F$ of the associated caloron, and using these visualise the solutions by plotting isosurfaces of the Yang--Mills action density $\EE=|F|^2$ at individual time slices. 
We focus in particular on the qualitative structure of the solutions. Heuristically, calorons can be thought of as being made up of either instantons or monopoles (also known as dyons). Our charge 1 rotating calorons correspond to charge 2 calorons and, depending on the choice of parameters, they may appear to be more instanton-like or more monopole-like.  We find that when the parameter $\lambda$ is small, the calorons appear to be made up of instantons which are localised at points in space-time and are small compared with the period $\frac{2\pi}{T}$.  Conversely, for large $\lambda$ they appear to be made up of monopoles, which are localised in space and approximately constant in (Euclidean) time.  This interpretation for $\lambda$ is consistent with what is known about the ADHM construction of instantons, where the magnitude of the vector $L$ controls the size of the instanton.
We take this a step further in Section \ref{sec-parameters} by showing how to extract constituent monopole locations directly from the Nahm data.

Our numerical Nahm transform constructs the field strength tensor at a point $(t,x_1,x_2,x_3)$ as follows.  First, it constructs $\psi$ in equation \eqref{Nahm operator} on each interval using a Runge--Kutta order 4 method.  Then the equation $\Delta_x^\dagger(\psi,\zeta_-,\zeta_+)=0$ becomes a finite system of linear equations for $\zeta_\pm$ and the initial conditions for $\psi$ on each interval.  Solving and normalising as in \eqref{NT normalisation} gives a pair of solutions, from which the gauge field can be constructed using \eqref{NT gauge field}, with derivatives approximated by finite differences.  Doing this at all points $(t,x_1,x_2,x_3)$ in a lattice allows us to reconstruct the field strength tensor $F_{\mu\nu}$.  We numerically check that the field strength tensor is anti-self-dual, as it should be.

For all numerically-generated solutions in what follows we have fixed $\theta=2\pi$.  So our plotted solutions satisfy \eqref{twisted-gf} with $\frac{2\pi}{T}=\frac{1}{2}$, and they are therefore invariant under $(t,\vec{x})\mapsto (t+1,\vec{x})$.  We have also chosen $\mu=\nu=\tfrac{\theta}{4}$.  This choice gives the constituent monopoles of the charge 2 caloron equal mass, and means that the asymptotic holonomy of the charge 2 caloron has eigenvalues $\pm \ii$ (the asymptotic holonomy of the charge 1 rotating caloron therefore has eigenvalues $(1\pm \ii)/\sqrt{2}$). In each case we plot examples at various fixed time slices (like stills from a movie), and observe as expected that each $0.5$ shift in $t$ is related by a $\pi$ rotation around the vertical $x_3$-axis.

\subsection{The case \texorpdfstring{$\tau=0$}{tau=0}}
Because of the constraint \eqref{match-triangle} the case $\tau=0$ is only valid when $\psi_1=\psi_3+\pi$ and $\ell_1=\ell_2$. Using elliptic function identities and the constraint \eqref{D-intervals}, one may show this and the other matching equation \eqref{match_Ds-ks} are equivalent to
\begin{align}\label{tau-0-eqns}
\begin{aligned}
    \kappa_1'\nd_{\kappa_1}(D_1\nu)&=\sn_{\kappa_2}(D_2\mu),\\
    D_2\kappa_2'&=D_1\kappa_1.
\end{aligned}
\end{align}
We simplify the solution ansatz \eqref{sol-deformed}--\eqref{U-W-param-sol} as follows. We can use a gauge transformation to fix $\psi_+=0$. The parameters $\xi\in\C$, $\psi_3\in[0,2\pi]$ and $\sigma_-\in[0,2\pi]$ represent an overall translation in the $(x_3,t)$-plane, an overall rotation around the $x_3$ axis, and a global phase, so we can ignore these here and set them to $0$. $D_1,D_2$ are determined by the above equations \eqref{tau-0-eqns}. This leaves two true physical parameters, $\kappa_1,\kappa_2$, to vary and solve \eqref{tau-0-eqns}. This simplifies the implicit problem, however note that because we fixed the constraint $\tau=0$, we cannot use Theorem \ref{thm.gen-sol} to guarantee existence of solutions for all $(\kappa_1,\kappa_2)$.

We shall assume $\kappa_2<1$ as we know that $\kappa_2=1$ gives the constant solutions (see Lemma \ref{lem.D2-positive}). It is clear that \eqref{tau-0-eqns} admit no solutions satisfying \eqref{D-intervals} when $\kappa_1\in\{0,1\}$, so we also assume $0<\kappa_1<1$. Thus we can use the linear constraint of \eqref{tau-0-eqns} to fix $D_2=D_1\kappa_1/\kappa_2'$, and it remains to solve a single nonlinear equation for $D_1$. This takes the form
\begin{align}
    \kappa_1'\nd_{\kappa_1}(x)-\sn_{\kappa_2}(\rho x)=0,\label{tau_0-single-eq}
\end{align}
where $x=D_1\nu$ and $\rho=\frac{\kappa_1\mu}{\kappa_2'\nu}$. We seek a solution which adheres to the constraints \eqref{D-intervals} (with $D_1\neq0$), which requires $0<x<\min\{K(\kappa_1),K(\kappa_2)/\rho\}$. We'll call these admissible solutions.

\begin{lemma}
    \eqref{tau_0-single-eq} has an admissible solution if $K(\kappa_2)/\rho<K(\kappa_1)$.
\end{lemma}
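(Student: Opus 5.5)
Consider the continuous function $h(x)=\kappa_1'\nd_{\kappa_1}(x)-\sn_{\kappa_2}(\rho x)$ on $[0,x_*]$, where $x_*=K(\kappa_2)/\rho$. The hypothesis $K(\kappa_2)/\rho<K(\kappa_1)$ says that $x_*$ is the smaller of the two endpoints, so the admissible range is $(0,x_*)$. We will also have $x_*<K(\kappa_1)$, which guarantees that $h$ is finite and continuous on the closed interval $[0,x_*]$: $\nd_{\kappa_1}=1/\dn_{\kappa_1}$ has no real poles anyway, and $\sn$ is entire on the real line. The plan is simply the intermediate value theorem, after checking the signs of $h$ at the two endpoints.

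\textbf{Left endpoint.} At $x=0$ we have $h(0)=\kappa_1'\nd_{\kappa_1}(0)-\sn_{\kappa_2}(0)=\kappa_1'>0$, since $0<\kappa_1<1$.

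\textbf{Right endpoint.} At $x=x_*$ we have $\sn_{\kappa_2}(\rho x_*)=\sn_{\kappa_2}(K(\kappa_2))=1$. Also $\dn_{\kappa_1}$ is strictly decreasing on $[0,K(\kappa_1)]$ from $1$ to $\kappa_1'$, so for $0<x_*<K(\kappa_1)$ we get $\dn_{\kappa_1}(x_*)>\kappa_1'$. Hence $\kappa_1'\nd_{\kappa_1}(x_*)<1$, and so $h(x_*)<0$.

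\textbf{Conclusion.} The intermediate value theorem now gives a zero $x\in(0,x_*)$. This lies strictly inside $(0,\min\{K(\kappa_1),K(\kappa_2)/\rho\})$, so it is an admissible solution, and it yields $D_1=x/\nu\neq0$ and $D_2=D_1\kappa_1/\kappa_2'$.

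The only delicate point is the strict inequality at the right endpoint. It uses the strict monotonicity of $\dn_{\kappa_1}$ on $(0,K(\kappa_1))$, which follows from $\dn'=-\kappa^2\sn\cn<0$ there together with $\kappa_1>0$. It also uses the hypothesis in its strict form, so that $x_*$ does not reach $K(\kappa_1)$. If equality $x_*=K(\kappa_1)$ were allowed, then $h(x_*)=0$ at an endpoint, which is not admissible. This explains why the lemma's hypothesis is a strict inequality.
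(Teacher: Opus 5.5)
Your proof is correct and follows the paper's argument: it evaluates $H(x)=\kappa_1'\nd_{\kappa_1}(x)-\sn_{\kappa_2}(\rho x)$ at $x=0$ and $x=K(\kappa_2)/\rho$, and applies the intermediate value theorem. The paper gets the right-endpoint sign from the same bound $\kappa_1'<\dn_{\kappa_1}(u)<1$ on $(0,K(\kappa_1))$, which you justify via the monotonicity of $\dn$.
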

\begin{proof}
    Let $H(x)=\kappa_1'\nd_{\kappa_1}(x)-\sn_{\kappa_2}(\rho x)$. Then $H(0)=\kappa_1'>0$, and $H(K(\kappa_2)/\rho))=\kappa_1'\nd_{\kappa_1}(K(\kappa_2)/\rho)-1<0$ as $\kappa'<\dn_\kappa(u)<1$ for all $u\in(0,K(\kappa))$. $H$ is clearly continuous, so there exists $x\in(0,K(\kappa_2)/\rho)$ with $H(x)=0$ by the intermediate value theorem.
\end{proof}
As discussed above, the scale parameter $\lambda$ in \eqref{lam-constraint} indicates the regions where the caloron solution is more monopole-like or instanton-like. We probe this for these admissible solutions by numerically solving \eqref{tau_0-single-eq} in the case $\mu=\nu=\tfrac{\pi}{2}$ and computing $\lambda$ as a function of $(\kappa_1,\kappa_2)$ in the region $K(\kappa_2)/\rho<K(\kappa_1)$. This we plot in Figure \ref{fig:lambda-map}. This illustrates how the scale is impacted by proximity to the curve $K(\kappa_1)=K(\kappa_2)/\rho$, with the large scale monopole-like regions being those values near this curve, and lower scale instanton-like regions away from the curve. This matches the following analytic observation: for $(\kappa_1,\kappa_2)$ on this curve, the equation \eqref{tau_0-single-eq} is solved by $x=K(\kappa_1)$, in which case the scale \eqref{lam-constraint} diverges.

\begin{figure}
    \centering
    \includegraphics[width=0.8\linewidth]{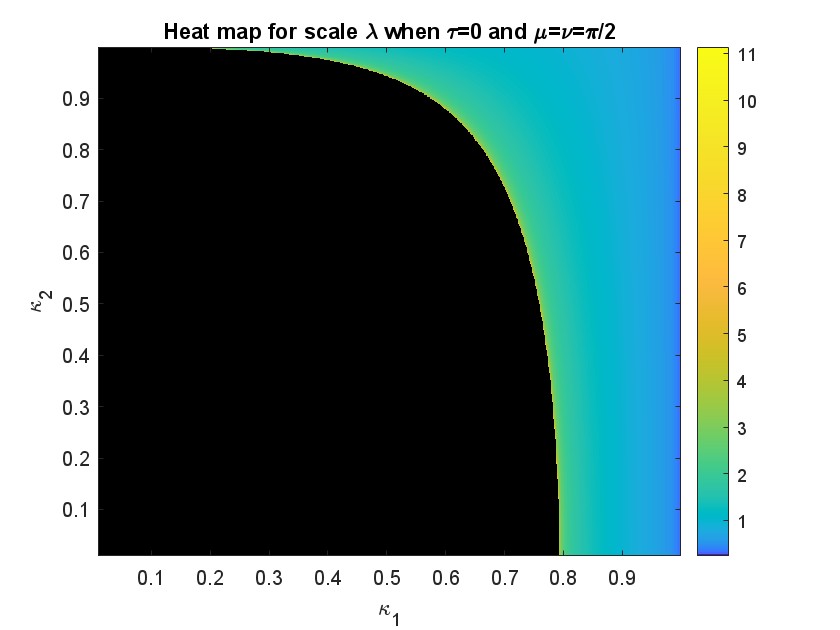}
    \caption{The scale $\lambda$ for the rotating caloron solutions with $\tau=0$ as a function of $(\kappa_1,\kappa_2)$ in the case $\theta=2\pi$ and $\mu=\nu=\tfrac{\pi}{2}$. The black region indicates $K(\kappa_1)\leq K(\kappa_2)/\rho$ where we did not seek solutions.}
    \label{fig:lambda-map}
\end{figure}

We visualise the monopole-like regime by numerically constructing the calorons for $(\kappa_1,\kappa_2)$ along the curve $K(\kappa_1)-K(\kappa_2)/\rho=1/30$, i.e. close to the critical curve where the scale diverges. Plots of this family taken as action density isosurfaces for the fixed time slices $t=0,0.225,0.5$ and $0.725$ are given in Figure \ref{fig:monopole-like-family-tau0}. We interpret these images schematically as follows. A charge 2 caloron is made up of two constituent 2-monopoles. For $\kappa_2\approx 1$ (and thus $\kappa_1\approx0$) one constituent forms a toroidal $2$-monopole in the $(x_1,x_2)$-plane, and the other constituent forms a pair of separated $1$-monopoles on the positive $x_3$-axis. As $\kappa_2$ decreases towards $0$ (and thus $\kappa_1$ increases towards $1$) the toroidal 2-monopole splits into two $1$-monopoles along the $x_1$-axis, and the separated constituent pair coalesce into a torus located on the positive $x_3$-axis and oriented with the axis of symmetry parallel to the $x_2$-axis. These observations may be replicated analytically via the Nahm data, which we discuss in Section \ref{sec-parameters}.

\begin{figure}
    \centering
    \begin{subfigure}[c]{0.24\textwidth}
         \centering
         \includegraphics[trim= 70 15 70 40,clip,width=\textwidth]{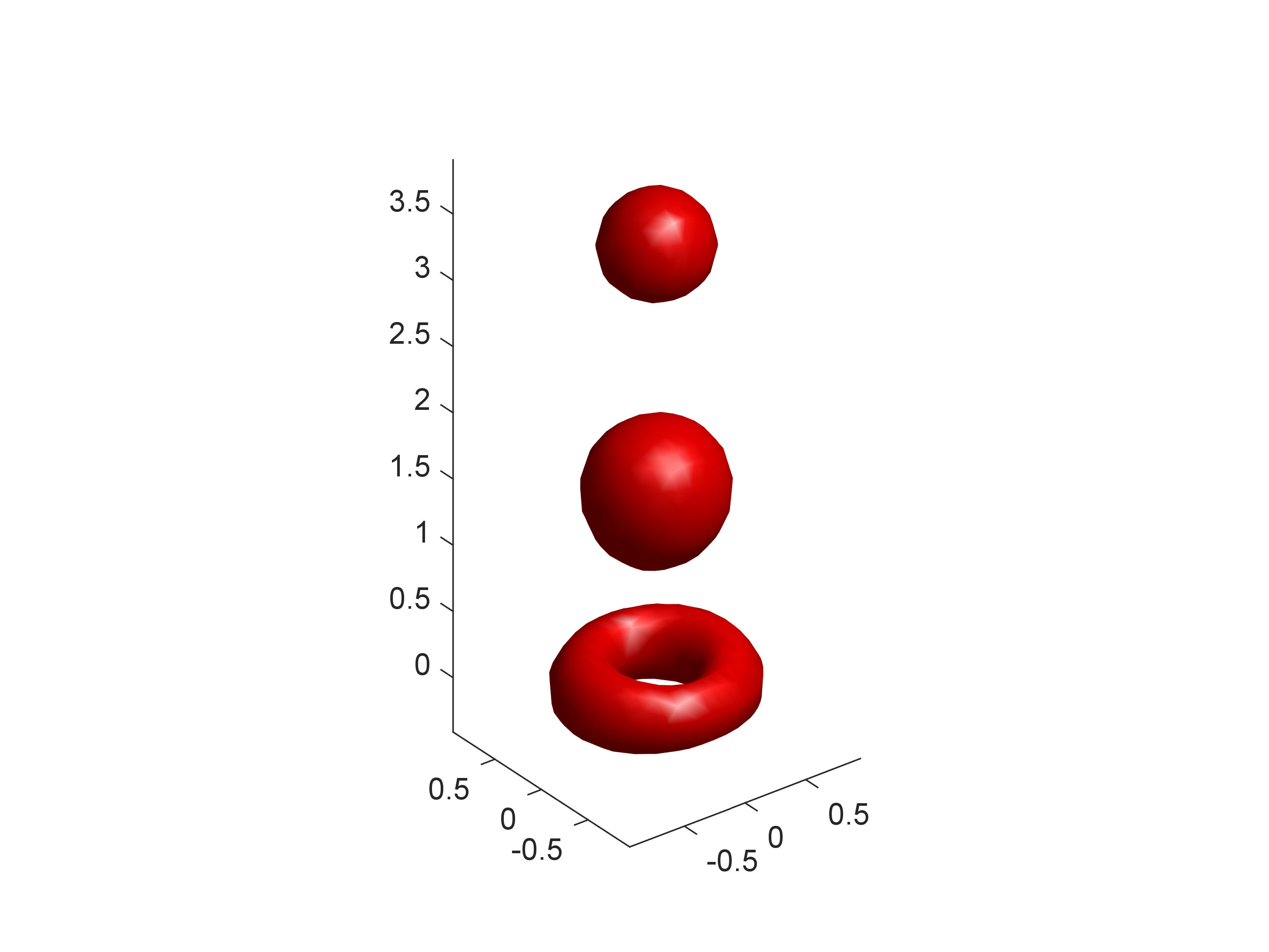}
         \caption{}
         \label{subfig:tau0-mon t=0,k2=0.99}
     \end{subfigure}
     \begin{subfigure}[c]{0.24\textwidth}
         \centering
         \includegraphics[trim= 70 15 70 40,clip,width=\textwidth]{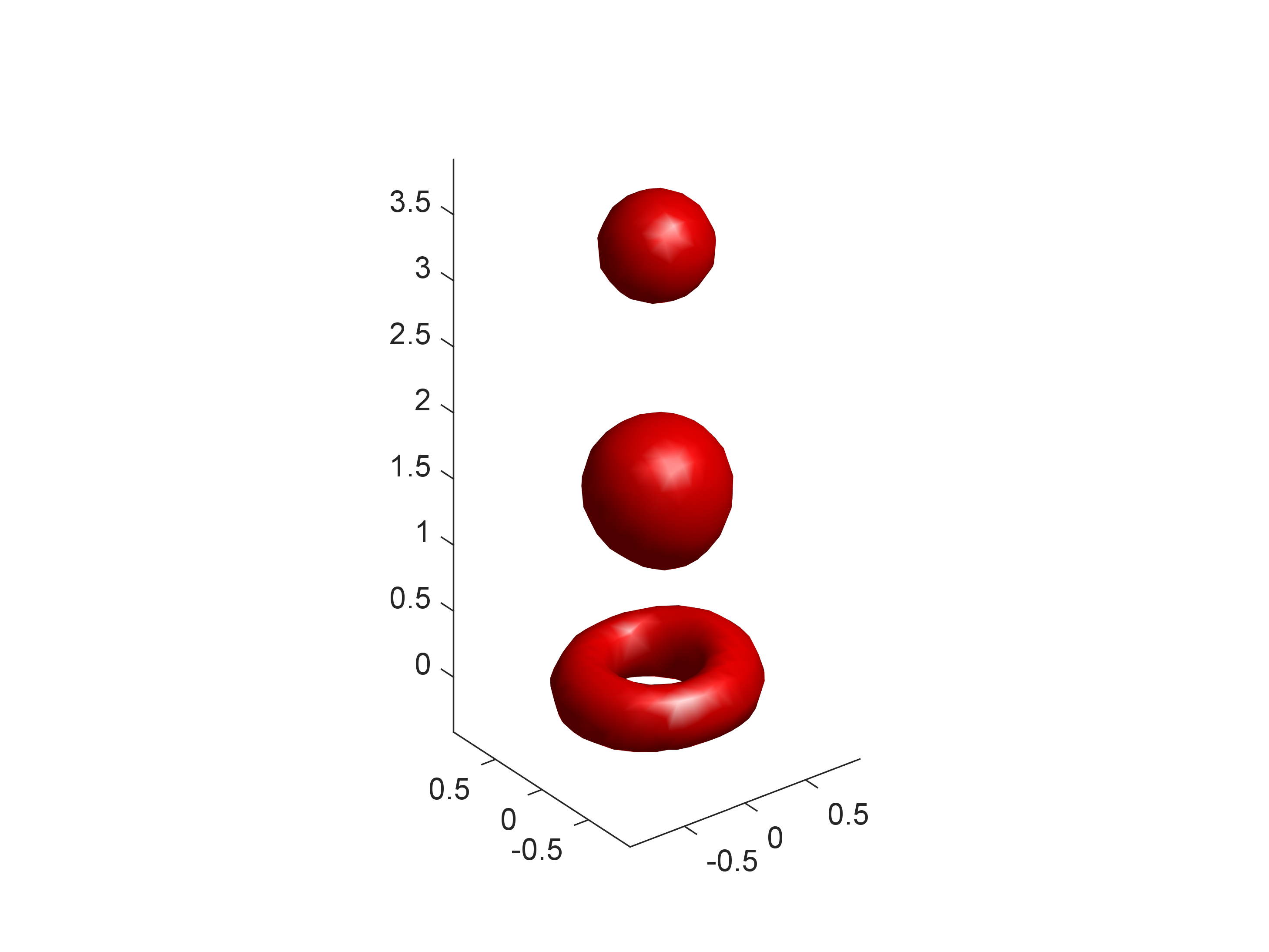}
         \caption{}
         \label{subfig:tau0-mon t=0.225,k2=0.99}
     \end{subfigure}
     \begin{subfigure}[c]{0.24\textwidth}
         \centering
         \includegraphics[trim= 70 15 70 40,clip,width=\textwidth]{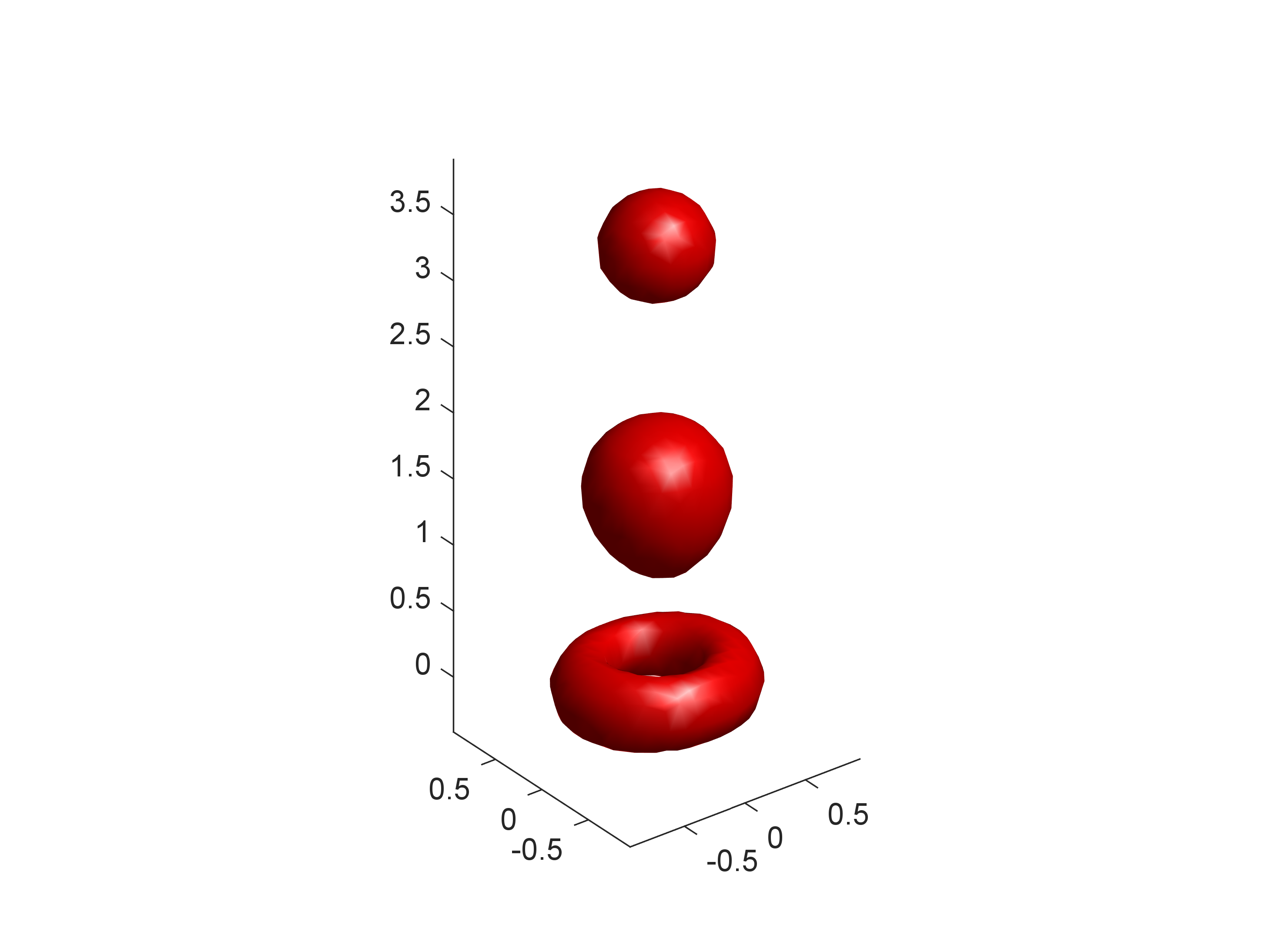}
         \caption{}
         \label{subfig:tau0-mon t=0.5,k2=0.99}
     \end{subfigure}
     \begin{subfigure}[c]{0.24\textwidth}
         \centering
         \includegraphics[trim= 70 15 70 40,clip,width=\textwidth]{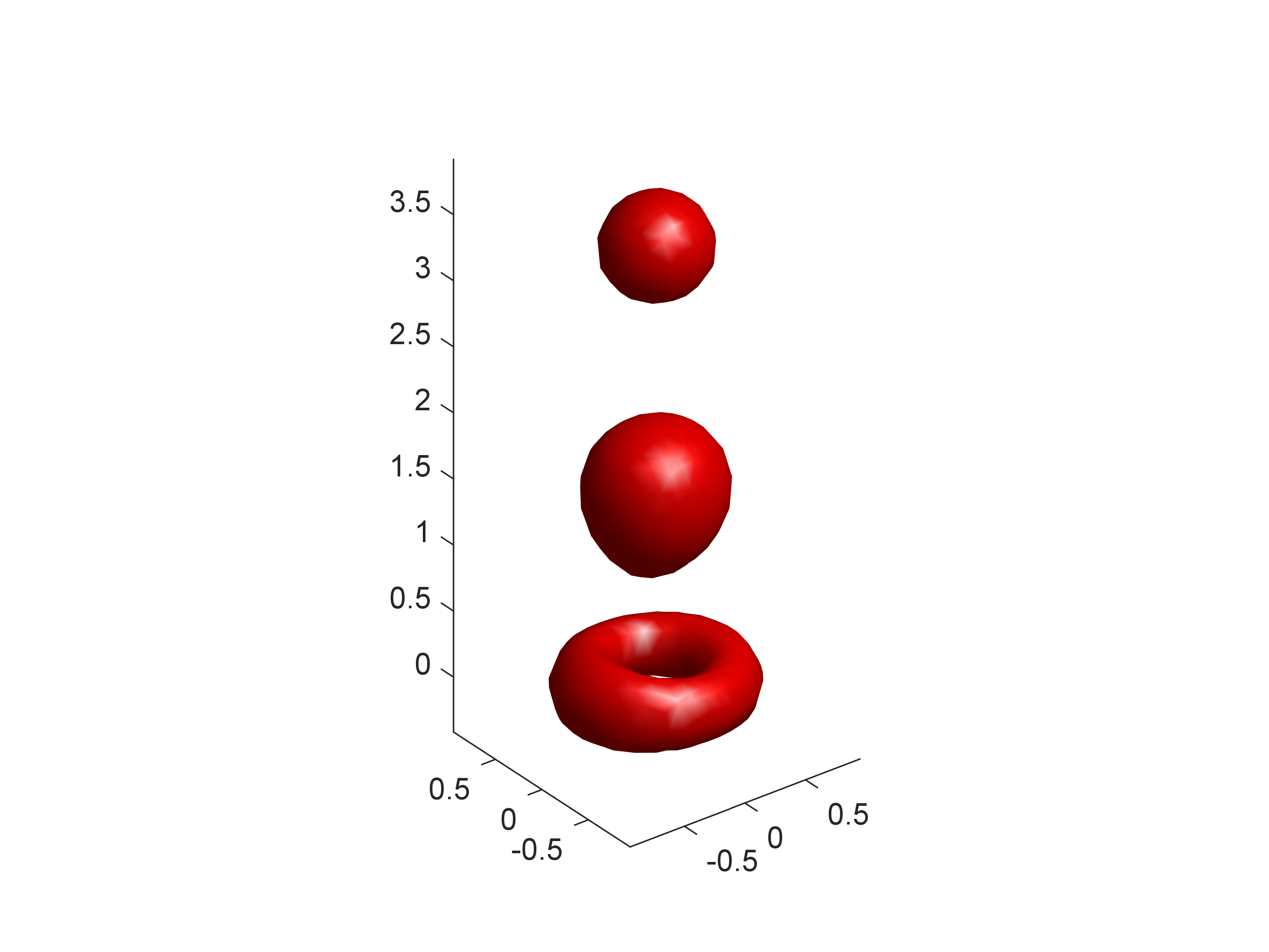}
         \caption{}
         \label{subfig:tau0-mon t=0.725,k2=0.99}
     \end{subfigure}
          \begin{subfigure}[c]{0.24\textwidth}
         \centering
         \includegraphics[trim= 70 15 70 40,clip,width=\textwidth]{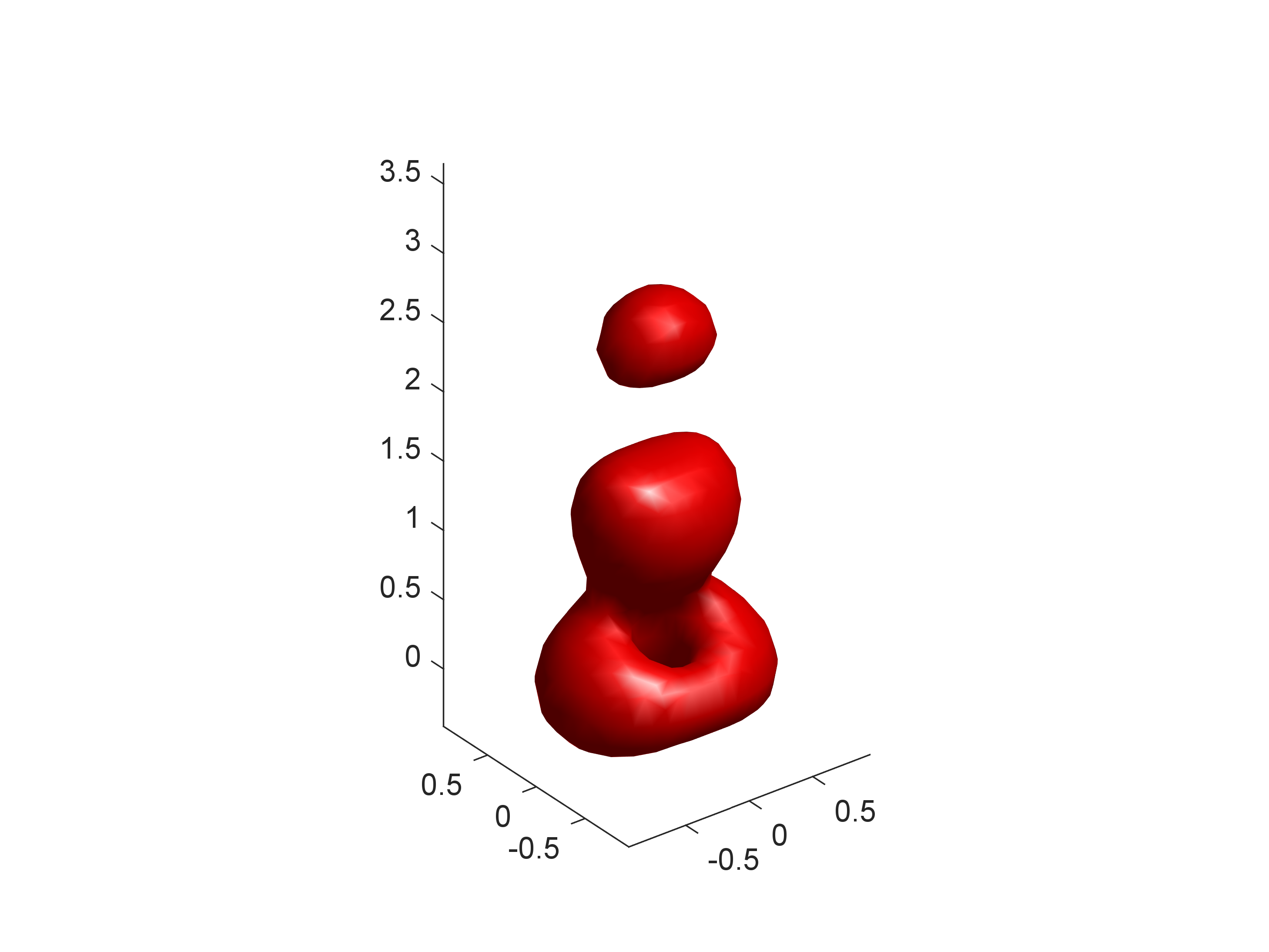}
         \caption{}
         \label{subfig:tau0-mon t=0,k2=0.9}
     \end{subfigure}
     \begin{subfigure}[c]{0.24\textwidth}
         \centering
         \includegraphics[trim= 70 15 70 40,clip,width=\textwidth]{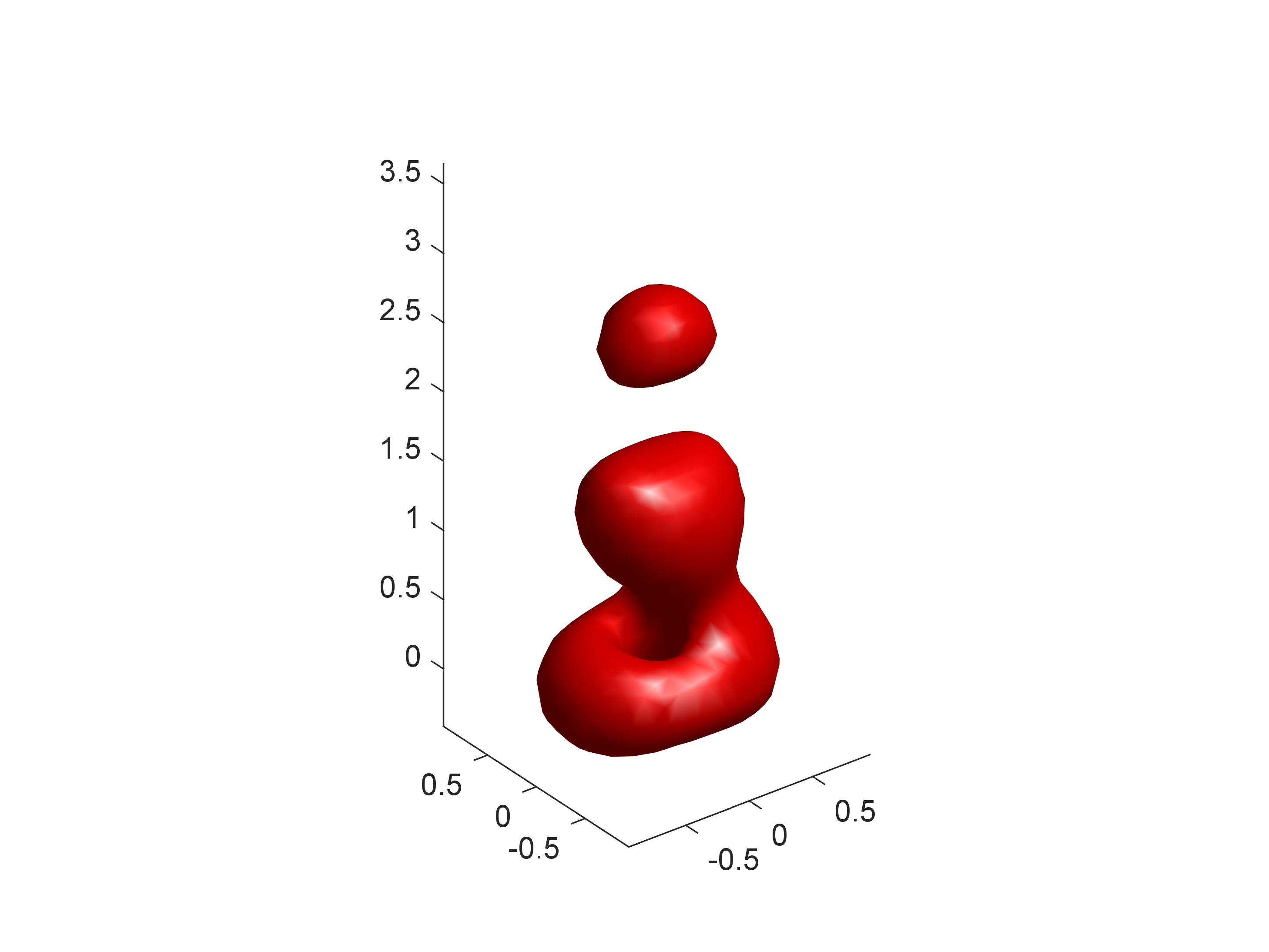}
         \caption{}
         \label{subfig:tau0-mon t=0.225,k2=0.9}
     \end{subfigure}
     \begin{subfigure}[c]{0.24\textwidth}
         \centering
         \includegraphics[trim= 70 15 70 40,clip,width=\textwidth]{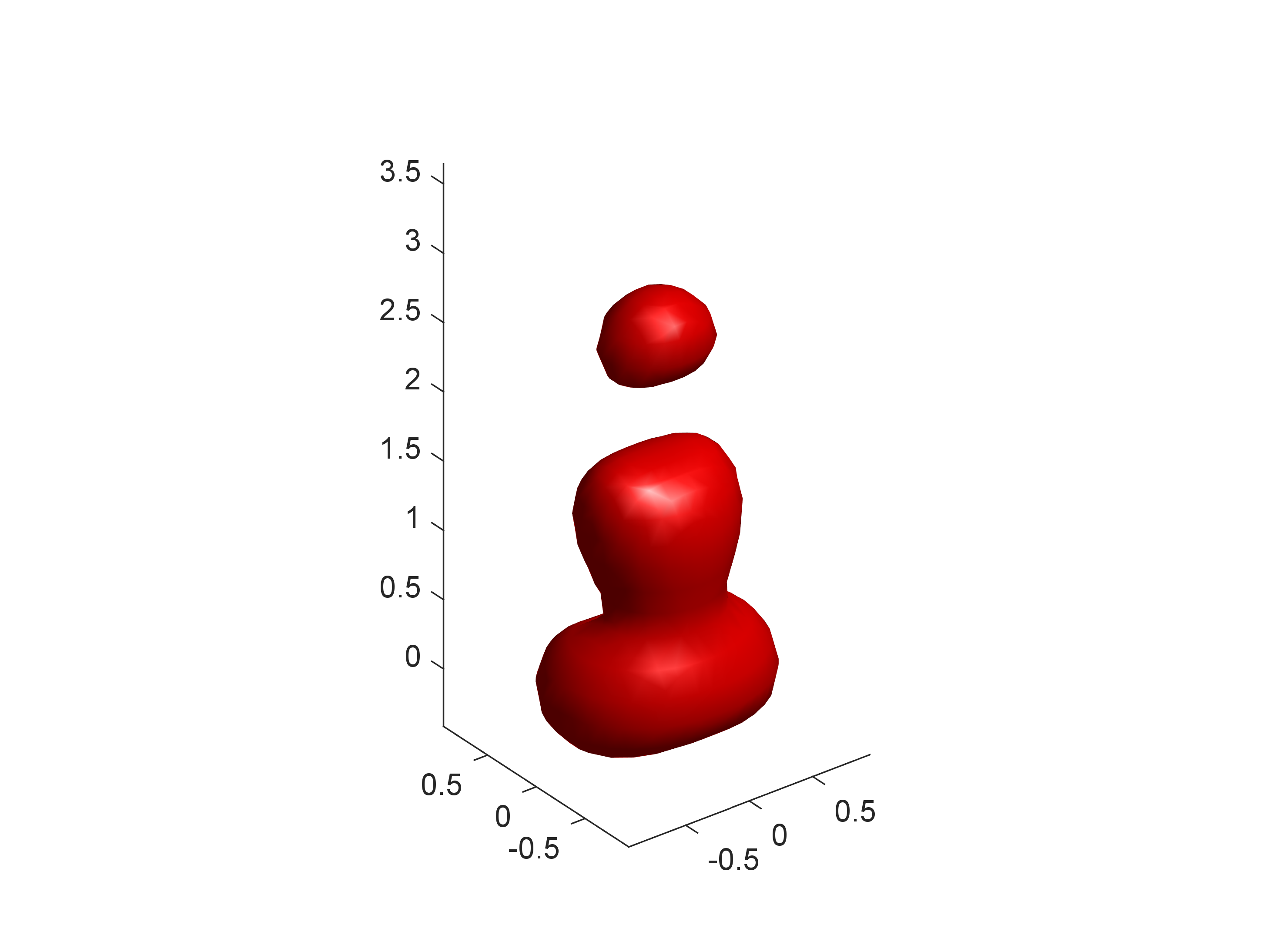}
         \caption{}
         \label{subfig:tau0-mon t=0.5,k2=0.9}
     \end{subfigure}
     \begin{subfigure}[c]{0.24\textwidth}
         \centering
         \includegraphics[trim= 70 15 70 40,clip,width=\textwidth]{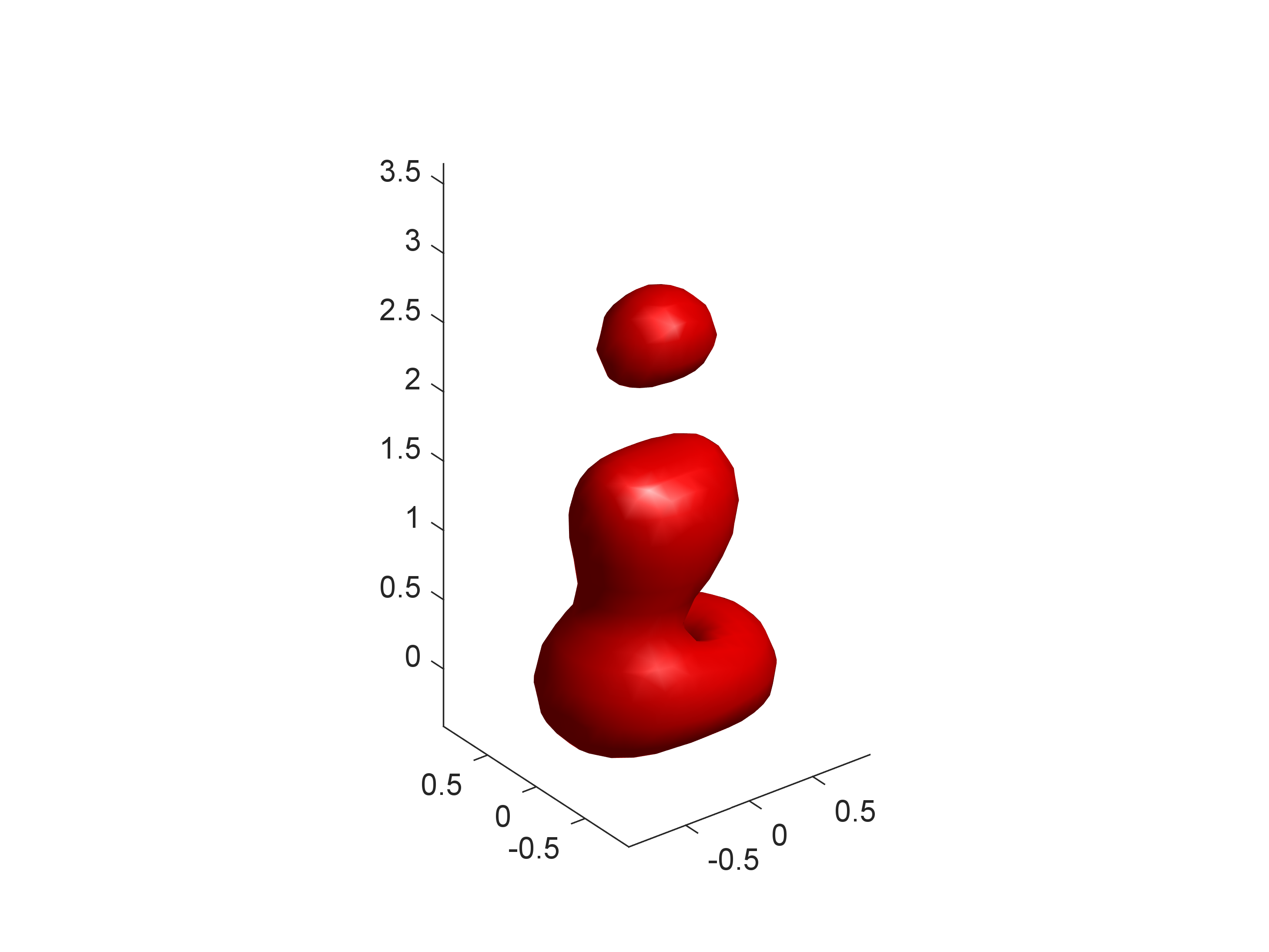}
         \caption{}
         \label{subfig:tau0-mon t=0.725,k2=0.9}
     \end{subfigure}
          \begin{subfigure}[c]{0.24\textwidth}
         \centering
         \includegraphics[trim= 70 15 70 40,clip,width=\textwidth]{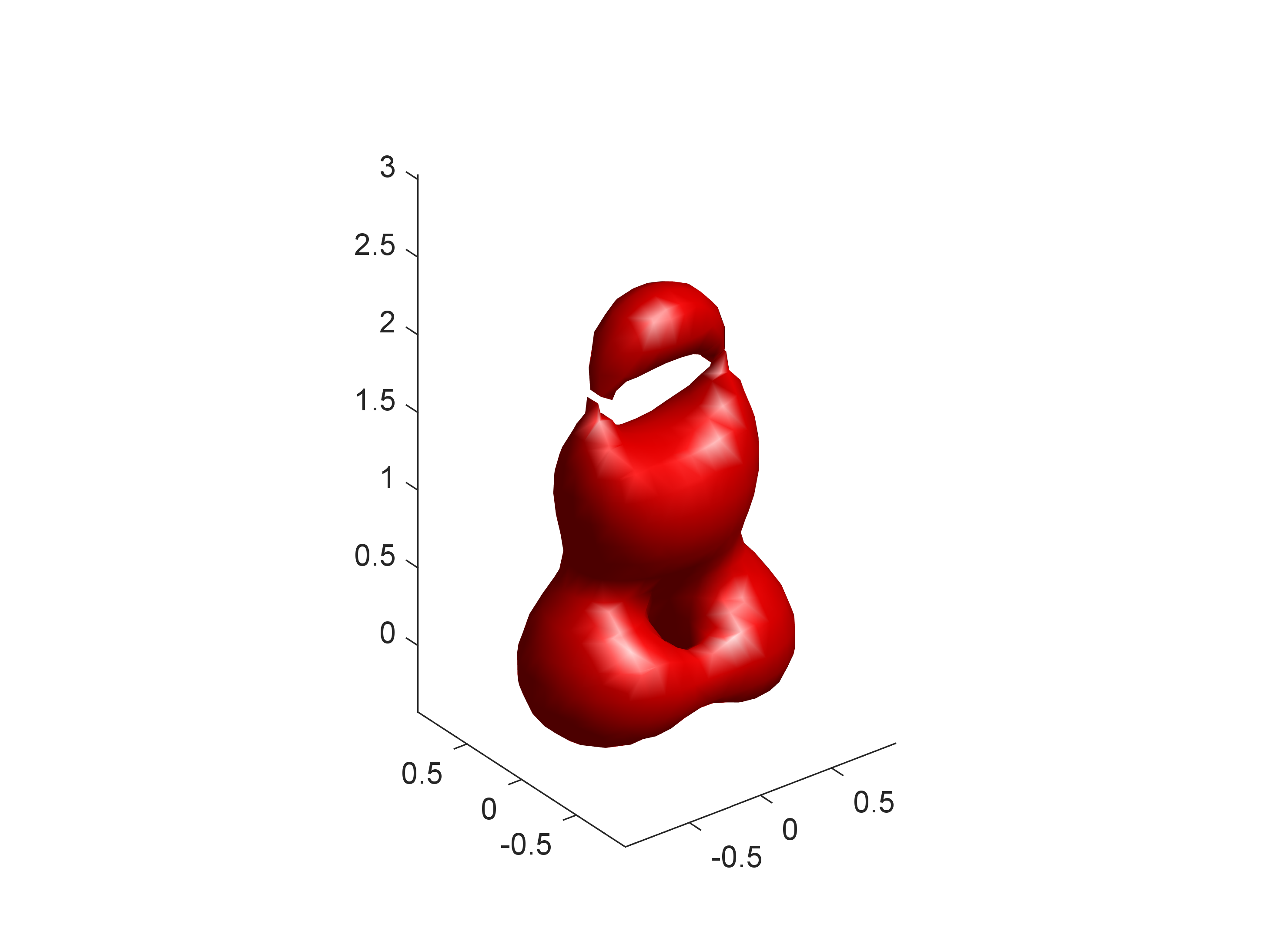}
         \caption{}
         \label{subfig:tau0-mon t=0,k2=0.7}
     \end{subfigure}
     \begin{subfigure}[c]{0.24\textwidth}
         \centering
         \includegraphics[trim= 70 15 70 40,clip,width=\textwidth]{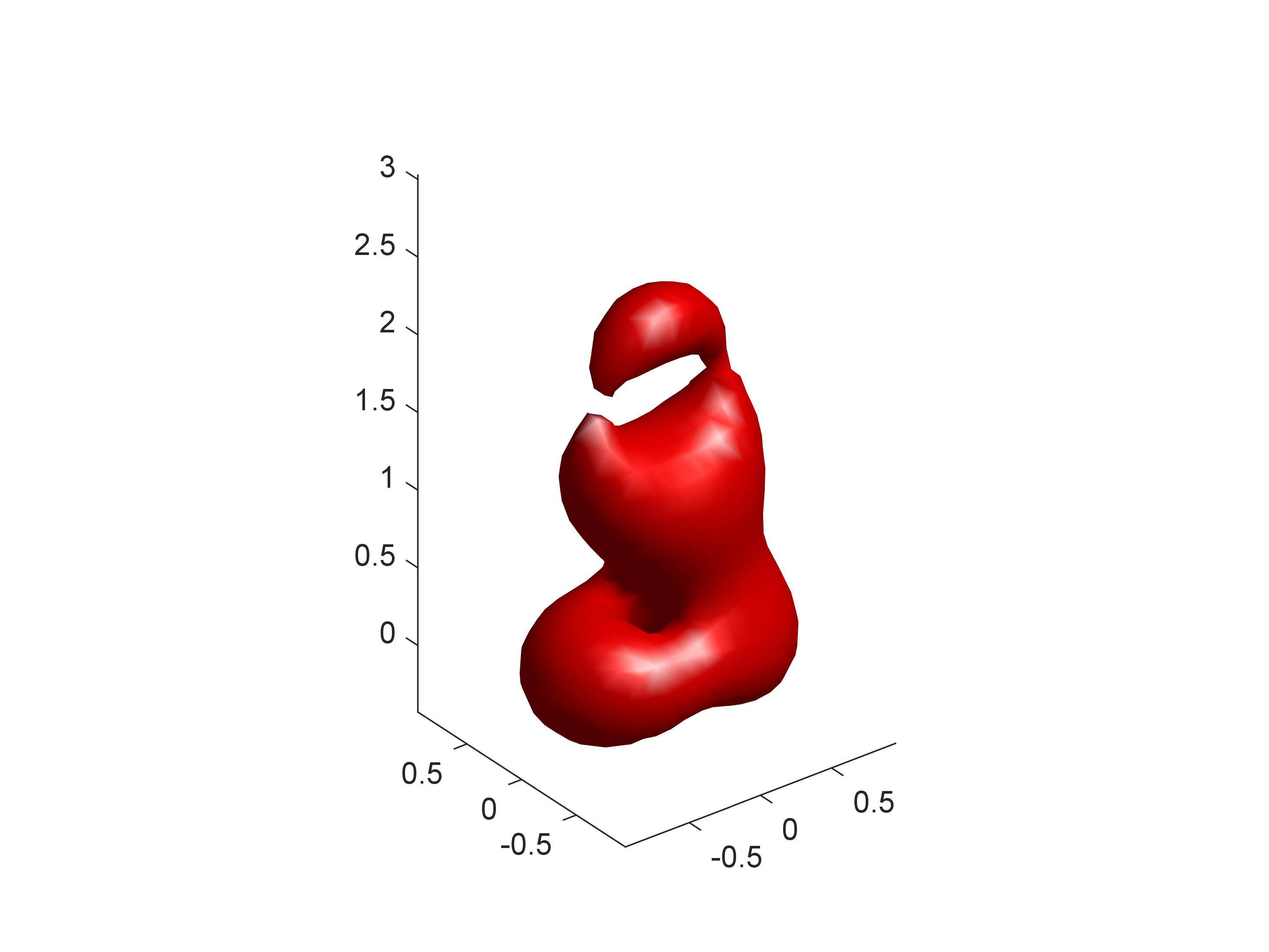}
         \caption{}
         \label{subfig:tau0-mon t=0.225,k2=0.7}
     \end{subfigure}
     \begin{subfigure}[c]{0.24\textwidth}
         \centering
         \includegraphics[trim= 70 15 70 40,clip,width=\textwidth]{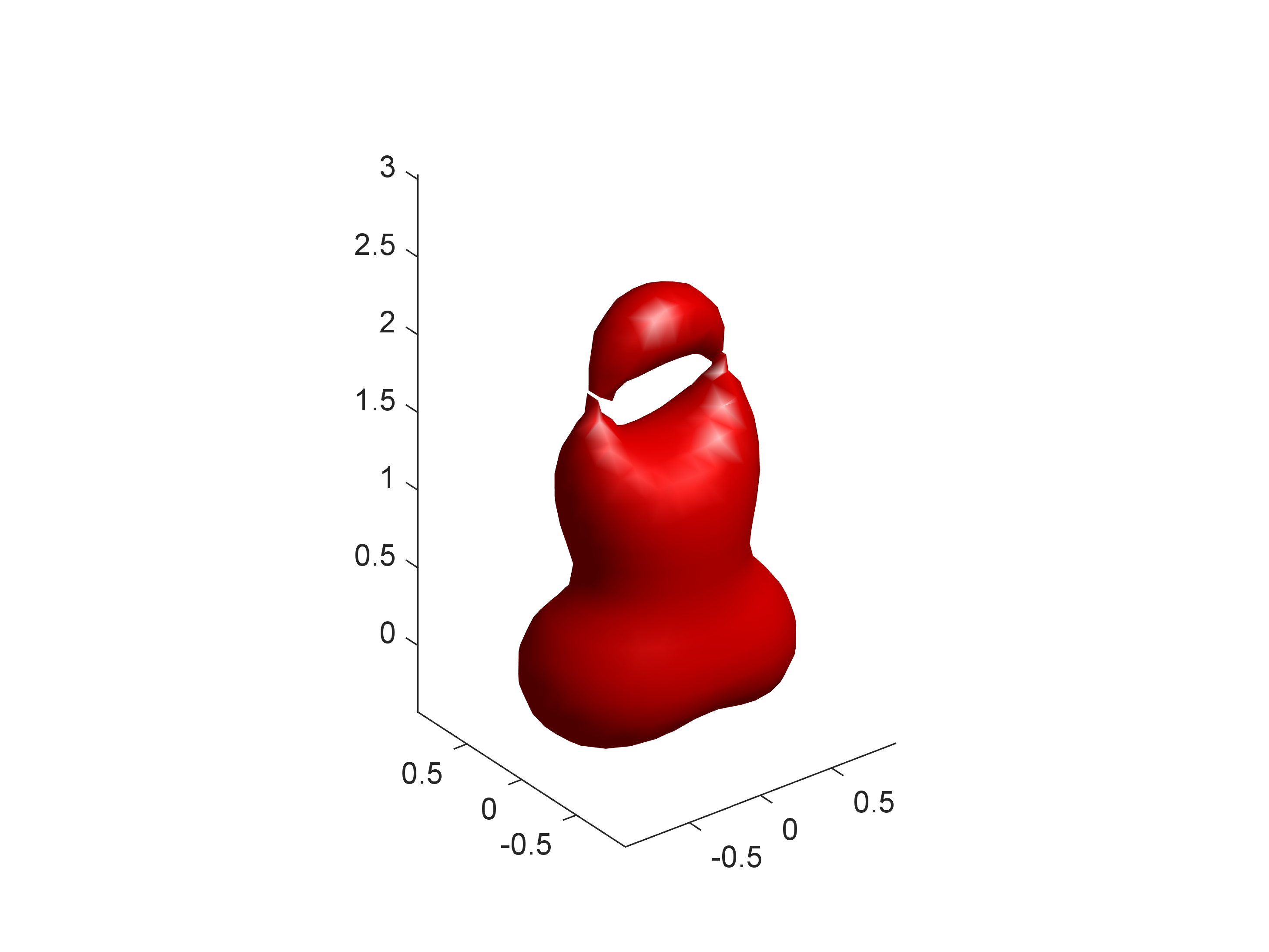}
         \caption{}
         \label{subfig:tau0-mon t=0.5,k2=0.7}
     \end{subfigure}
     \begin{subfigure}[c]{0.24\textwidth}
         \centering
         \includegraphics[trim= 70 15 70 40,clip,width=\textwidth]{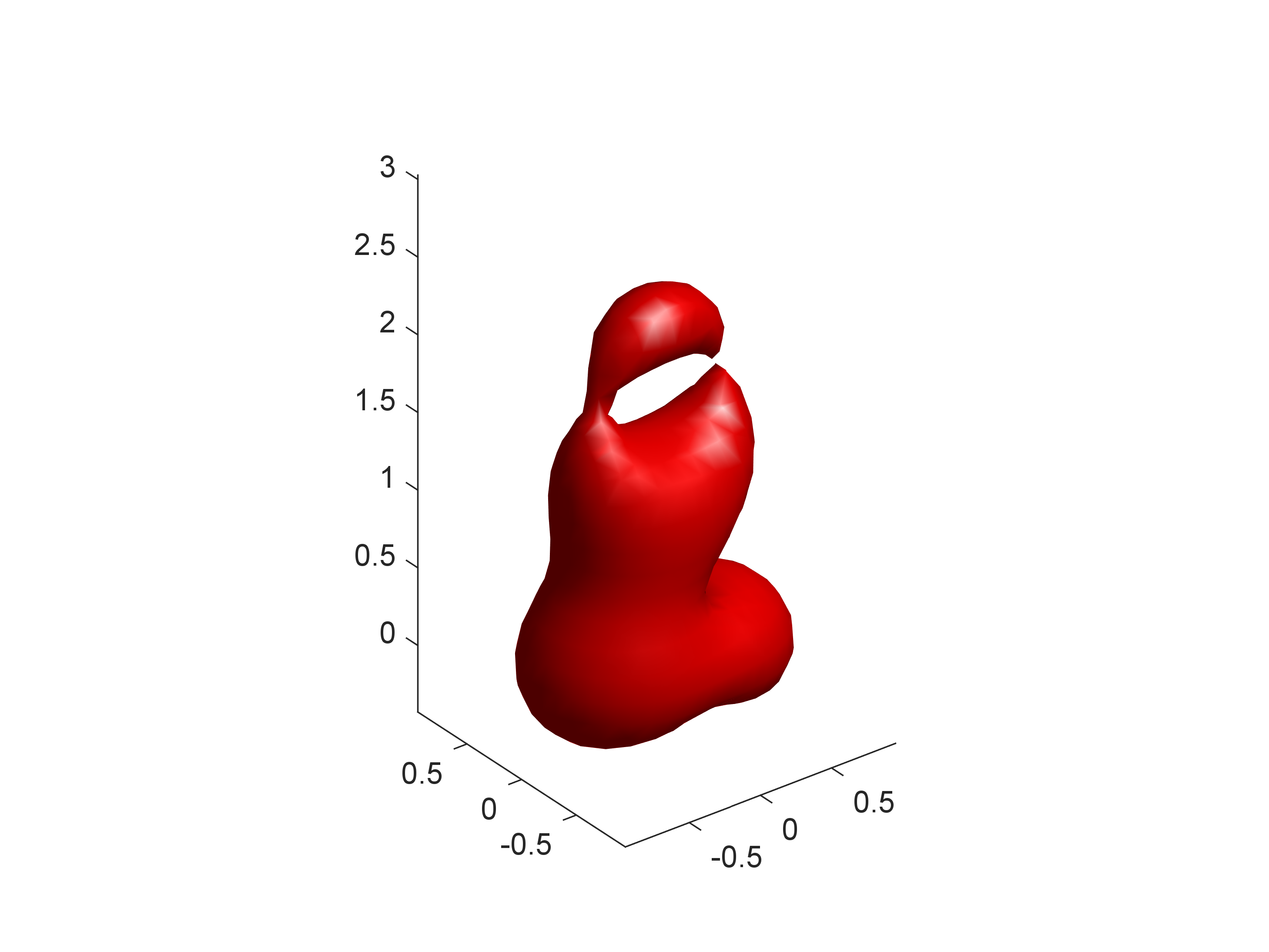}
         \caption{}
         \label{subfig:tau0-mon t=0.725,k2=0.7}
     \end{subfigure}
          \begin{subfigure}[c]{0.24\textwidth}
         \centering
         \includegraphics[trim= 70 15 70 40,clip,width=\textwidth]{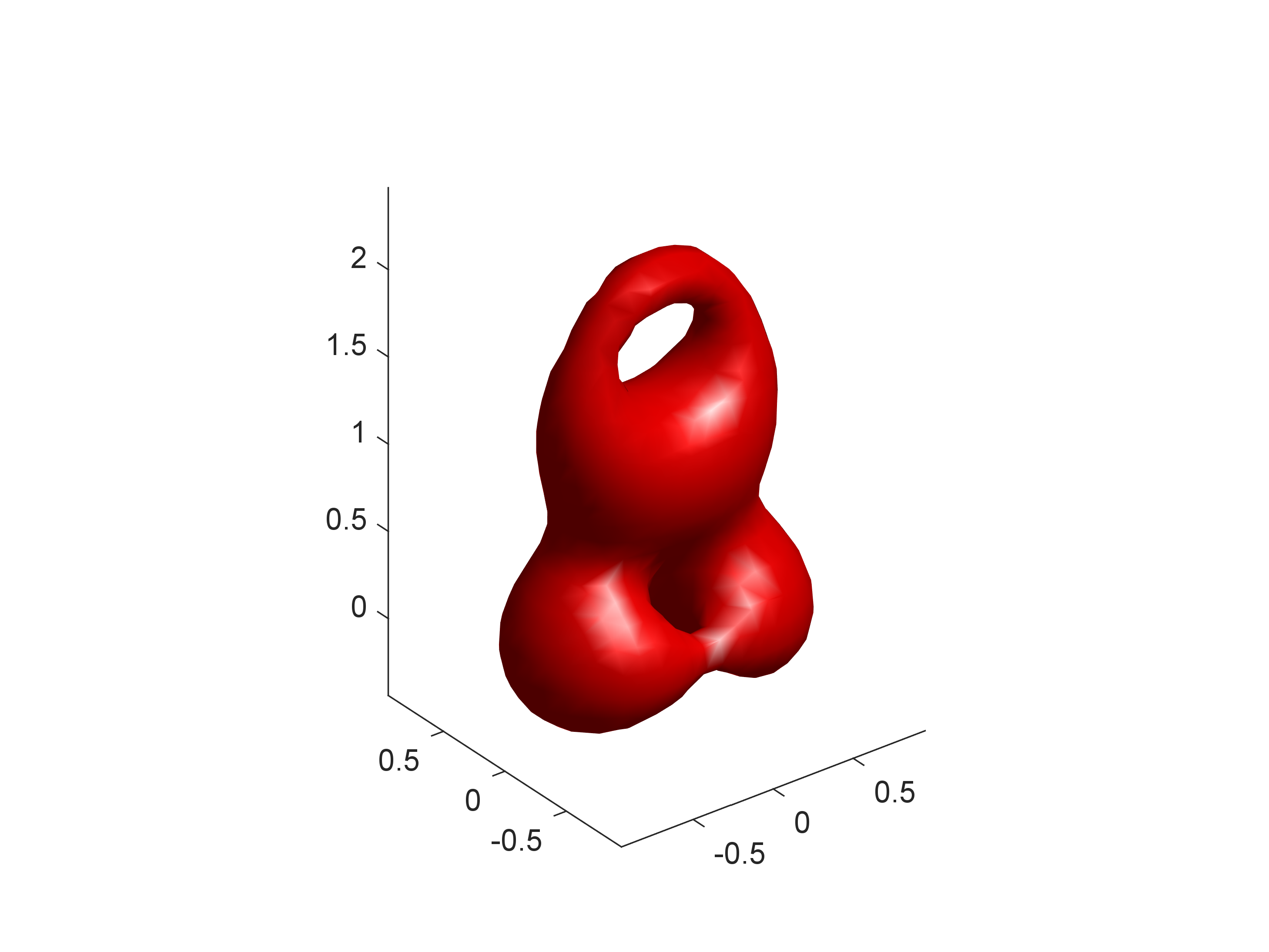}
         \caption{}
         \label{subfig:tau0-mon t=0,k2=0.5}
     \end{subfigure}
     \begin{subfigure}[c]{0.24\textwidth}
         \centering
         \includegraphics[trim= 70 15 70 40,clip,width=\textwidth]{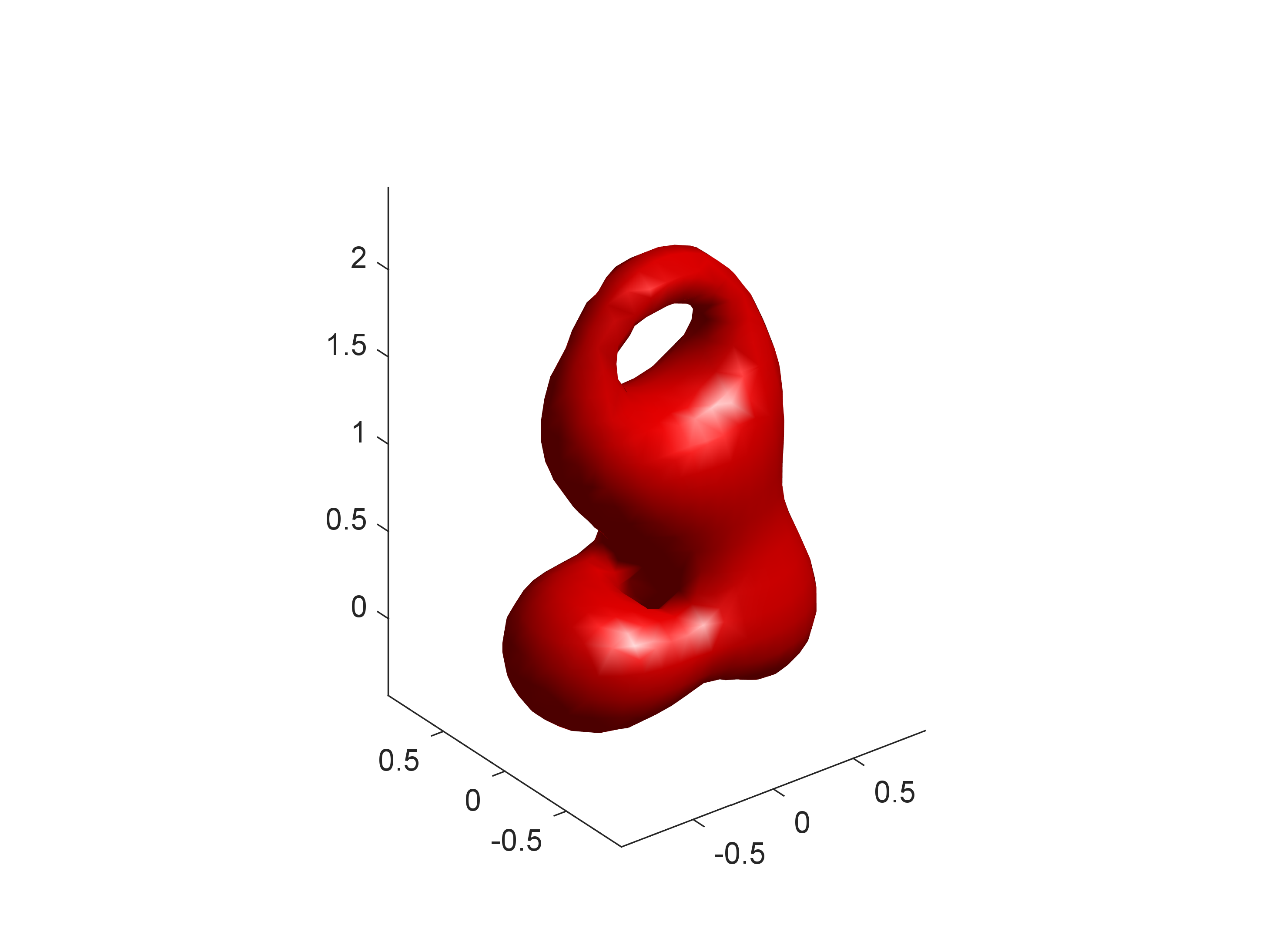}
         \caption{}
         \label{subfig:tau0-mon t=0.225,k2=0.5}
     \end{subfigure}
     \begin{subfigure}[c]{0.24\textwidth}
         \centering
         \includegraphics[trim= 70 15 70 40,clip,width=\textwidth]{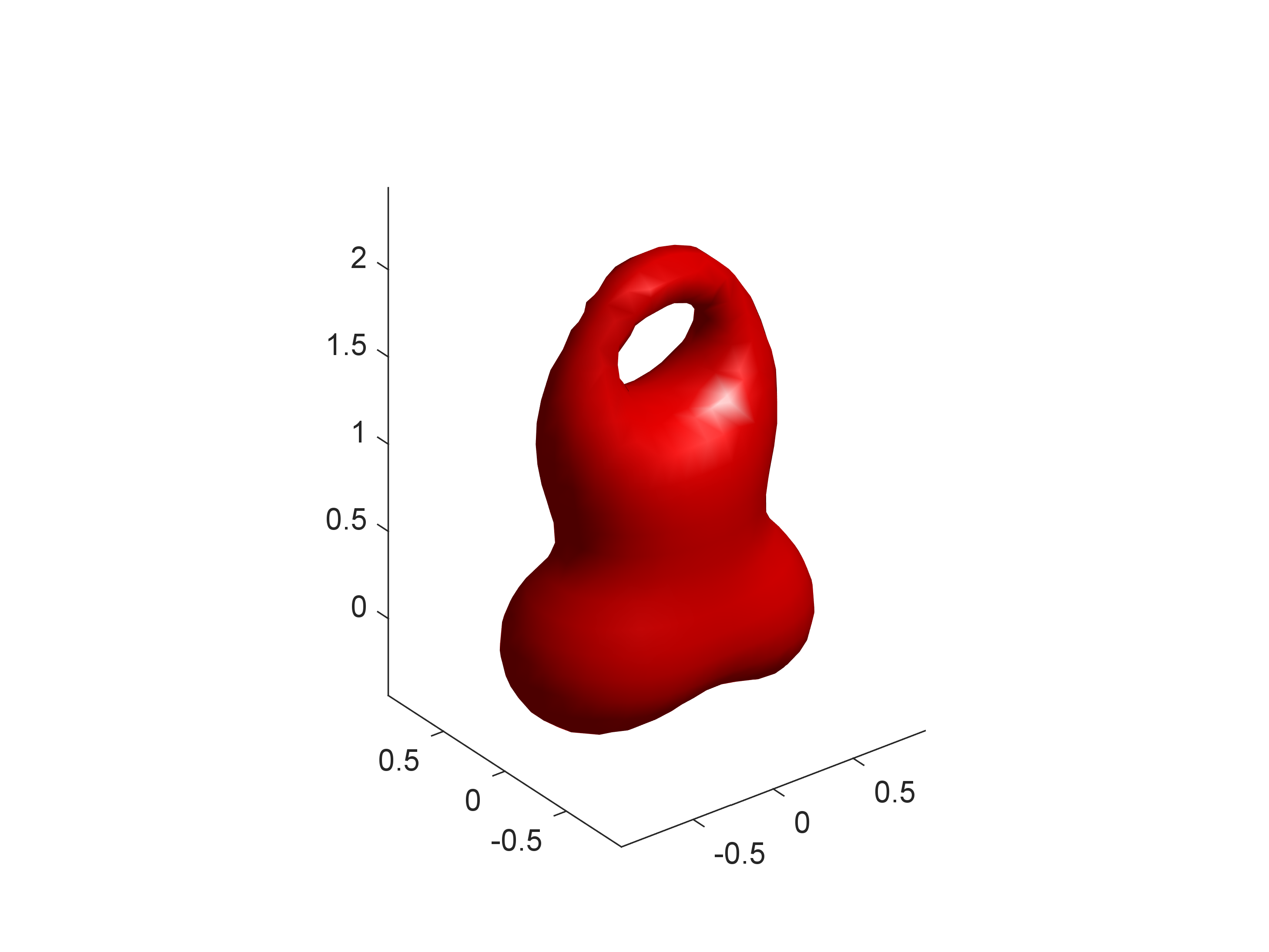}
         \caption{}
         \label{subfig:tau0-mon t=0.5,k2=0.5}
     \end{subfigure}
     \begin{subfigure}[c]{0.24\textwidth}
         \centering
         \includegraphics[trim= 70 15 70 40,clip,width=\textwidth]{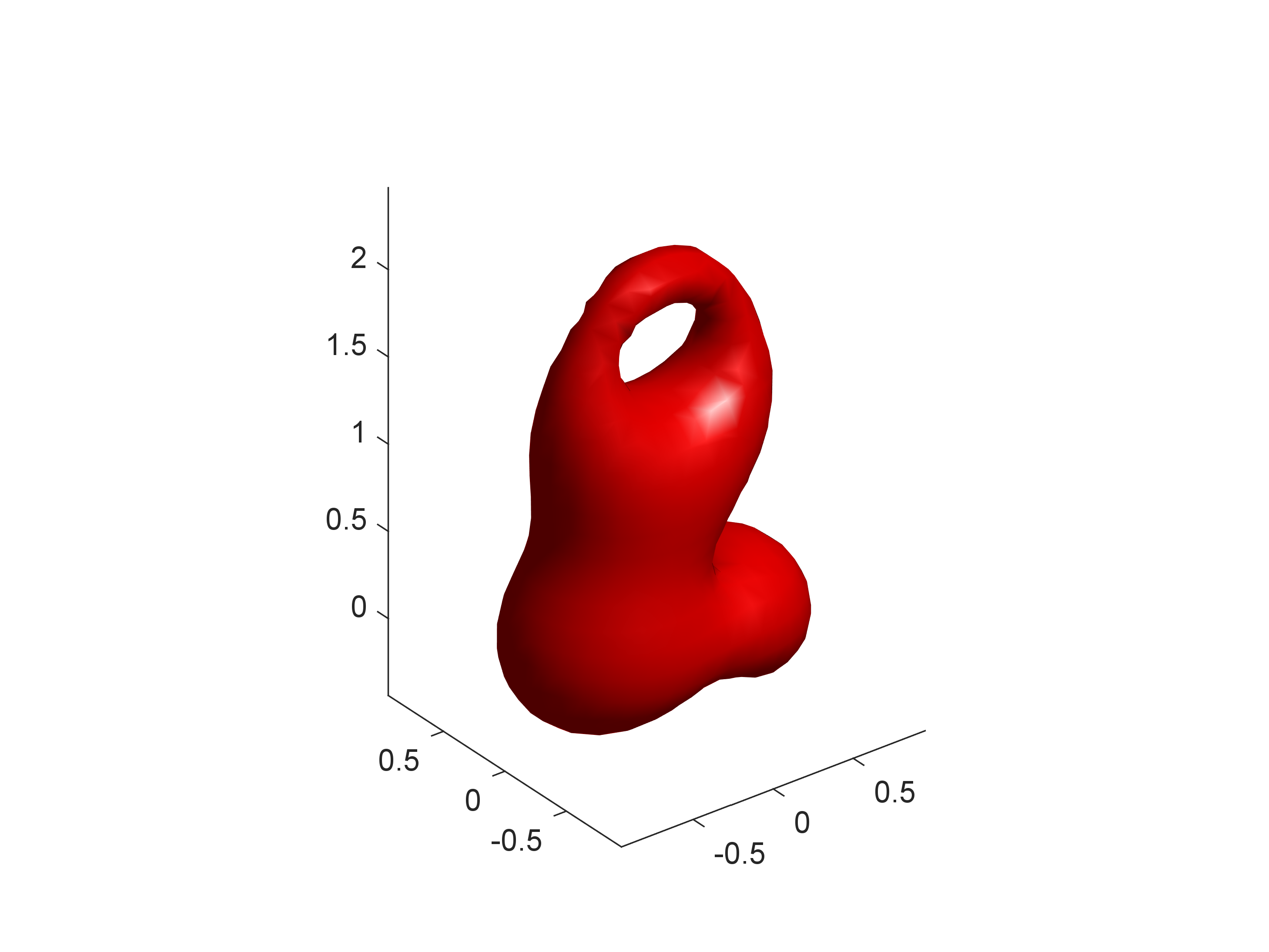}
         \caption{}
         \label{subfig:tau0-mon t=0.725,k2=0.5}
     \end{subfigure}
    \caption{Yang--Mills action density isosurface plots of a monopole-like family within the subfamily of $\tau=0$ rotating calorons. The plots in each column are made at fixed time slices $t=0$, $t=0.225$, $t=0.5$, and $t=0.725$ respectively. Plots (a)--(d), (e)--(h), (i)--(l), and (m)--(p) show the configurations with $\kappa_2=0.99,$ $0.9,$ $0.7$, and $0.5$ respectively. $\kappa_1$ is fixed by the condition $K(\kappa_1)-K(\kappa_2)/\rho=1/30$ giving $\kappa_1=0.30$, $0.58$, $0.72$, and $0.77$ respectively to $2$ decimal places. The pictures are qualitatively the same for all $\kappa_2\leq0.5$, and therefore omitted.}
    \label{fig:monopole-like-family-tau0}
\end{figure}

The instanton-like regime may be visualised by constructing configurations with small scale. An example is plotted in Figure \ref{fig:instanton-like-config-tau0} with $(\kappa_1,\kappa_2)=(0.99,0.5)$ yielding $(D_1,D_2)=(0.08,0.09)$ and scale $\lambda=0.43$. This configuration has no distinctive constituent monopole structure, rather the instantons appear as spherical lumps of maximal action density at isolated spacetime locations. 
Numerically the locations are found to be $(t,x_1,x_2,x_3)\approx(0.25,0.04,0,0)$ and $(t,x_1,x_2,x_3)\approx(0.75,-0.04,0,0)$. We remark that the $x_1$-coordinates are approximately $\pm\frac{D_1}{2}$.
\begin{figure}
    \centering
     \begin{subfigure}[c]{0.3\textwidth}
         \centering
         \includegraphics[trim= 70 15 70 40,clip,width=\textwidth]{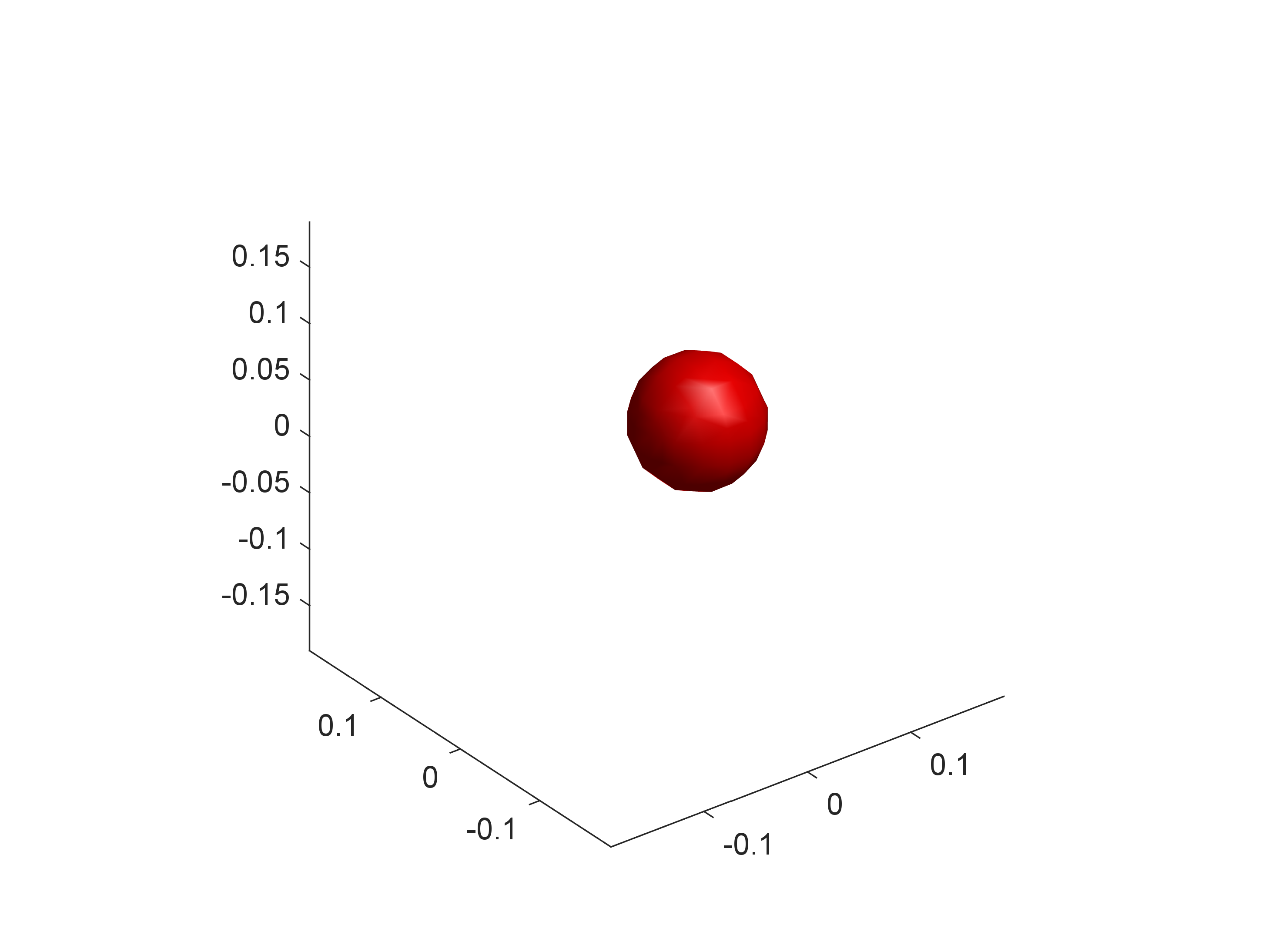}
         \caption{}
         \label{subfig:tau0-inst t=0.25}
     \end{subfigure}
     \begin{subfigure}[c]{0.3\textwidth}
         \centering
         \includegraphics[trim= 70 15 70 40,clip,width=\textwidth]{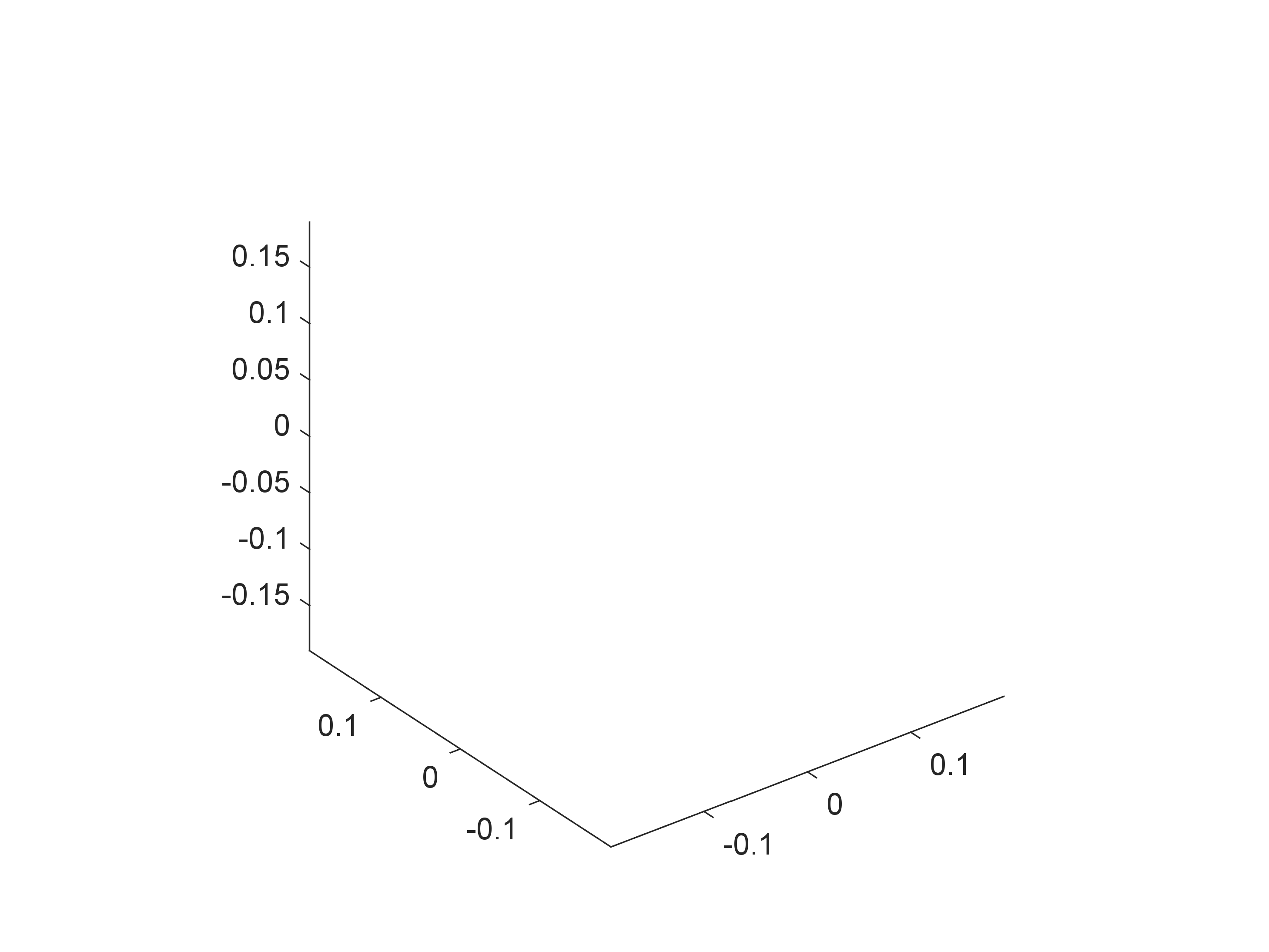}
         \caption{}
         \label{subfig:tau0-inst t=0.5}
     \end{subfigure}
     \begin{subfigure}[c]{0.3\textwidth}
         \centering
         \includegraphics[trim= 70 15 70 40,clip,width=\textwidth]{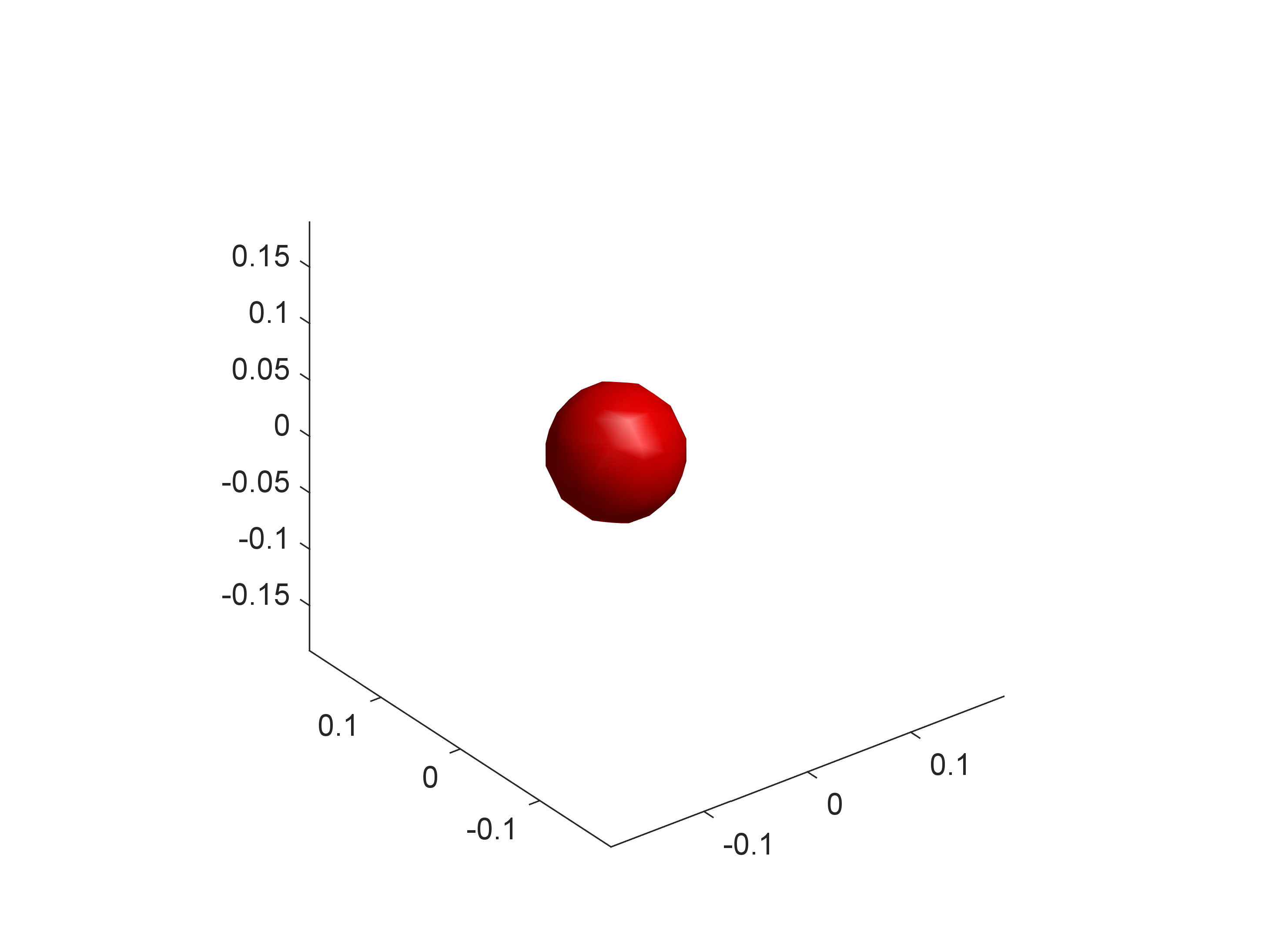}
         \caption{}
         \label{subfig:tau0-inst t=0.75}
     \end{subfigure}
    \caption{Yang--Mills action density isosurface plots of an instanton-like configuration within the subfamily of $\tau=0$ rotating calorons. Plots (a)--(c) are made at fixed time slices $t=0.25$, $t=0.5$, and $t=0.75$, respectively. The configuration has parameters $(\kappa_1,\kappa_2)=(0.99,0.5)$, giving scale $\lambda=0.43$.}
    \label{fig:instanton-like-config-tau0}
\end{figure}

\subsection{The case \texorpdfstring{$\kappa_1=1$}{kappa1=1}}
When $\kappa_1=1$, the equation \eqref{match_Ds-ks} immediately gives \begin{align}\label{D1-kappa1=1}
D_1=D_2(\kappa_2'\scn_{\kappa_2}(D_2\mu)+\kappa_2'\nc_{\kappa_2}(D_2\mu)).
\end{align}
Moreover, when $\kappa_1=1$ half of the data \eqref{sol-deformed} is constant.\footnote{Note that this is not the case for the caloron Nahm data in the periodic gauge \eqref{caloron-Nahm-from-delay-n=2} due to the non-constant transformation $g$.}

Except in the degenerate case $\tau=0$ already considered, as explained in the proof of Theorem \ref{thm.gen-sol}, the second equation \eqref{match-triangle} may be used to determine $\psi_1$ and $\psi_3$, and provides a constraint on $\tau$ given by \eqref{tau-constraint}. With $D_1$ fixed via \eqref{D1-kappa1=1} this constraint simplifies to
\begin{align}\label{tau-constraint-kappa1=1}
    \kappa_2'{\rm sd}_{\kappa_2}(D_2\mu)\leq\sin2\tau\leq\kappa_2'\nd_{\kappa_2}(D_2\mu).
\end{align}
The phases $(\e^{\ii\psi_1},\e^{\ii\psi_3})$ are then found via \eqref{match-triangle} up to a sign choice which we shall fix here by setting $\sin\psi_1\geq0$. As in the $\tau=0$ case, some parameters may be disregarded, namely $\xi$ (as an overall translation), and $\psi_+,\sigma_+,\sigma_-$ (as spatial, global, or gauge phases). This leaves three physical parameters to vary: $\kappa_2$, $D_2$, and $\tau$ (constrained \eqref{tau-constraint-kappa1=1}).

We visualise some of these calorons using our numerical Nahm transform for $\kappa_2\in(0,1)$ and fixing $D_2=0.68K(\kappa_2)/\mu$.
We solve the constraint \eqref{tau-constraint-kappa1=1} by fixing $\tau=0.1\tau_{-}+0.9\tau_{+}$, where $\tau_\pm$ are the minimum and maximum values for $\tau$ from the constraint \eqref{tau-constraint-kappa1=1}.
These choices lead to calorons that are monopole-like.
Plots for this family taken as action density isosurfaces for the fixed time slices $t=0,0.225,0.5$ and $0.725$ are given in Figure \ref{fig: kappa1=1 plots}. Again, the calorons are made up of two constituent 2-monopoles.
These images show that one of the constituents is located on the $x_3$-axis, and the other constituent is separated into two 1-monopoles on either side of the central constituent (approximately aligned with the $x_2$-axis).  As $\kappa_2$ varies the central constituent changes its shape but the outer constituents remain fixed.  When $\kappa_2\approx 1$ the central constituent is a pair of 1-monopoles separated on the $x_3$-axis, while when $\kappa_2\approx0$ it is a toroidal 2-monopole.
These schematic qualitative properties can also be deduced from the Nahm data, which we discuss below in Section \ref{sec-parameters}.
\begin{figure}
    \centering
    \begin{subfigure}[c]{0.24\textwidth}
         \centering
         \includegraphics[trim= 70 15 70 40,clip,width=\textwidth]{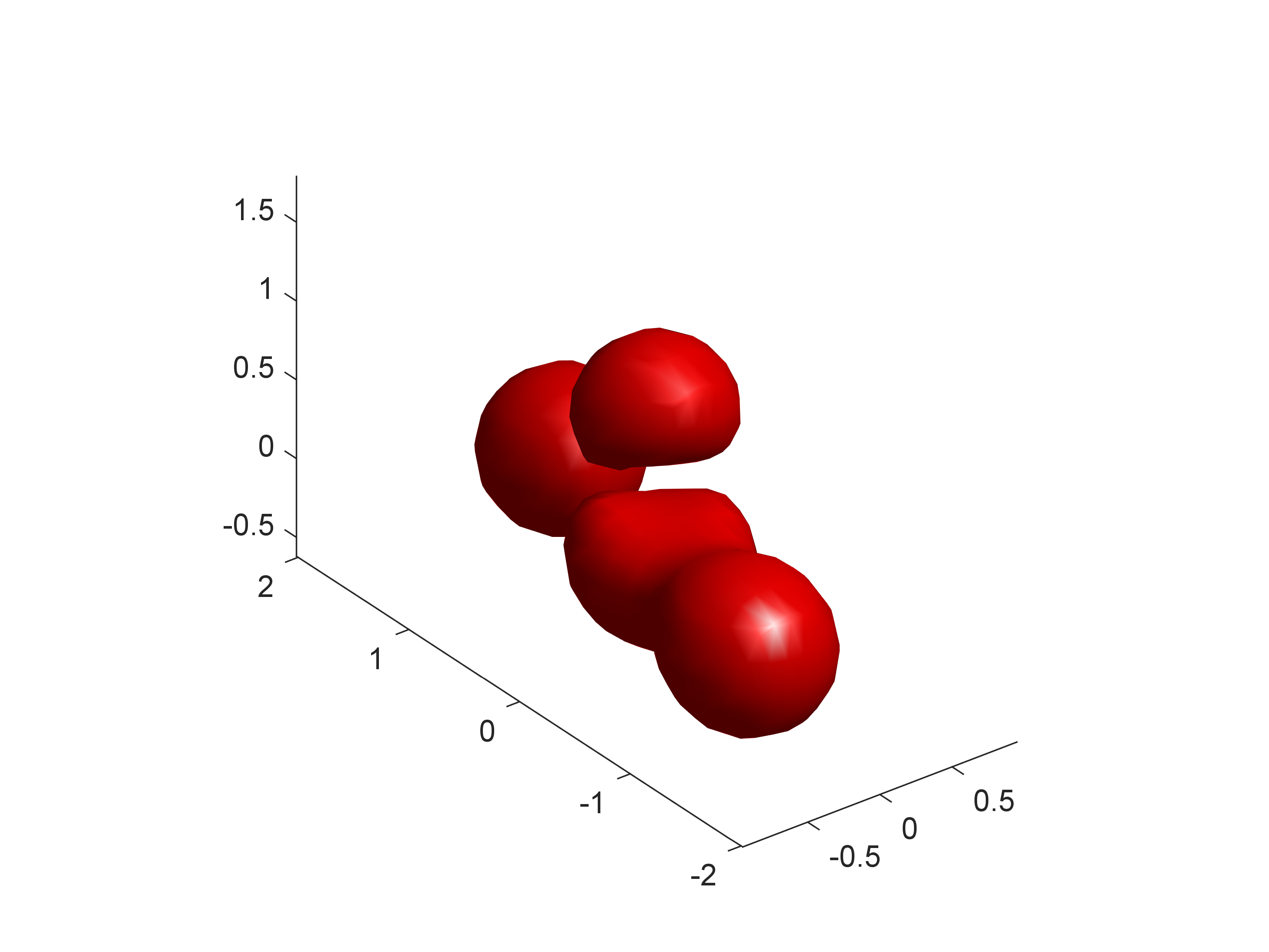}
         \caption{}
         \label{subfig:kappa1_k2=0.9t=0}
     \end{subfigure}
          \begin{subfigure}[c]{0.24\textwidth}
         \centering
         \includegraphics[trim= 70 15 70 40,clip,width=\textwidth]{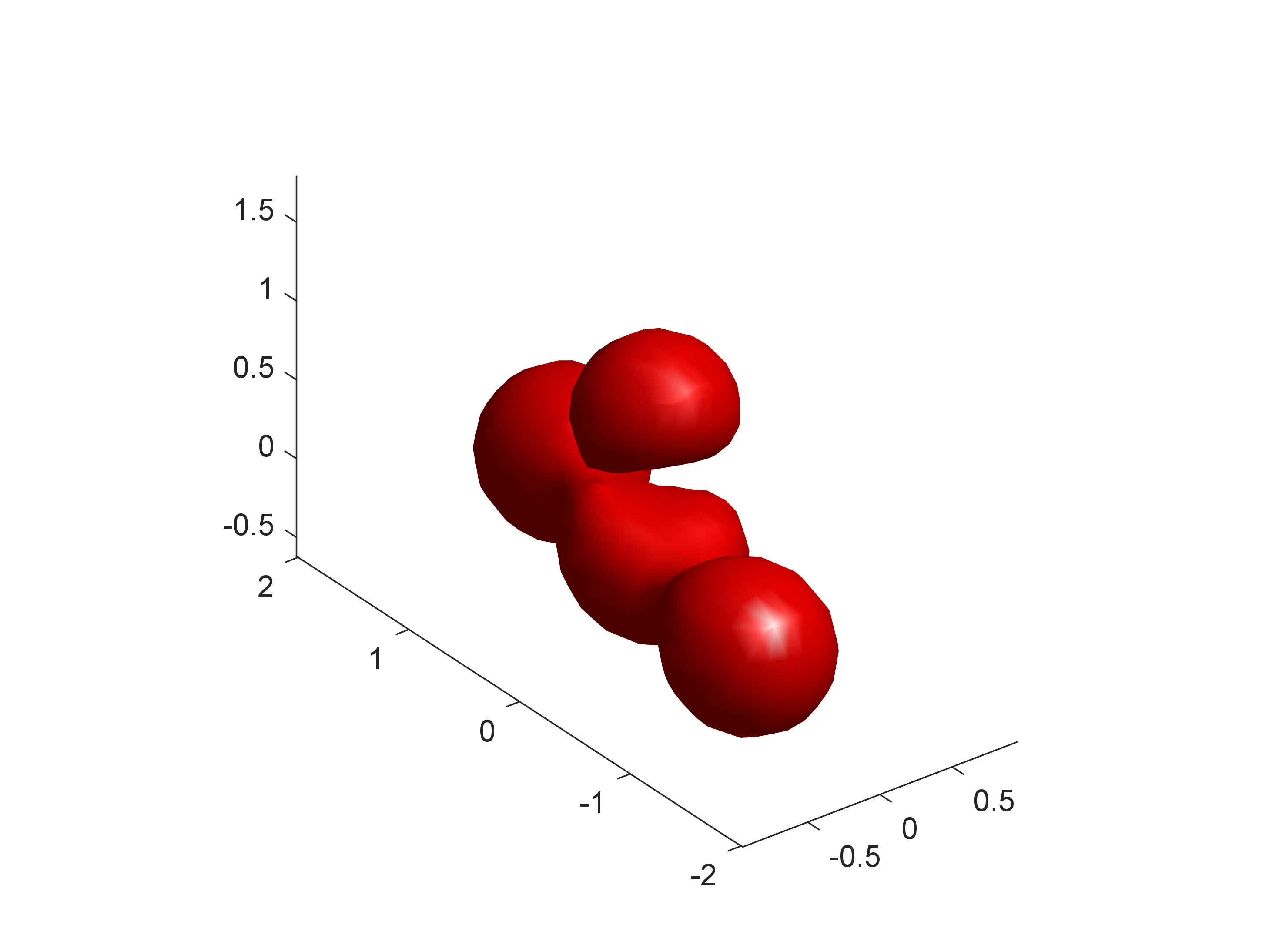}
         \caption{}
         \label{subfig:kappa1_k2=0.9t=0.225}
     \end{subfigure}
          \begin{subfigure}[c]{0.24\textwidth}
         \centering
         \includegraphics[trim= 70 15 70 40,clip,width=\textwidth]{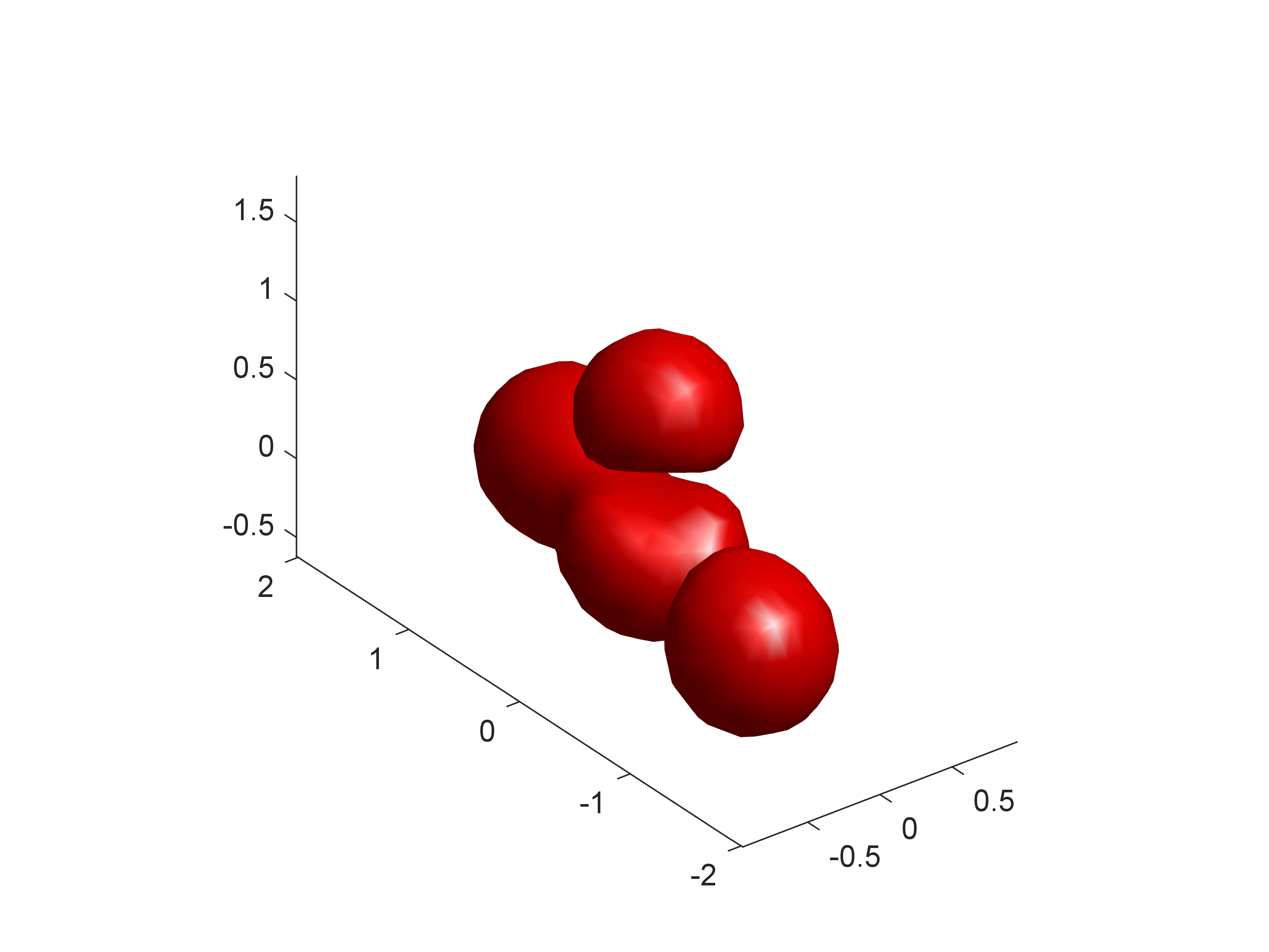}
         \caption{}
         \label{subfig:kappa1_k2=0.9t=0.5}
     \end{subfigure}
          \begin{subfigure}[c]{0.24\textwidth}
         \centering
         \includegraphics[trim= 70 15 70 40,clip,width=\textwidth]{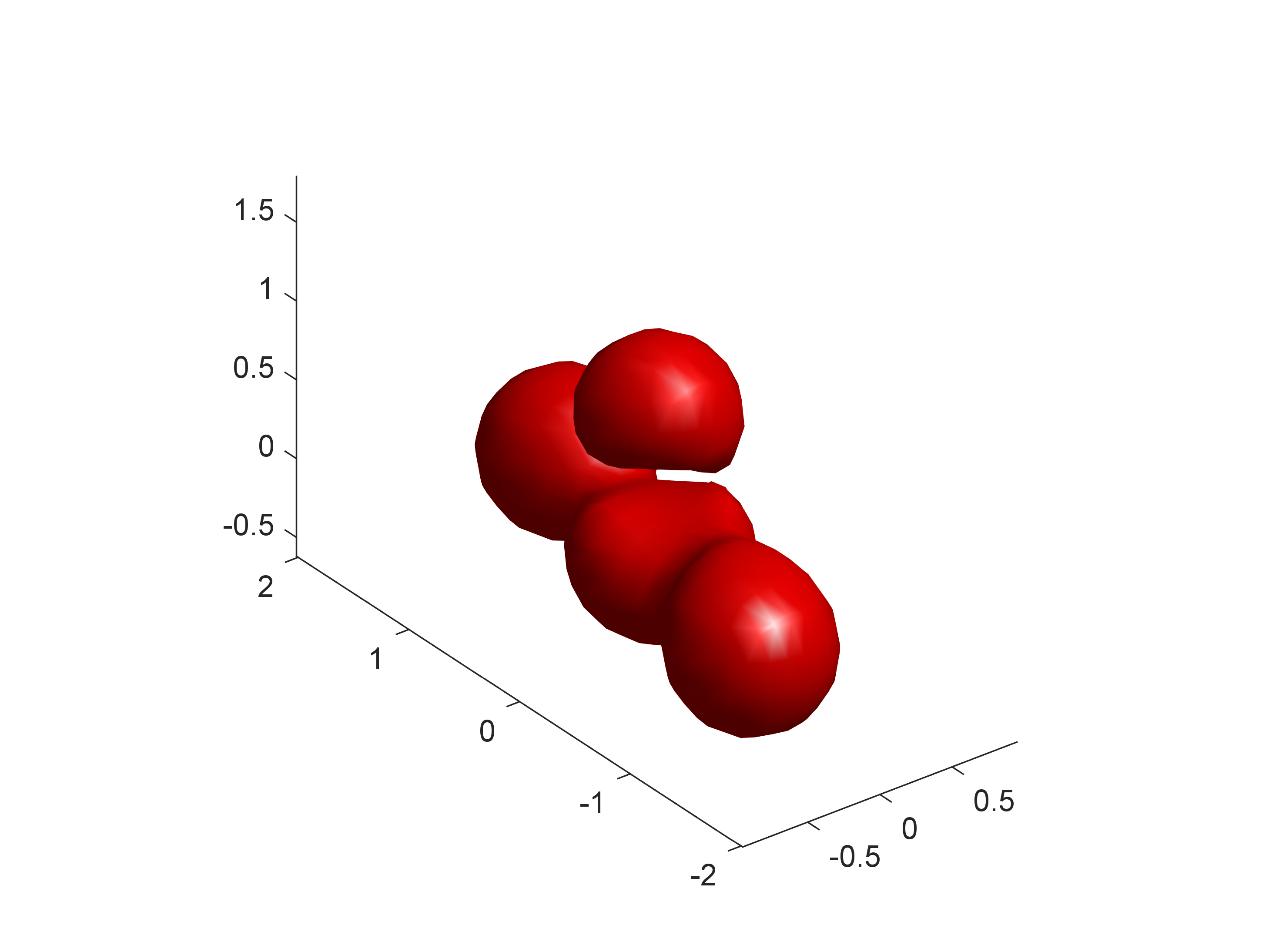}
         \caption{}
         \label{subfig:kappa1_k2=0.9t=0.725}
     \end{subfigure}
    \begin{subfigure}[c]{0.24\textwidth}
         \centering
         \includegraphics[trim= 70 15 70 40,clip,width=\textwidth]{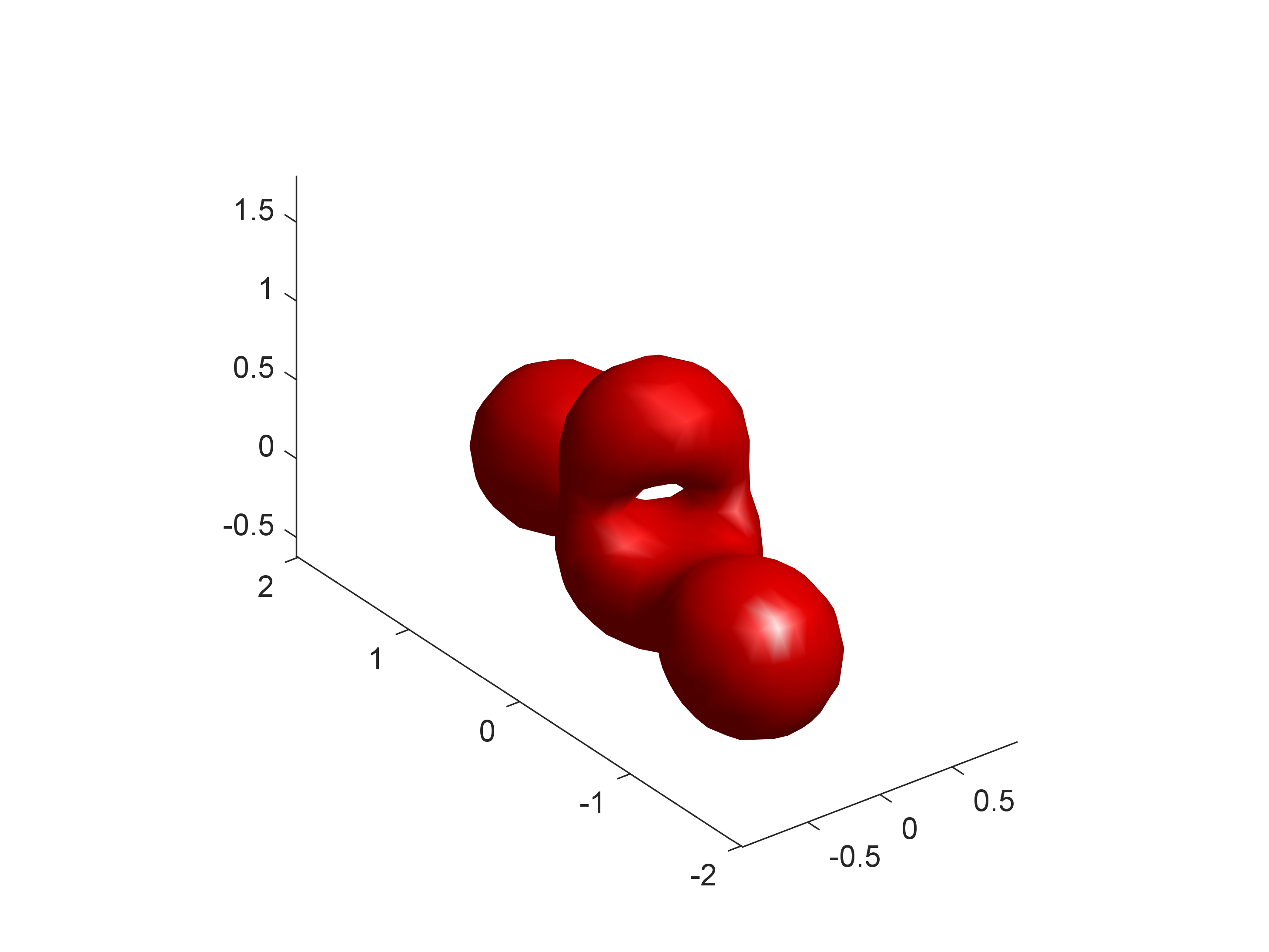}
         \caption{}
         \label{subfig:kappa1_k2=0.8t=0}
     \end{subfigure}
          \begin{subfigure}[c]{0.24\textwidth}
         \centering
         \includegraphics[trim= 70 15 70 40,clip,width=\textwidth]{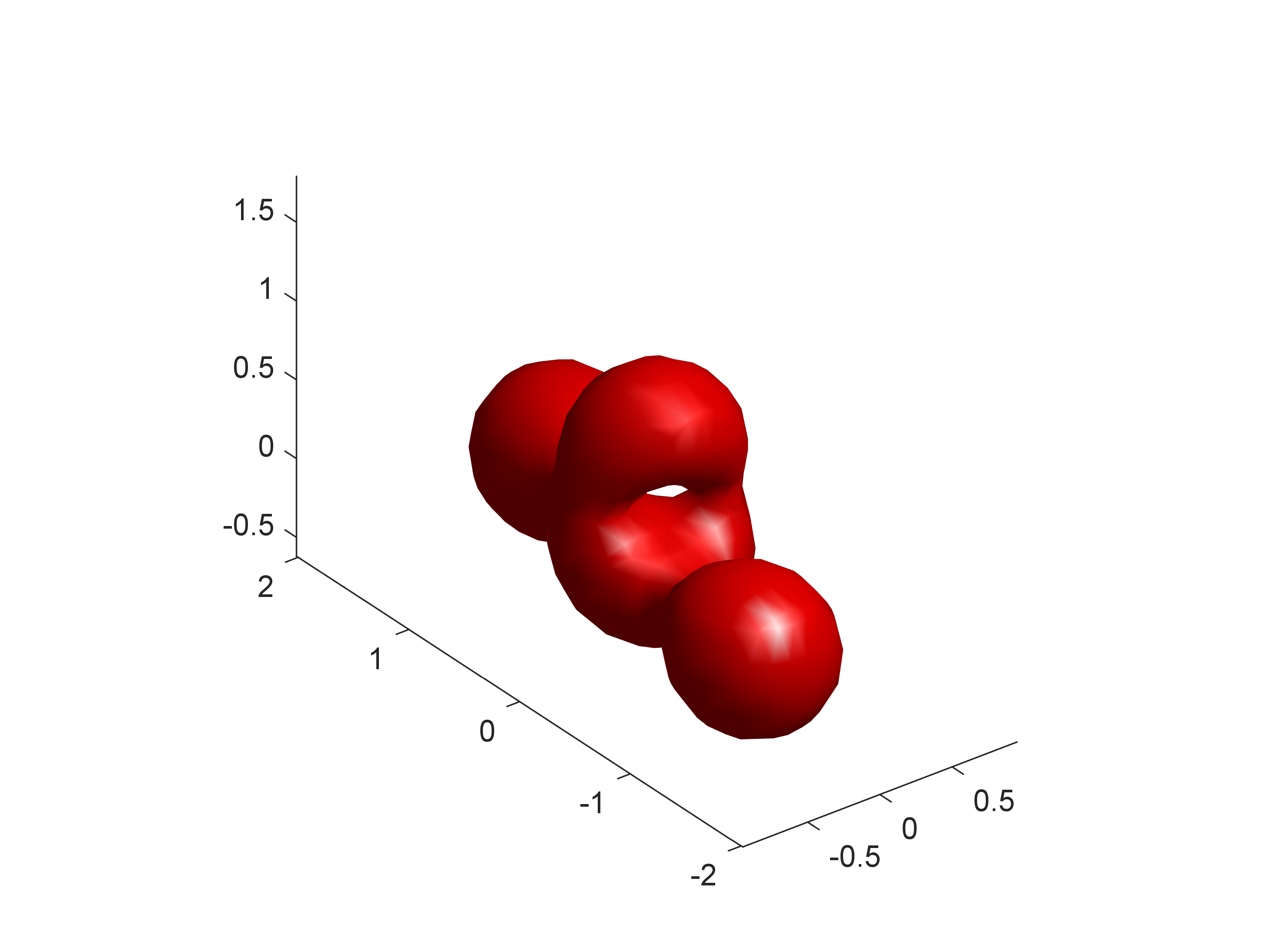}
         \caption{}
         \label{subfig:kappa1_k2=0.8t=0.225}
     \end{subfigure}
          \begin{subfigure}[c]{0.24\textwidth}
         \centering
         \includegraphics[trim= 70 15 70 40,clip,width=\textwidth]{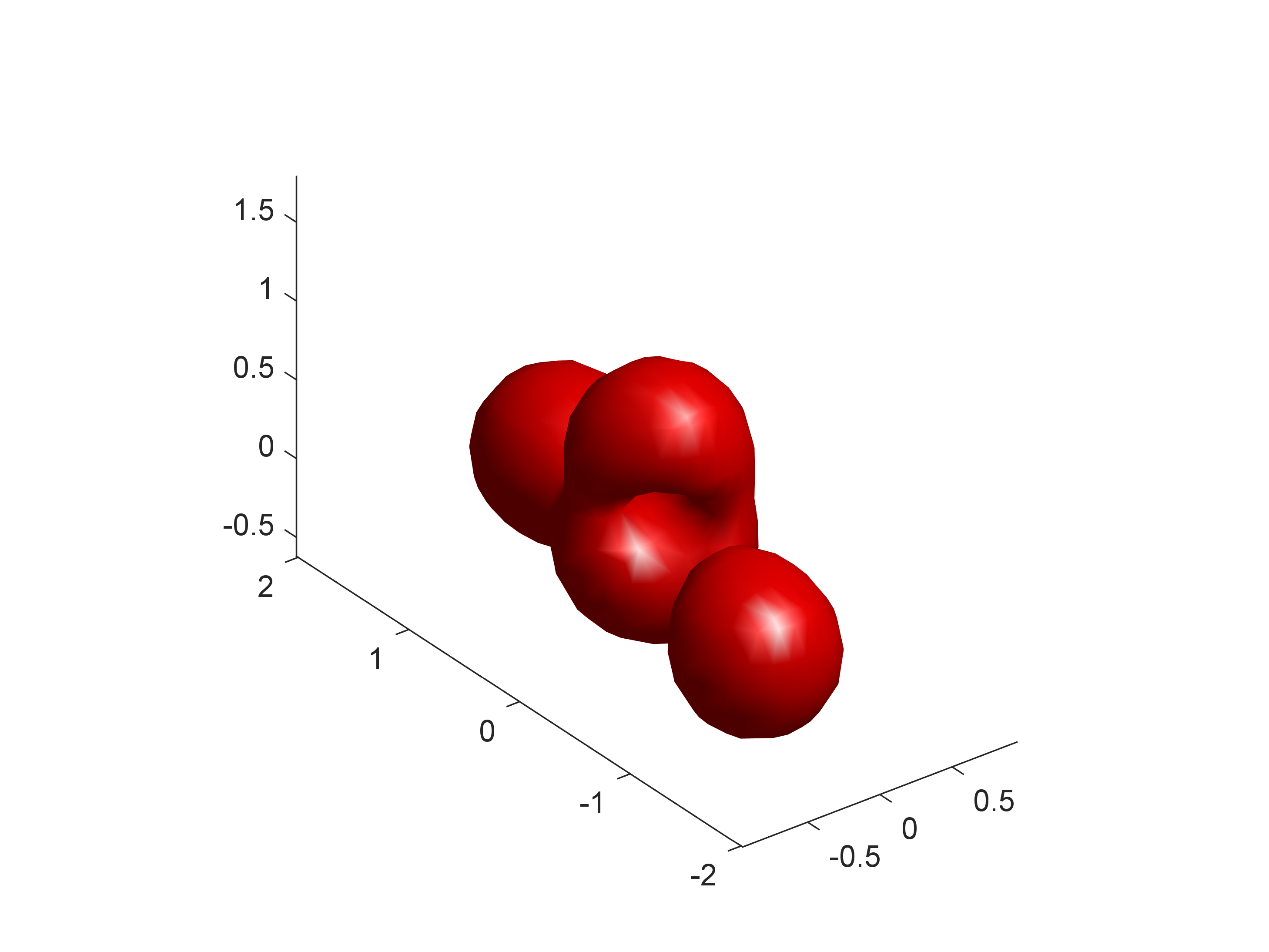}
         \caption{}
         \label{subfig:kappa1_k2=0.8t=0.5}
     \end{subfigure}
          \begin{subfigure}[c]{0.24\textwidth}
         \centering
         \includegraphics[trim= 70 15 70 40,clip,width=\textwidth]{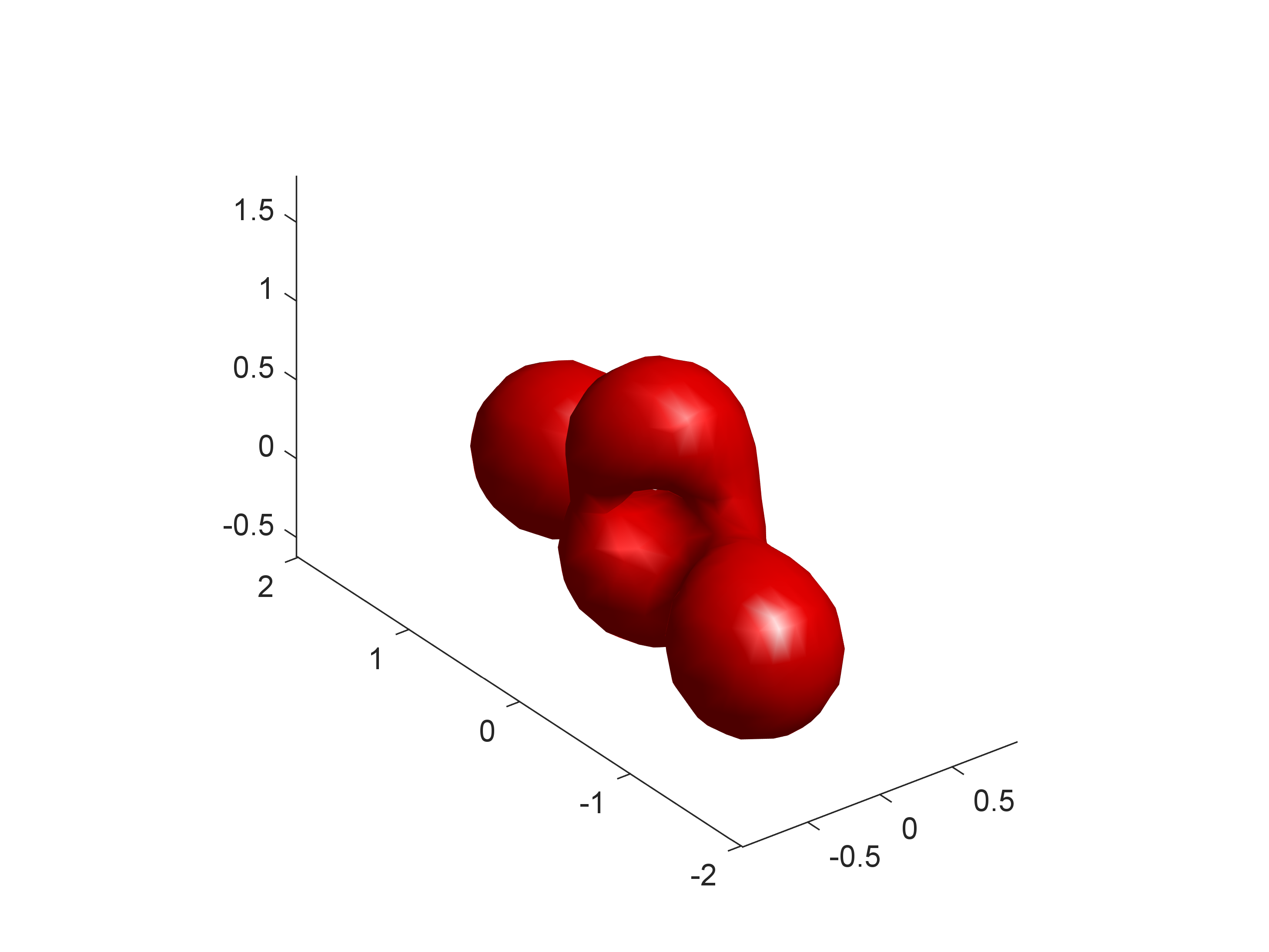}
         \caption{}
         \label{subfig:kappa1_k2=0.8t=0.725}
     \end{subfigure}
    \begin{subfigure}[c]{0.24\textwidth}
         \centering
         \includegraphics[trim= 70 15 70 40,clip,width=\textwidth]{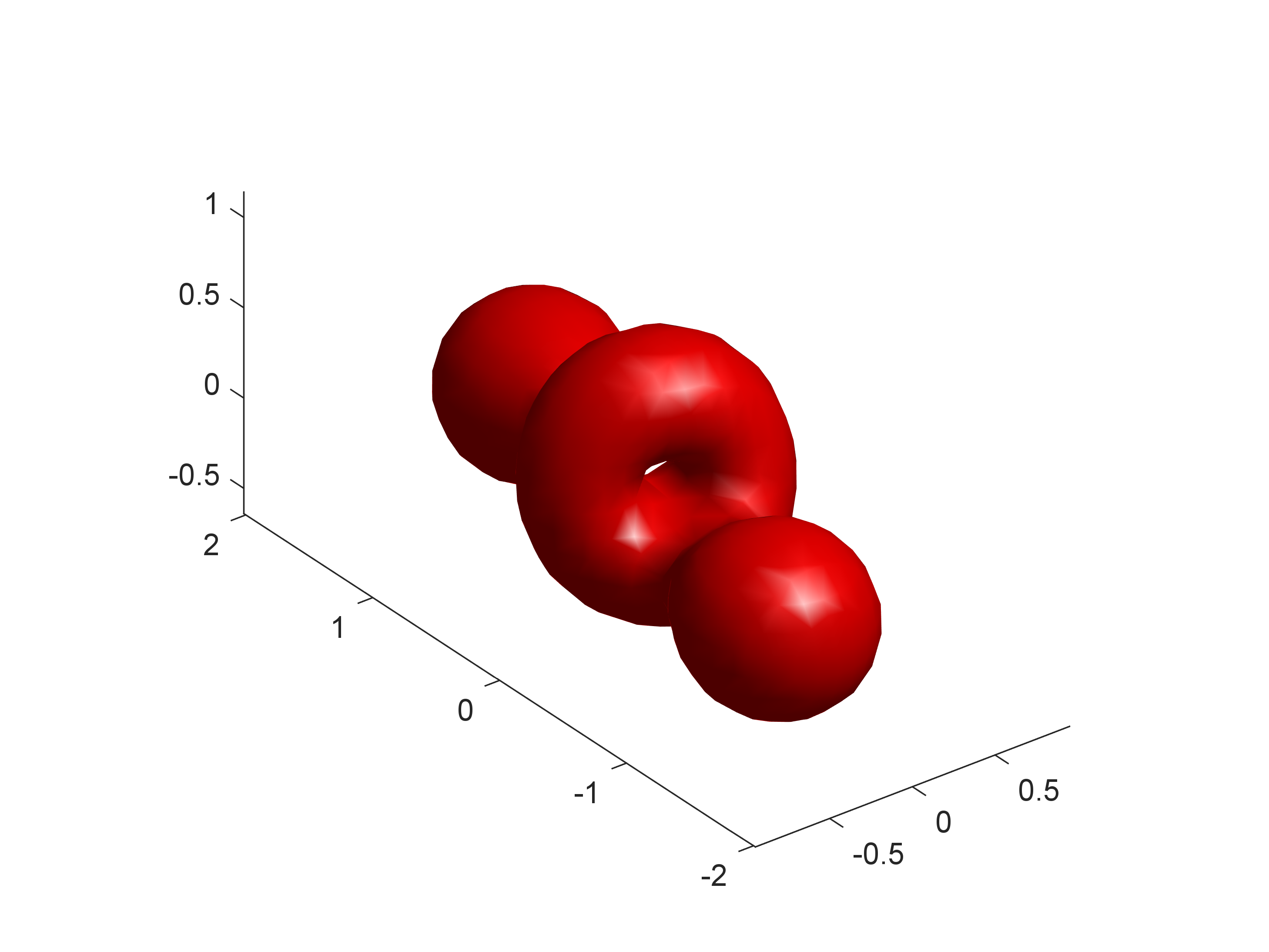}
         \caption{}
         \label{subfig:kappa1-mon-ring t=0}
     \end{subfigure}
          \begin{subfigure}[c]{0.24\textwidth}
         \centering
         \includegraphics[trim= 70 15 70 40,clip,width=\textwidth]{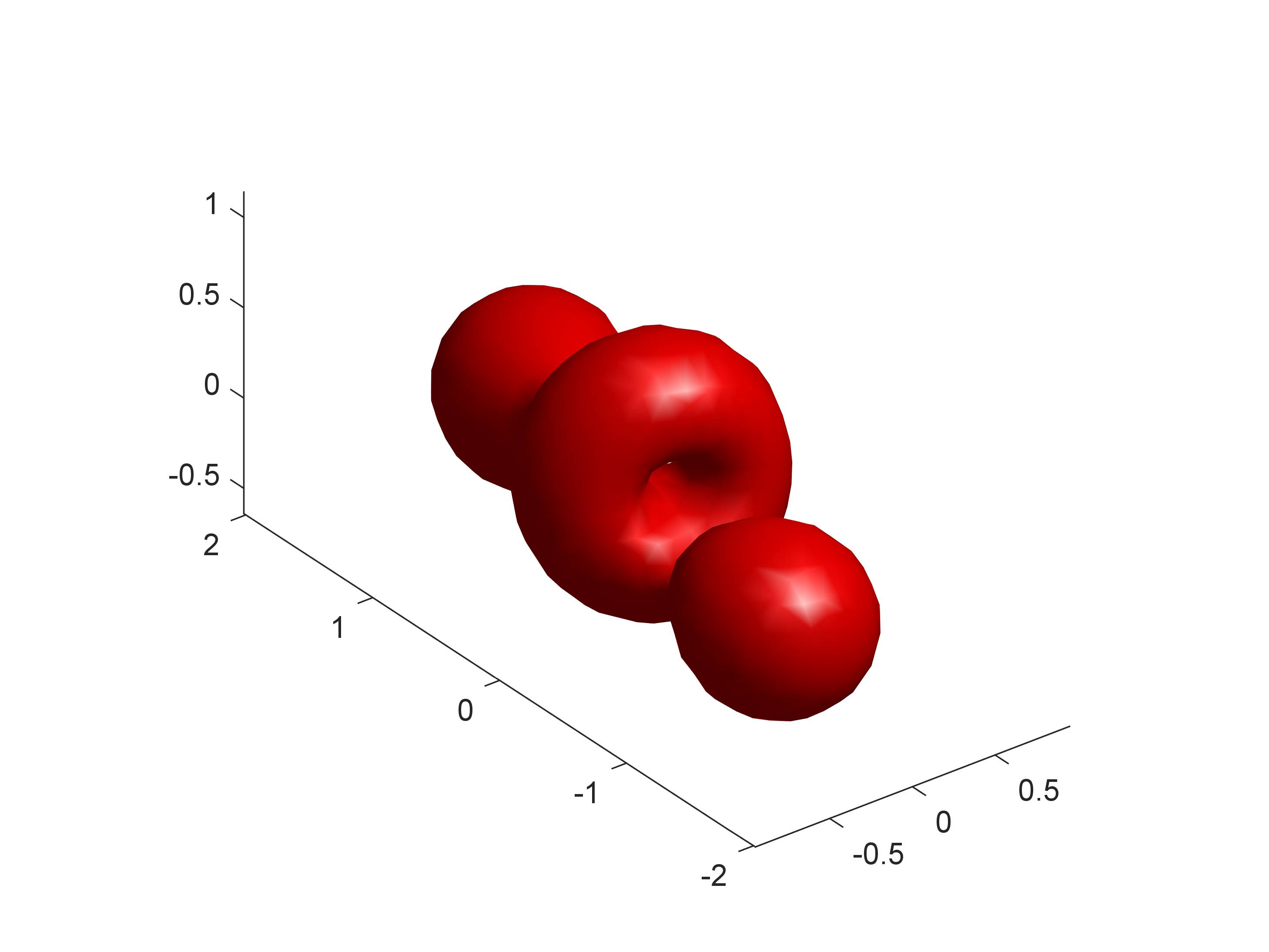}
         \caption{}
         \label{subfig:kappa1-mon-ring t=0.225}
     \end{subfigure}
         \begin{subfigure}[c]{0.24\textwidth}
         \centering
         \includegraphics[trim= 70 15 70 40,clip,width=\textwidth]{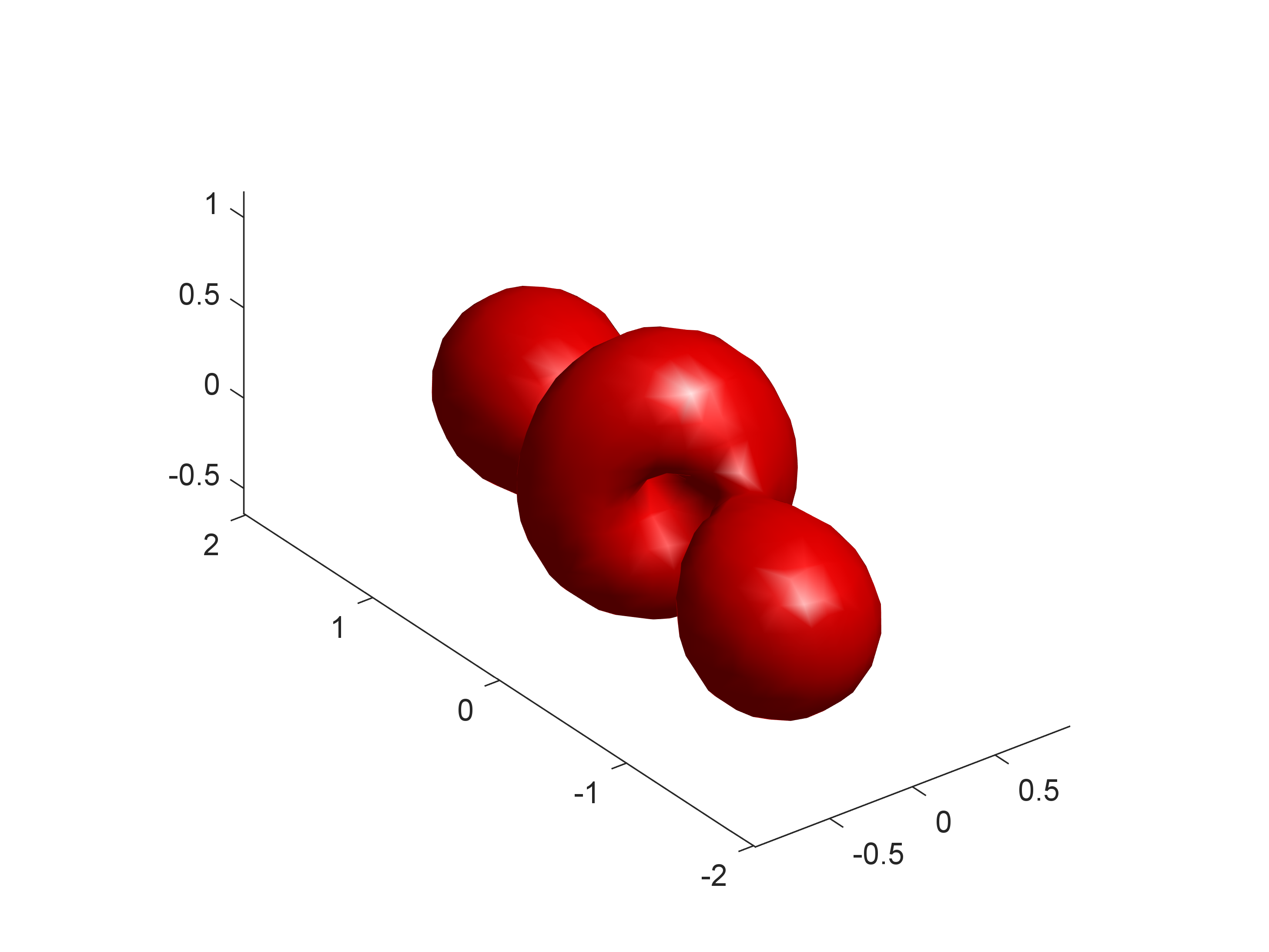}
         \caption{}
         \label{subfig:kappa1-mon-ring t=0.5}
     \end{subfigure}
          \begin{subfigure}[c]{0.24\textwidth}
         \centering
         \includegraphics[trim= 70 15 70 40,clip,width=\textwidth]{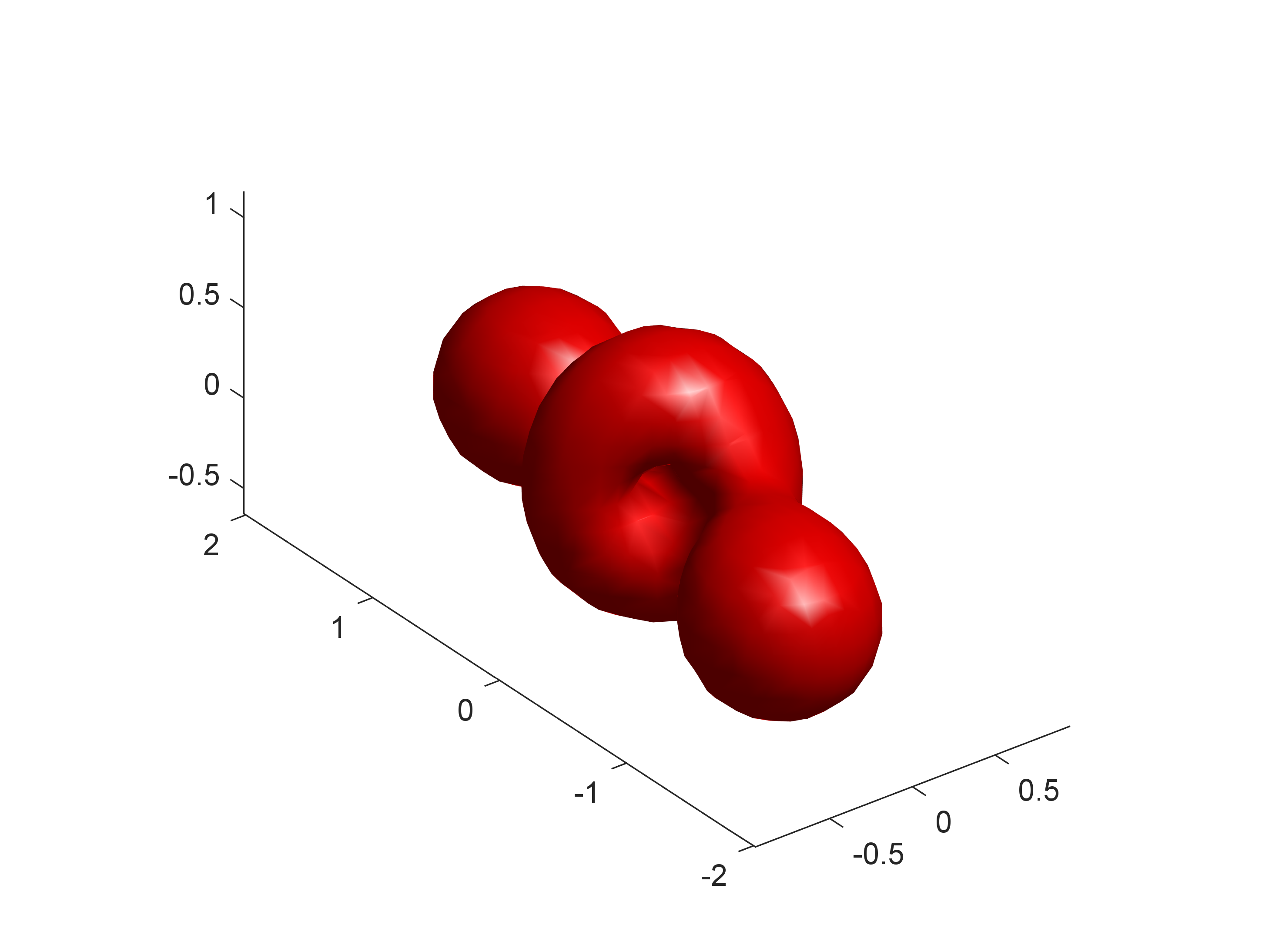}
         \caption{}
         \label{subfig:kappa1-mon-ring t=0.725}
     \end{subfigure}
     \caption{Yang--Mills action density isosurfaces of examples in the $\kappa_1=1$ family of rotating calorons at four fixed time slices (at $t=0,0.225,0.5,0.725$ moving left-to-right). Plots (a)-(d), (e)-(h) and (i)-(l) show the configurations with $\kappa_2=0.9,$ $0.8$, and $0.5$. The other free parameters $(D_2,\tau)$ are determined by $D_2=0.68K(\kappa_2)/\mu$ and $\tau=0.1\tau_-+0.9\tau_+$, with $\tau_\pm$ the bounds of the constraint \eqref{tau-constraint-kappa1=1}. These values give $(D_2,\tau)\approx(0.99,\tfrac{2\pi}{13}),$ $(0.86,\tfrac{14\pi}{81})$, and $(0.73,\tfrac{6\pi}{29})$. The plots are qualitatively similar for all $\kappa_2\leq0.5$. All plots are produced at the same action isovalue $\EE=20.5$.}
     \label{fig: kappa1=1 plots}
\end{figure}
\subsection{Constituent monopole locations}\label{sec-parameters}
The Nahm data \eqref{sol-deformed}-\eqref{U-W-param-sol} correspond via \eqref{caloron-Nahm-from-delay-n=2}-\eqref{caloron-match-n=2} and the Nahm transform to charge $2$ calorons. $\SU(2)$ calorons can be interpreted as consisting of constituent monopoles of opposite magnetic charge, so for charge $2$ calorons there are two constituent monopoles of equal and opposite magnetic charge $2$.

To understand the constituent monopole structure we analyse the corresponding spectral curves. These are defined as the complex level curves
\begin{align}
    \det(\eta+T_p(\zeta))=0,\quad p=1,2,
\end{align}
where
\begin{align}
    T_p(\zeta)=\wh{\beta}_p+(\wh{\alpha}_p+\wh{\alpha}^\dagger_p)\zeta-\wh{\beta}_p^\dagger\,\zeta^2.
\end{align}
For the caloron Nahm data \eqref{caloron-alpha-beta} defined by the solution \eqref{sol-deformed} of the delayed equations these curves are
\begin{align}
    \left(\eta+\tfrac{1}{2}(\lambda^2\cos2\tau+4\Re(\xi))\zeta\right)^2-\frac{D_2^2}{4}\kappa_2'^2(\e^{2\ii\psi_3}+\e^{-2\ii\psi_3}\zeta^4)-\frac{D_2^2}{2}(1+\kappa_2^2)\zeta^2&=0\label{spec-curve1},\\
    (\eta+2\Re(\xi)\zeta)^2+\frac{D_1^2}{4}\kappa_1^2(\e^{\ii(\psi_1+\psi_3)}+\e^{-\ii(\psi_1+\psi_3)}\zeta^4)
    +\frac{D_1^2}{2}(1+\kappa_1'^2)\zeta^2&=0.\label{spec-curve2}
\end{align}
The constituent monopoles can be interpreted as the monopoles arising from these spectral curves. 

The spectral curves help us to identify locations of constituent monopoles. This is understood via the twistor transform, where the spatial $\R^3$ coordinates $(x_1,x_2,x_3)$ are related to the twistor space coordinates $(\zeta,\eta)$ via the constraint
\begin{align}
    \eta+x_1-\ii x_2+2 x_3\,\zeta-(x_1+\ii x_2)\zeta^2=0.
\end{align}
When a $2$-monopole is well-separated, the spectral curve approximately factorises as a product
\begin{align*}
    (\eta+x_1-\ii x_2+2 x_3\,\zeta-(x_1+\ii x_2)\zeta^2)(\eta+x_1'-\ii x_2'+2 x_3'\,\zeta-(x_1'+\ii x_2')\zeta^2)=0
\end{align*}
and the two monopole locations of the constituent corresponding to the given curve are identified as the coordinates $(x_1,x_2,x_3)$ and $(x_1',x_2',x_3')$. There are also highly symmetric configurations where the locations coincide. The simplest of these possible for degree $2$ curves takes the form
\begin{align}\label{axial-curve}
    \eta^2+C\zeta^2=0,
\end{align}
where $C\in\mathbb{R}$; in order to describe a BPS monopole, up to a phase this constant needs to be $C=\tfrac{\pi^2}{4}$ \cite{Ward1981}, but for caloron curves this is not required. Rotations in $\R^3$ act on spectral curves via $\SU(2)$ M\"obius transformations
\begin{align}
    (\zeta,\eta)\mapsto\left(\frac{a\zeta+b}{-\ol{b}\zeta+\ol{a}},\frac{\eta}{(-\ol{b}\zeta+\ol{a})^2}\right),
\end{align}
with $a=\cos\tfrac{\varphi}{2}-\ii n_3\sin\tfrac{\varphi}{2}$ and $b=-\sin\tfrac{\varphi}{2}(n_2+\ii n_1)$ describing a rotation by angle $\varphi$ around the unit axis $(n_1,n_2,n_3)$. With this, we see that the curve \eqref{axial-curve} has axial symmetry around the $x_3$-axis. For later purposes it is worth also highlighting the curve of the form
\begin{align}\label{axial2}
    \eta^2-C(1+\zeta^2)^2=0,
\end{align}
which is axially-symmetric around the $x_2$-axis.

\subsubsection*{$\tau=0$}
For the $\tau=0$ family considered above, the spectral curves \eqref{spec-curve1}-\eqref{spec-curve2} reduce to
\begin{align}
\begin{aligned}
    \left(\eta+\tfrac{1}{2}\lambda^2\zeta\right)^2-\frac{{D_1^2}}{4}\kappa_1^2(1+
    \zeta^4)-\frac{\kappa_1^2D_1^2}{2\kappa_2'^2}(1+\kappa_2^2)\zeta^2&=0,\\
    \eta^2-\frac{D_1^2}{4}\kappa_1^2(1+\zeta^4)
    +\frac{D_1^2}{2}\left(1+\kappa_1'^2\right)\zeta^2&=0.
\end{aligned}
\end{align}
We will vary $\kappa_2$ and fix $\kappa_1$ so that $K(\kappa_1)- K(\kappa_2)\tfrac{\kappa_2'}{\kappa_1}$ is small and positive. When $\kappa_2\approx 1$ this constraint forces $\kappa_1\approx0$, so these are well-approximated by
\begin{align}
\begin{aligned}
    \left(\eta+(\tfrac{1}{2}\lambda^2-D_2)\zeta\right)\left(\eta+(\tfrac{1}{2}\lambda^2+D_2)\zeta\right)&=0,\\
    \eta^2+\frac{D_1^2}{2}\zeta^2&=0.
\end{aligned}
\end{align}
The first represents a constituent $2$-monopole with locations $(0,0,\tfrac{1}{4}\lambda^2\pm \tfrac{1}{2}D_2)$ and the second curve is of the form \eqref{axial-curve} which has axial symmetry around the $x_3$-axis. We see from \eqref{lam-constraint} that $D_2\leq\tfrac{\lambda^2}{2}$, and equality holds if and only if $\kappa_2=1$, corresponding to the constant solutions \eqref{const-sol}. The region $\kappa_2\approx 1$ is a deformation of the constant solutions \eqref{const-sol} which thus resembles an axially-symmetric $2$-monopole at the origin, underneath two separated $1$-monopoles on the positive $x_3$-axis, with the lowest converging to overlap with the torus at the origin when $\kappa_2=1$. This matches the pictures in Figures \ref{subfig:tau0-mon t=0,k2=0.99}-\ref{subfig:tau0-mon t=0.725,k2=0.99}.

When $\kappa_2\approx 0$ and $\kappa_1$ is close to $1$ and the spectral curves are approximated by
\begin{align}\label{approx-spec-curves-tau0-kappa2-0}
\begin{aligned}
    \left(\eta+\tfrac{1}{2}\lambda^2\zeta\right)^2-\frac{D_1^2}{4}(1+
    \zeta^2)^2&=0,\\
    \left(\eta-\tfrac{D_2}{2}(1-\zeta^2)\right)\left(\eta+\tfrac{D_2}{2}(1-\zeta^2)\right)&=0.
\end{aligned}
\end{align}
The first is the spectral curve \eqref{axial2} of a $2$-monopole axially-symmetric around the $x_2$-axis, with the replacement $\eta\mapsto\eta+\tfrac{\lambda^2}{2}\zeta^2$, which corresponds to a translation by $\frac{1}{4}\lambda^2$ in the $x_3$-direction. The second represents two $1$-monopoles with locations along the $x_1$-axis at $x_1\approx \pm\frac{D_2}{2}$. This matches the pictures in Figures \ref{subfig:tau0-mon t=0,k2=0.5}-\ref{subfig:tau0-mon t=0.725,k2=0.5}.
\subsubsection*{$\kappa_1=1$}
In the $\kappa_1=1$ case, note that the second spectral curve \eqref{spec-curve2} corresponding to the constant data factorises exactly as
\begin{multline}
    \left(\eta+\tfrac{D_1}{2}\e^{\tfrac{\ii}{2}(\psi_1+\psi_3+\pi)}-\tfrac{D_1}{2}\e^{-\tfrac{\ii}{2}(\psi_1+\psi_3+\pi)}\zeta^2\right)\\
    \times\left(\eta-\tfrac{D_1}{2}\e^{\tfrac{\ii}{2}(\psi_1+\psi_3+\pi)}+\tfrac{D_1}{2}\e^{-\tfrac{\ii}{2}(\psi_1+\psi_3+\pi)}\zeta^2\right)=0.\label{spec-curve2-kap1}
\end{multline}
This spectral curve represents a constituent $2$-monopole with locations $$\pm(\tfrac{D_1}{2}\sin(\tfrac{1}{2}(\psi_1+\psi_3)),\tfrac{D_1}{2}\cos(\tfrac{1}{2}(\psi_1+\psi_3)),0),$$
i.e., on antipodal points of a circle of radius $\tfrac{D_1}{2}$ in the $(x_1,x_2)$-plane with overall orientation is determined by the angle $\tfrac{1}{2}(\psi_1+\psi_3)$. In Figures \ref{subfig:kappa1_k2=0.9t=0}-\ref{subfig:kappa1_k2=0.9t=0.725}, \ref{subfig:kappa1_k2=0.8t=0}-\ref{subfig:kappa1_k2=0.8t=0.725}, and \ref{subfig:kappa1-mon-ring t=0}-\ref{subfig:kappa1-mon-ring t=0.725}, this angle is always very small, approximately given by $\tfrac{\pi}{15}$, $\tfrac{3\pi}{47}$, and $\tfrac{2\pi}{37}$ respectively, explaining clearly why the constituent pair lies near the $x_2$ axis.

The other constituent monopole has spectral curve
\begin{align}
    \left(\eta+\tfrac{1}{2}\lambda^2\cos2\tau\zeta\right)^2-\frac{D_2^2}{4}\kappa_2'^2(\e^{2\ii\psi_3}+\e^{-2\ii\psi_3}\zeta^4)-\frac{D_2^2}{2}(1+\kappa_2^2)\zeta^2&=0.
\end{align}
From this we see that when $\kappa_2\approx1$ the curve is approximated by
\begin{align}
    (\eta+(\tfrac{1}{2}\lambda^2\cos2\tau-D_2)\zeta)(\eta+(\tfrac{1}{2}\lambda^2\cos2\tau+D_2)\zeta)=0,
\end{align}
representing two separated $1$-monopoles located at $(0,0,\tfrac{1}{4}\lambda^2\cos2\tau\pm\tfrac{1}{2}D_2)$. Likewise from \eqref{D1-kappa1=1} we also have $D_1$ becomes small, so the other constituents get close together. This is a different deformation of the constant solution to the $\tau=0$ case, and visualised in Figures \ref{subfig:kappa1_k2=0.9t=0}-\ref{subfig:kappa1_k2=0.9t=0.725}. When $\kappa_2\approx0$ this curve instead takes the form
\begin{align}
    \left(\eta+\tfrac{1}{2}\lambda^2\cos2\tau\zeta\right)^2-\frac{D_2^2}{2}(\e^{\ii\psi_3}+\e^{-\ii\psi_3}\zeta^2)^2&=0,
\end{align}
which is a rotation by angle $\psi_3$ around the $x_3$-axis, followed by a translation in the $x_3$-direction, of the curve \eqref{axial2} of an axially-symmetric monopole. This is reflected in Figures \ref{subfig:kappa1-mon-ring t=0}-\ref{subfig:kappa1-mon-ring t=0.725}.
\section{Conclusion and further directions}\label{sec:concl}
We have developed a Nahm transform for rotating calorons and used it to construct charge 1 rotating calorons with nontrivial asymptotic holonomy numerically.  Our transform is complete for rational rotation angles.  But the problem of constructing rotating calorons with irrational rotation angle remains open.

Apart from the Nahm transform, another successful tool for constructing anti-self-dual gauge fields is the Kobayashi--Hitchin correspondence.  This constructs an anti-self-dual gauge field by taking a holomorphic vector bundle and solving the hermitian Yang--Mills equation.  This seems a natural approach to take for rotating calorons because, of the many complex structures on $\R^4$, only one is fixed by the glide rotation \eqref{twisted-gf}.  A Kobayashi--Hitchin correspondence for calorons has been established in \cite{CharbonneauHurtubise2008Rat.map}, where the corresponding holomorphic bundle lives on the compactification $\mathbb{CP}^1\times\mathbb{CP}^1$ of $\R^3\times S^1=\C\times\C^\ast$.  A similar construction might work for rotating calorons, because the quotient of $\R^4$ by the action \eqref{twisted-gf} is also biholomorphic to $\C\times\C^\ast$.  One could try to construct rotating calorons by identifying suitable holomorphic bundles on $\mathbb{CP}^1\times\mathbb{CP}^1$ and solving the hermitian Yang--Mills equation with suitable boundary conditions.  This might be more effective than the Nahm transform in the case of irrational rotation angles.

We would like to point out that the delayed Nahm equations are integrable, in the sense that they may be written as a Lax pair.  More precisely, equations \eqref{delay-eqs} with $U_\pm=V_\pm=0$, are equivalent to $[P(\zeta),Q(\zeta)]=0$, where $P$ and $Q$ are operators of the form
\begin{align}
    P(\zeta)=\frac{\d}{\d s}+\alpha-S_-\beta^\dagger\zeta,\quad Q(\zeta)=\beta S_++(\alpha+\alpha^\dagger)\zeta-S_-\beta^\dagger\zeta^2.
\end{align}
Here $S_\pm$ are the shift operators $(S_\pm v)(s)=v(s\pm\theta)$ defined earlier, and $\zeta\in\C$ is a spectral parameter.  Other examples of integrable delay-differential equations have been studied in \cite{delay1,delay2,delay3,delay4,delay5,delay6,delay7,delay8,delay9,delay10,delay11,delay12,delay13,delay14,delay15,delay16,delay17}.  The Nahm equations are an important integrable system that includes the Toda and Euler top equations as special cases \cite{ErcolaniSinha1989monopoles,braden2011,BradenCherkisQuinones2022,BradenDisneyHogg2023,BradenDisneyHogg2024}.  So it would be interesting to explore the integrability properties of the delayed Nahm equations.
\subsection*{Acknowledgements}
JC acknowledges support provided by the University of Leeds School of Mathematics Research Visitor Centre, where much of this work was completed.
\bibliographystyle{unsrt}
\bibliography{refs,rotatingQCD}
\end{document}